\documentclass[a4paper, amsfonts, amssymb, amsmath, reprint, showkeys, nofootinbib, twoside, floatfix,superscriptaddress, aps, pra]{revtex4-1}
\usepackage[english]{babel}
\usepackage{braket}
\usepackage{dsfont}
\usepackage{subcaption}
\usepackage{siunitx}
\usepackage{amsthm}
\usepackage{mathtools}
\usepackage{physics}
\usepackage{xcolor}
\usepackage{graphicx}
\usepackage[left=23mm,right=13mm,top=35mm,columnsep=15pt]{geometry} 
\usepackage{adjustbox}
\usepackage{placeins}
\usepackage[T1]{fontenc}
\usepackage{lipsum}
\usepackage{csquotes}
\usepackage{amsmath,amssymb}
\usepackage{graphicx}
\usepackage{tabularx,booktabs}
\usepackage{tikz}
\usepackage{float}
\usetikzlibrary{quantikz2}
\usepackage[utf8]{inputenc}
\usepackage[colorinlistoftodos, color=green!40, prependcaption]{todonotes}
\usepackage[pdftex, pdftitle={Article}, pdfauthor={Author}]{hyperref}
\usepackage[ruled,lined]{algorithm2e}
\usepackage{algpseudocode}
\usepackage{hyperref}

\usepackage{caption}

\newtheorem{lemma}{Lemma}
\newtheorem{corollary}{Corollary}

\theoremstyle{definition}

\usetikzlibrary{shapes.geometric, arrows}
\usetikzlibrary{automata,positioning}

\tikzstyle{startstop} = [rectangle, rounded corners, minimum width=3cm, minimum height=1cm, text centered, draw=black, fill=blue!20]
\tikzstyle{io} = [trapezium, trapezium left angle=70, trapezium right angle=110, minimum width=3cm, minimum height=1cm, text centered, draw=black, fill=blue!30]
\tikzstyle{process} = [rectangle, minimum width=3cm, minimum height=1cm, text centered, draw=black, fill=blue!20]
\tikzstyle{decision} = [diamond, minimum width=3cm, minimum height=1cm, text centered, draw=black, fill= green!30]
\tikzstyle{arrow} = [thick, ->, >=stealth]

\begin{document}
\title{A quantum double-bracket algorithm for imaginary-time evolution with exponentially shorter depth}
\author{Ioannis Kolotouros}
    \email{ioannis@phasecraft.io}
    \affiliation{Phasecraft Ltd., London, United Kingdom}

\author{Raul Garcia-Patron}
    \email{raulgp@phasecraft.io}
    \affiliation{Phasecraft Ltd., London, United Kingdom}

\date{\today}

\begin{abstract}
Imaginary-time evolution (ITE) is a widely used technique for preparing the ground state of a target Hamiltonian. Building on the recently proposed framework of Double-Bracket quantum ITE (DB-QITE), we introduce a probabilistic variant (PDBQITE) that achieves an exponential reduction in circuit depth. Our algorithm requires only a single ancilla qubit and a quantum circuit that consists of one controlled real-time evolution and one mid-circuit measurement per iteration. The trade-off is an exponential decay of the total success probability with the number of iterations, for which we derive an analytical lower bound that is independent of system size. Our numerical simulations show that the per-step success probability remains close to unity, making it realistic to execute many iterations, thus providing a substantial improvement over prior constructions. We show that PDBQITE outperforms the near-term probabilistic algorithm PITE for ground-state preparation of molecular Hamiltonians, and that a hybrid QAOA–PDBQITE scheme achieves higher approximation ratios than QAOA at the same circuit depth on random regular graphs for the MaxCut problem.

\end{abstract}

\maketitle

\section{Introduction}
Quantum computing is expected to deliver speedups in tasks ranging from optimization \cite{vcepaite2025quantum, boulebnane2024solving, montanaro2024quantum, he2026regularized} and cryptography \cite{shor1999polynomial, babbush2026securing}, to material science \cite{alam2025programmable, alam2025fermionic} and chemistry \cite{zhang2025quantum, low2025fast}. In recent years, advances on the hardware side have progressed at a high pace, with even small demonstrations of quantum error-correction \cite{google2025quantum}. However, there is still great need to understand the utility of these near-term (and small-scale) devices and whether they can offer advantages over their classical counterparts.

A prominent example of an algorithm that can run on near-term devices is imaginary-time evolution (ITE), which allows the preparation of ground states or Gibbs states. In ITE, the quantum state is evolved with the use of the non-unitary operator $e^{-H\tau}$, with $\tau\geq 0$ and $H$ being the problem Hamiltonian. This evolution is guaranteed to converge to the ground state at $\tau \to \infty$ as long as the initial state of the system has a non-zero overlap with the ground state. However, implementing this non-unitary operator on a quantum computer is a nontrivial task. 

Over the past few years, several different algorithms have been developed to run ITE on a quantum device. For example, in \cite{motta2020determining}, the authors developed a tomographic approach, which they name quantum ITE (QITE), in which they find a unitary operator solving a linear system, that allows them to approximate the normalized imaginary-time evolved state at the cost of an exponential scaling with the correlation length. Furthermore, other approaches use a parameterized quantum circuit, and employ McLachlan’s variational principle to update the parameters so that the generated quantum states follow the ITE trajectory \cite{mcardle2019variational}. However, these methods can either exhibit exponentially small gradients if the initial state is not chosen carefully, or they can incur impractically high costs \cite{gacon2024variational}, as they require the estimation of the quantum Fisher information matrix \cite{liu2020quantum}. Finally, there exist probabilistic methods \cite{kosugi2022imaginary, nishi2023optimal, nishi2025encoded, zhang2025quantum, xie2024probabilistic}, in which the non-unitary operator is encoded as a block of a larger unitary operator. However, due to the probabilistic nature of these methods, performing a large number of iterations can lead to a probability of success that decays exponentially.

Very recently, Gluza et al. \cite{gluza2026double} introduced a novel double-bracket QITE (DB-QITE) algorithm, in which unitary operators are constructed recursively so that they approximate the (normalized) ITE states. Their algorithm is motivated by the fact that ITE states are solutions to a specific class of differential equations known as double-bracket flows \cite{brockett1991dynamical}. Although their algorithm exhibits important cooling properties and can be executed \emph{coherently} in a quantum computer, its biggest bottleneck is that it requires quantum circuits with depth that grows exponentially with the number of ITE steps. Hence, their algorithm becomes impractical already when implementing a few steps on a near-term quantum device. 

In this paper, we resolve this depth bottleneck by introducing a probabilistic double-bracket QITE algorithm (PDBQITE) whose circuit depth grows linearly with the number of iterations, in exchange for a per-iteration success probability. Our algorithm turns each recursion step of Gluza et al. into a linear combination of unitaries (LCU) \cite{childs2012hamiltonian}, implemented with a single ancilla qubit and a single controlled real-time evolution of the target Hamiltonian $H$ per step. Crucially, we prove an analytical lower bound on the per-iteration success probability that is independent of the system size and close to unity. Hence, despite suffering from an exponential decay of success probability, our algorithm allows for a much larger number of iterations compared to prior approaches.

Furthermore, we show how to extend our algorithm to a second-order PDBQITE that allows for larger timesteps per iteration, at the cost of an additional ancilla qubit and an extra controlled real-time evolution; an approach that would have been computationally prohibitive following the original formulation of Gluza et al. \cite{gluza2026double}. We also show how one can achieve modest gains in the total probability of success by employing an amplitude amplification strategy inspired by \emph{robust oblivious amplitude amplification} \cite{berry2015simulating, berry2015hamiltonian} and \emph{fixed-point amplitude amplification} \cite{yoder2014fixed}, but at the cost of slower convergence.

We validate PDBQITE with classical simulations on two problem families. For ground-state preparation of the molecules $\mathrm{LiH}$ and $\mathrm{BeH_2}$, starting from the Hartree–Fock state, both first- and second-order PDBQITE reach chemical accuracy; the second-order variant does so in as few as two iterations, and although its per-step success probability is lower, its cumulative success probability at chemical accuracy is roughly an order of magnitude higher than that of first-order. In the same setting, PDBQITE outperforms the near-term probabilistic method PITE, reaching lower energy per iteration with a higher success probability. For combinatorial optimization, we study a hybrid scheme in which a QAOA circuit is used as a starting point for PDBQITE. We find that on randomly weighted 3- and 4-regular MaxCut instances of up to $n=18$ qubits, appending a few PDBQITE steps yields higher approximation ratios than spending the same circuit depth on additional QAOA layers, driving the ratio close to one. Across all instances, the measured per-step success probability closely tracks our system-size-independent lower bound.

\emph{Structure:} In Sec. \ref{sec:preliminaries} we give a basic introduction on ITE and on the recently-proposed DB-QITE algorithm. In Sec. \ref{sec:pdbqite} we give an overview of our algorithm and explain how to implement this on a quantum device. In Sec. \ref{sec:experiments}, we evaluate how our algorithm works in practice by testing its performance on the MaxCut problem and on the task of preparing ground states of molecular Hamiltonians. In Sec. \ref{sec:main_results}, we present the two main versions of our algorithm. Finally, in Sec. \ref{sec:discussion} we conclude with a small discussion about our algorithm and future directions.

\section{Preliminaries}
\label{sec:preliminaries}

\subsection{Imaginary-time Evolution}
\label{sec:imaginary_time_evolution}


Consider a state $\ket{\psi(0)}$ whose interactions are described by a time-independent Hamiltonian $H$. Its dynamics are governed by the time-dependent Schrödinger equation:
\begin{equation}
    \frac{d \ket{\psi(t)}}{d t} = -iH\ket{\psi(t)}
\end{equation}
The quantum state at every $t$ can thus be acquired by evolving the quantum state $\ket{\psi(0)}$ by the unitary $U=e^{-iHt}$ as:
\begin{equation}
    \ket{\psi(t)} = e^{-iHt}\ket{\psi(0)}
\label{eq:solution_schrodinger_equation}
\end{equation}
ITE allows the time $t$ to be imaginary. By setting $\tau = it$, with $\tau \in \mathbb{R}$, Eq. \eqref{eq:solution_schrodinger_equation} is written as:
\begin{equation}
    \ket{\psi(\tau)} = e^{-H\tau}\ket{\psi(0)}.
\end{equation}
Clearly, the vector $\ket{\psi(\tau)}$ isn't normalized:
\begin{equation}
    \bra{\psi(\tau)}\ket{\psi(\tau)} = \bra{\psi(0)}e^{-2H\tau}\ket{\psi(0)} \neq 1.
\end{equation}
However, we can enforce normalization:
\begin{equation}
    \ket{\psi(\tau)} = \frac{e^{-H\tau}\ket{\psi(0)}}{\norm{e^{-H\tau}\ket{\psi(0)}}}.
\label{eq:ITE_state}
\end{equation}
We will call the state in Eq. \eqref{eq:ITE_state}, \emph{ITE} state. By taking the derivative with respect to $\tau$, we can show that the dynamics of the ITE state are governed by the Wick-Schrödinger equation:
\begin{equation}
    \frac{d\ket{\psi(\tau)}}{d\tau} = -(H-E(\tau))\ket{\psi(\tau)}
\label{eq:wick_schrodinger_rotation}
\end{equation}
where $E(\tau) = \bra{\psi(\tau)}H\ket{\psi(\tau)}$ is the expectation value of the Hamiltonian for the state $\ket{\psi(\tau)}$. ITE has several important properties that make its application appealing. First of all, consider a state with non-zero overlap with the ground state. Then, applying ITE on that state will converge to the ground state at time $\tau \to \infty $. To see this, consider a state $\ket{\psi(0)}$ that has non-zero amplitude $a_0$ with the ground state
    \begin{equation}
        \ket{\psi(0)} = a_0 \ket{\psi_0} + \sum_{j\neq 0}a_j\ket{\psi_j}
    \end{equation}
and consider the Hamiltonian eigenvalues $0\leq E_0< E_1<\ldots$. Evolving for time $\tau$ results in:
    \begin{gather*}
        \ket{\psi(\tau)} = A(\tau)[a_0e^{-H\tau}\ket{\psi_0} + \sum_{j\neq 0}a_j e^{-H\tau}\ket{\psi_j}] \\
        =A(\tau)e^{-E_0\tau}[a_0 \ket{\psi_0} + \sum_{j\neq 0}a_j e^{-(E_j-E_0)\tau}\ket{\psi_j}]
    \end{gather*}
where $A(\tau) = \norm{e^{-H\tau}\ket{\psi(0)}}^{-1}$. As $E_j-E_0>0$, in the limit of $\tau \rightarrow \infty$ we get $\lim_{\tau \rightarrow \infty} \ket{\psi(\tau)} = \ket{\psi_0}$.

Moreover, starting from the maximally mixed state $\rho(0) = \frac{\mathds{1}}{2^n}$ and performing ITE for time $\tau/2$, will generate a Gibbs (thermal) state:
\begin{equation}
     \frac{e^{-H\tau/2}\rho(0)e^{-H\tau/2}}{\Tr[e^{-H\tau/2}\rho(0)e^{-H\tau/2}]} = \frac{e^{-H\tau}}{\Tr[e^{-H\tau}]} = \frac{e^{-H\tau}}{\sum_je^{-E_j\tau}}
\end{equation}

\subsection{Quantum Algorithms for ITE}

 One of the most popular techniques for applying ITE on a quantum computer is QITE \cite{motta2020determining, gomes2020efficient, nishi2021implementation, huang2023efficient, yeter2022quantum}. Consider a Hamiltonian decomposed into $k$-local Pauli strings $H = \sum_m c_m P_m$. QITE works by first Trotterizing the ITE dynamics as:
\begin{equation}
    e^{-\tau H} = (e^{-c_1P_1\Delta\tau}e^{-c_2P_2\Delta\tau}\ldots)^n + \mathcal{O}(\Delta\tau); \; n:=\frac{\tau}{\Delta\tau}
\end{equation}
and then by searching for a unitary operator $e^{-iA_j\Delta\tau}$ such that:
\begin{equation}
    e^{-iA_j\Delta\tau}\ket{\psi} \approx \frac{e^{-c_jP_j\Delta\tau}\ket{\psi}}{\norm{e^{-c_jP_j\Delta\tau}\ket{\psi}}}
\end{equation}
for each Trotter step. Each $A_j$ is obtained by tomography; first by decomposing $A_j = \sum_{\ell}a_{\ell}P_{\ell}$ into a sum of Pauli strings and then solving a linear system to determine the coefficients $a_{\ell}$. The biggest bottleneck is that the tomographic procedure requires time that scales as $\mathcal{O}(4^D)$ where $D$ is the number of qubits acted on by $A_j$, which can be prohibitively large for strongly correlated Hamiltonians, or for large evolution times $\tau$.

Another approach to execute ITE on a quantum device is to use a parameterized quantum circuit, and employ McLachlan's variational principle to update the parameters \cite{mcardle2019variational, gacon2024variational, kolotouros2025accelerating, gomes2021adaptive}. By doing so, the underlying parameterized unitary generates states that follow the underlying ITE trajectory (i.e., the states that are solutions to the Wick-Schrödinger equation). However, in order to derive the parameter $\boldsymbol{\theta}(\tau)$ updates, one solves the differential equation 
\begin{equation}
    \mathcal{F}_Q(\boldsymbol{\theta}(\tau))\dot{\boldsymbol{\theta}}(\tau) = -2\grad_{\boldsymbol{\theta}}E_{\tau}(\boldsymbol{\theta}(\tau))
\end{equation}
where $\mathcal{F}_Q(\boldsymbol{\theta}(\tau))$ is the quantum Fisher information matrix (QFIM) \cite{liu2020quantum} at point $\boldsymbol{\theta}(\tau)$, and $E_{\tau}(\boldsymbol{\theta}(\tau))$ is the energy of the system at the same configuration. Estimating QFIM can become impractical as the number of circuit parameters increases. Several methods have been developed to reduce total computation resources \cite{gacon2024variational, kolotouros2025accelerating}, but questions about how to choose good initial parameters and how to avoid the vanishing gradients problem \cite{larocca2025barren} remain.

Furthermore, there exist probabilistic approaches \cite{ kosugi2022imaginary, nishi2023optimal, nishi2025encoded, leadbeater2024non, chan2023simulating, zhang2025quantum, silva2023fragmented, huang2024probabilistic} in which the non-unitary operator is encoded in a block of a larger unitary operator. In \cite{kosugi2022imaginary, nishi2023optimal, nishi2025encoded, leadbeater2024non}, the authors showed how to approximate the non-unitary operator with a single ancilla and two controlled real-time evolutions. Moreover, \cite{silva2023fragmented, chan2023simulating,zhang2025quantum} used Quantum Signal Processing (QSP) based approaches to approximate the exponential $e^{-\tau H}$. As we discuss in Appendix \ref{app:comparison_with_other_methods}, these approaches require more resources than our proposed method, and the bound of the success probability depends on both the non-unitarity of the block-encoded operator, and on the state that it acts upon.

\subsection{Double-bracket Imaginary-time evolution}
\label{sec:DBQITE}

Let $\psi(\tau) = \ketbra{\psi(\tau)}$ be the density matrix for the pure ITE state. The Wick-Schrödinger equation in Eq. \eqref{eq:wick_schrodinger_rotation} can be written as:
\begin{equation}
    \partial_{\tau}\ket{\psi(\tau)} = [\psi(\tau), H]\ket{\psi(\tau)}
\label{eq:DB-QITE}
\end{equation}
or in terms of the density operator $\psi(\tau)$ as:
\begin{equation}
    \frac{\partial \psi(\tau)}{\partial \tau} = [[\psi(\tau), H],\psi(\tau)]
\label{eq:ite_dbf_equation}
\end{equation}
which is in the form of Brockett's double-bracket flow \cite{helmke2012optimization}. In \cite{gluza2026double}, Gluza et al. investigated the equivalence between double-bracket flows \cite{brockett1991dynamical, wegner1994flow} and ITE, and developed a \emph{recursive} quantum algorithm with cooling guarantees that they named double-bracket QITE (DB-QITE). Later on, they showed a connection between Grover's algorithm and ITE, as well as how ITE can reproduce known quantum algorithm subroutines \cite{suzuki2025grover}.
  

The key insight in \cite{gluza2026double} is 
to write Eq. \eqref{eq:DB-QITE} (see Lemma 1 in \cite{mcmahon2025equating}) up to an $O(\tau^2)$ error as:
\begin{equation}
    \ket{\psi(\tau)} =e^{\tau[\psi(0),H]}\ket{\psi(0)} + \mathcal{O}(\tau^2),
\end{equation}
where $e^{\tau[\psi(0),H]}$ is a unitary due to the anti-Hermicity of its exponent. 
Hence, for short time duration $s$, we can approximate the above equation to first order as:
\begin{equation}
    \ket{\psi(s)} =e^{s[\psi(0), H]}\ket{\psi(0)}
\label{eq:DBF-approximation}
\end{equation}
with an error $\mathcal{O}(s^2)$. Thus, the above equation results in a proposal for constructing unitaries that approximate ITE. Using a second-order product formula, we can write:
\begin{equation}
\begin{aligned}
    e^{s[\psi(0), H]} = e^{i\sqrt{s}H}e^{i\sqrt{s}\ketbra{\psi(0)}}&e^{-i\sqrt{s}H}e^{-i\sqrt{s}\ketbra{\psi(0)}} \\
    &+ \mathcal{O}(s^{3/2})
\end{aligned}
\label{eq:first_order_approximation}
\end{equation}
The error $\mathcal{O}(s^{3/2})$ can be further reduced if we use higher-order product formulas \cite{childs2013product}.


Next, we can discretize the total evolution time $\tau$ into timesteps $\{s_1, s_2, \ldots, s_K\}$, so that $\sum_ks_k=\tau$, and use Eq. \eqref{eq:first_order_approximation} to perform imaginary-time evolution. Thus, if we let $U_k$ be the unitary that prepares $\ket{\psi_k}$ as $\ket{\psi_k} = U_k\ket{0}$, we arrive at the recursive formula:
\begin{equation}
    U_{k+1} = e^{i\sqrt{s_k}H}U_k e^{i\sqrt{s_k}\ketbra{0}}U_k^{\dagger}e^{-i\sqrt{s_k}H}U_k
\label{eq:recursion_relation}
\end{equation}
where we neglected the global phase $e^{-i\sqrt{s_k}}$. The authors showed that one step of DB-QITE is guaranteed to increase the fidelity with the ground state by an amount proportional to the initial fidelity with the ground-state and the spectral gap of the target Hamiltonian. However, as seen in Eq. \eqref{eq:recursion_relation}, the unitaries that realize ITE are constructed recursively, requiring depths that grow exponentially with the number of ITE steps. Specifically, in order to perform $k$ iterations of DB-QITE, one needs to apply $\mathcal{O}(3^k)$ distinct real-time evolutions. As such, this approach is only useful if we restrict ourselves to a very small number of iterations since the depth rapidly becomes prohibitively large for near-term devices.

Eq. \eqref{eq:recursion_relation} is the starting point for our algorithm. Recall that at each iteration of DB-QITE, the quantum state is evolved as:
\begin{equation}
    \ket{\psi_{k+1}} =e^{i\sqrt{s_k}H}e^{i\sqrt{s_k}\ket{\psi_{k}}\bra{\psi_k}}e^{-i\sqrt{s_k}H}\ket{\psi_k}
\label{eq:dbqite_iteration}
\end{equation}
We can then write the equation above as:
\begin{gather*}
    \ket{\psi_{k+1}} = e^{i\sqrt{s_k}H}(\mathds{1} + (e^{i\sqrt{s_k}}-1)\ket{\psi_k}\bra{\psi_k})e^{-i\sqrt{s_k}H}\ket{\psi_k} \\ 
    = \ket{\psi_k} + (e^{i\sqrt{s_k}} -1)f_k(\sqrt{s_k}) e^{i\sqrt{s_k}H}\ket{\psi_k} \\
    = (\mathds{1} + (e^{i\sqrt{s_k}} - 1)f_k(\sqrt{s_k})e^{i\sqrt{s_k}H})\ket{\psi_k}
\end{gather*}
where $f_k(t)$ is defined as:
\begin{equation}
    f_k(t) = \bra{\psi_k}e^{-iHt}\ket{\psi_k}.
\label{eq:fk_definition}
\end{equation}
The key insight of our algorithm is that instead of constructing the unitary operator in Eq. \eqref{eq:recursion_relation}, one can apply a linear combination of unitaries, in which the coefficients depend on the imaginary-time evolved state at the previous iteration. As we will explain right below in Sec. \ref{sec:pdbqite}, this motivates a probabilistic double-bracket quantum imaginary-time algorithm that requires exponentially shorter depth than DB-QITE \cite{gluza2026double}, at the cost of introducing a probability of success. 

\section{Probabilistic Double-Bracket Quantum Imaginary-time Evolution}
\label{sec:pdbqite}

\begin{figure}
    \centering
\includegraphics[width=1\linewidth]{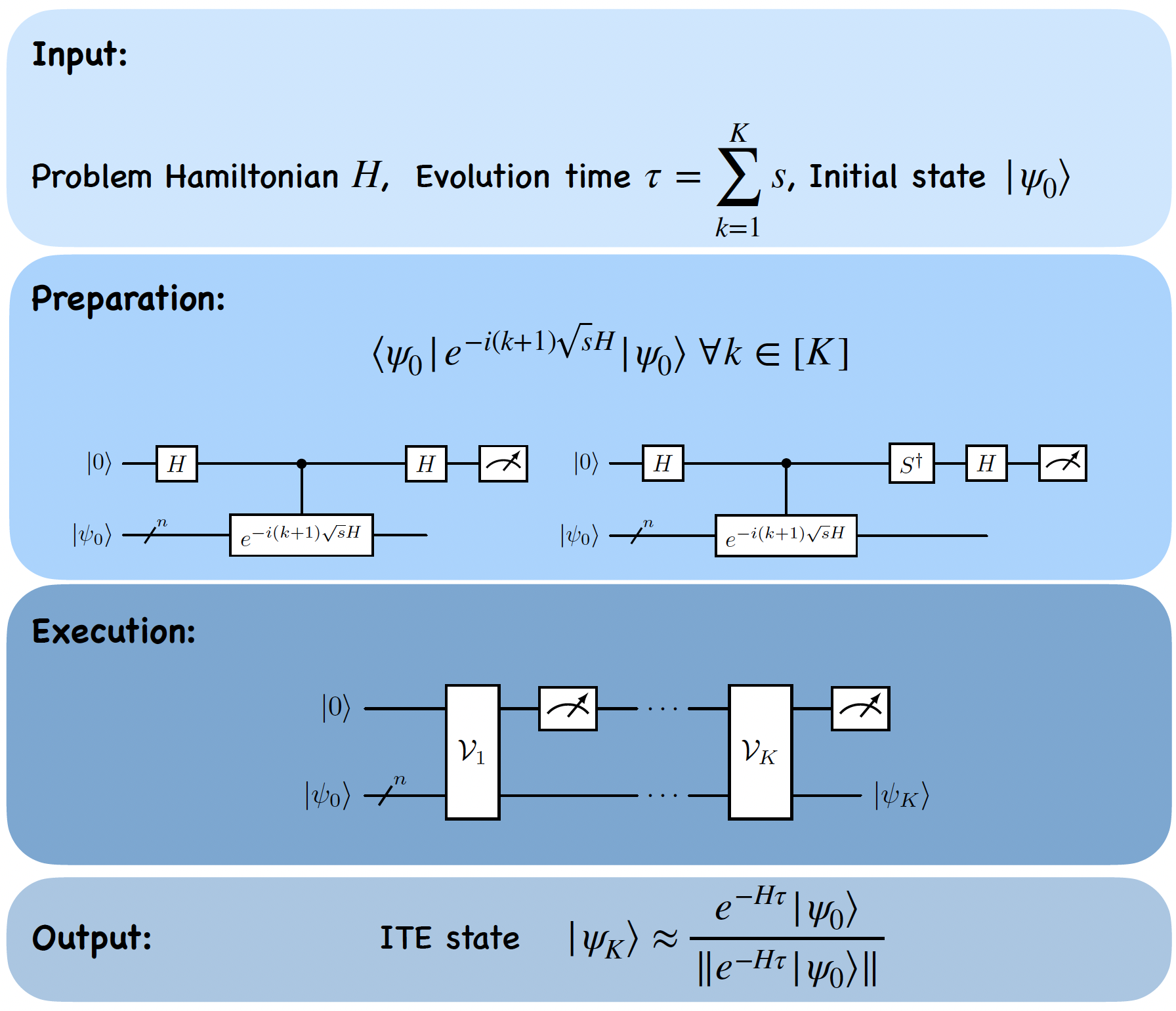}
    \caption{General overview of our method. At first the total evolution time $\tau$ is discretized into $K$ steps. Then, the quantum circuit $\mathcal{V}_k$ corresponding to each step of the algorithm is constructed by measuring the overlap $\bra{\psi_0}e^{-i(k+1)\sqrt{s}H}\ket{\psi_0}$, where $\ket{\psi_0}$ is the initial state. Provided that we measure the ancilla in the state $\ket{0}$ at every step of the algorithm, then at step $K$ we will have an approximation of the ITE state.}
    \label{fig:general_overview_pdbqite}
\end{figure}

In this section, we provide a general overview of our algorithm in \emph{its simplest form}, depicted in Fig. \ref{fig:general_overview_pdbqite}. In later sections, we analyze how the performance of the algorithm can be enhanced and \emph{at what cost} (in terms of total running time, qubits, and gates). 

\subsection{First-order PDBQITE}\label{subsec:1storderPDBITE}

\subsubsection{Input}
 
Our algorithm starts with the problem Hamiltonian $H$, whose ground state we aim to prepare. Similarly to DB-QITE, we discretize the total evolution time $\tau$ into $K$ discrete timesteps $s_k$. The simplest version of the algorithm chooses equal timesteps $s$ and as such $\tau = \sum_k s$, 
which significantly reduces the computational time. We assume an easy-to-prepare initial state $\ket{\psi_0}$ with non-negligible overlap with the ground-state is given at the input of our procedure.

\subsubsection{Preparation}

In order to build the quantum circuit that prepares the ITE state in Eq. \eqref{eq:ITE_state} for time $\tau$, we need to estimate the overlaps:
\begin{equation}   \bra{\psi_0}e^{-i(k+1)\sqrt{s}H}\ket{\psi_0}
\label{eq:overlaps_initial_state}
\end{equation}
for all $k\in [K]$. These overlaps can be estimated using two Hadamard tests per iteration for a total of $2K$ Hadamard tests. As we show in Appendix \ref{sec:estimating_fk}, these overlaps can be used to calculate the quantities $f_k(\sqrt{s})$ (of intermediate states $\ket{\psi_k}$) in Eq. \eqref{eq:fk_definition} by performing measurements only on the initial state $\ket{\psi_0}$. However, as we discuss in Appendix \ref{sec:estimating_fk}, this can only be done efficiently iff $s_k=s$ for all $k\in [K]$. In other cases where the timesteps are different, one would have to perform intermediate measurements on the states $\ket{\psi_k}$ to infer the quantities $f_k(\sqrt{s_k})$.

\subsubsection{Execution}

Once we have estimated the overlaps in Eq. \eqref{eq:overlaps_initial_state}, we proceed to the main part of the algorithm. Recall that the action of one DB-QITE can be expressed as:
\begin{equation}
\begin{gathered}
    \ket{\psi_{k+1}} = \mathcal{Q}_k\ket{\psi_k}
\end{gathered}
\end{equation}
where:
\begin{equation}
    \mathcal{Q}_k = (\mathds{1} + (e^{i\sqrt{s}}-1) f_k(\sqrt{s})e^{i\sqrt{s}H})
\end{equation}
is a non-unitary operator that can be expressed as a linear combination of unitaries. In Sec. \ref{sec:main_results}, we thoroughly explain how to implement this non-unitary operator using a single ancilla qubit and a probabilistic approach. Specifically, $\mathcal{V}_k$ is the LCU unitary that generates the state:
\begin{equation}
    \mathcal{V}_k\ket{0}\ket{\psi_k} = \frac{1}{\alpha_k}\ket{0}\mathcal{Q}_k\ket{\psi_k} + \sqrt{1-\frac{1}{\alpha_k^2}}\ket{\Phi^{\perp}}
\end{equation}
where $\ket{\Phi^{\perp}}$ corresponds to the failure state. This implies that the system has performed imaginary-time evolution for a timestep $s$, provided that the ancilla qubit is measured in the $\ket{0}$ state. The specific form of this non-unitary operator allows for a lower bound on the success probability given in Lemma \ref{lemma:prob_of_success_first_order} as:
\begin{equation}
    P_{\text{succ}}\geq \frac{1}{(1+\sqrt{s})^2}
\label{eq:probability_of_success_lower_bound}
\end{equation}
Eq. \eqref{eq:probability_of_success_lower_bound} indicates that the probability of success is close to unity for small timesteps $s$. As a result, we can perform a moderate number of small timesteps $s$, while maintaining a relatively high probability of success. 

In order to apply $K$ consecutive iterations of PDBQITE, we need to apply a series of LCU unitaries $\{\mathcal{V}_1,\mathcal{V}_2, \ldots, \mathcal{V}_K\}$ sequentially, provided that we measure the ancilla qubit in the state $\ket{0}$ after every iteration.

\subsubsection{Output}

After executing the circuit, and measuring the ancilla in the state $\ket{0}$ at every step, the system will be in the state:
\begin{equation}
    \ket{\psi_K}\approx \frac{e^{-\tau H}\ket{\psi_0}}{\norm{e^{-\tau H}\ket{\psi_0}}}
\end{equation}
We outline the pseudoalgorithm for PDBQITE in Algorithm \ref{alg:PDB-QITE} and note that this is our algorithm in its simplest form. In later sections, we describe how to enhance the performance of PDBQITE at the cost of more ancillas, measurements, and gates. Specifically, in Sec. \ref{sec:second-order-pdbqite} we explain how to use an additional ancilla qubit that allows the discretization of the total time $\tau$ into larger timesteps.

\begin{algorithm}[h!]
\caption{PDBQITE}
\label{alg:PDB-QITE}
\SetKwInOut{Input}{Input}
\Input{Problem Hamiltonian $H$\;
Initial state $\ket{\psi_0}$\;
Total evolution time $\tau$\;
Number of iterations $K$\;
Overlaps $\bra{\psi_0}e^{-i(k+1)\sqrt{s}H}\ket{\psi_0}$ for $k\in[K]$\;
Estimate $f_1, f_2, \ldots, f_K$\;}
\For{$k=1,2,\ldots,K$}{

Construct and apply LCU circuit $\mathcal{V}_k$\;
Measure the ancilla qubit in the $\{\ket{0},\ket{1}\}$ basis\;
\If{Ancilla is in $\ket{1}$}{\KwSty{break}}
Postselect $\ket{\psi_{k}}$ with probability $P_{\text{succ}}$ by measuring the ancilla qubit in the $\ket{0}$ state.\;
}
\Return $\ket{\psi_K}  \approx \frac{e^{-\tau H}\ket{\psi_0}}{\norm{e^{-\tau H}\ket{\psi_0}}}$
\end{algorithm} 

\subsection{Further Improvements}

Instead of using the second-order product formula of Eq. \eqref{eq:first_order_approximation}, one could use a third-order product formula to approximate the exponential $e^{s[\psi(0), H]}$ with greater accuracy and permitting larger time steps, as summarized in Sec. \ref{sec:second-order-pdbqite} and derived in Appendix \ref{app:higher_order_formulas}.
Its limited additional cost of a second ancillary qubit and one more controlled-real-time evolution can nonetheless lead to improvements of overall circuit depth and total success probability due to the savings in the number of iterations required to reach the target state, as demonstrated in the discussion of our numerical results in section \ref{sec:experiments}.

The operator $\mathcal{Q}_k$ is close to a unitary operator, up to an error $\mathcal{O}(\sqrt{s_k})$. When this operator acts on the state $\ket{\psi_k}$, it succeeds with a probability that is large and close to 1. This motivates a robust (partial) oblivious amplitude amplification technique, inspired by the work in \cite{berry2015hamiltonian}, that can boost the per-step probability of success to even larger values, as we present in Appendix \ref{app:amplitude_amplification}.

\section{Numerical Experiments}
\label{sec:experiments}

In this section, we investigate how well PDBQITE works in practice. The performance and convergence of ITE approaches is highly dependent on the initial state. Specifically, states that have a non-vanishing (polynomial) overlap with the ground state (or with a low-energy state) tend to converge much faster than starting from the uniform superposition state. Our analysis will allow us to demonstrate how PDBQITE can outperform PITE, one of the most popular probabilistic quantum ITE methods so far. Moreover, we will compare the performance of first- and second-order PDBQITE and show how PDBQITE can enhance the performance of QAOA in solving Max-Cut instances. Appendix \ref{sec:additional_experiments} contains additional experiments on the Maximum Independent Set problem, reaching similar conclusions.

\subsection{Ground-state preparation}
\begin{figure*}[t]
    \centering    \includegraphics[width=0.85\textwidth]{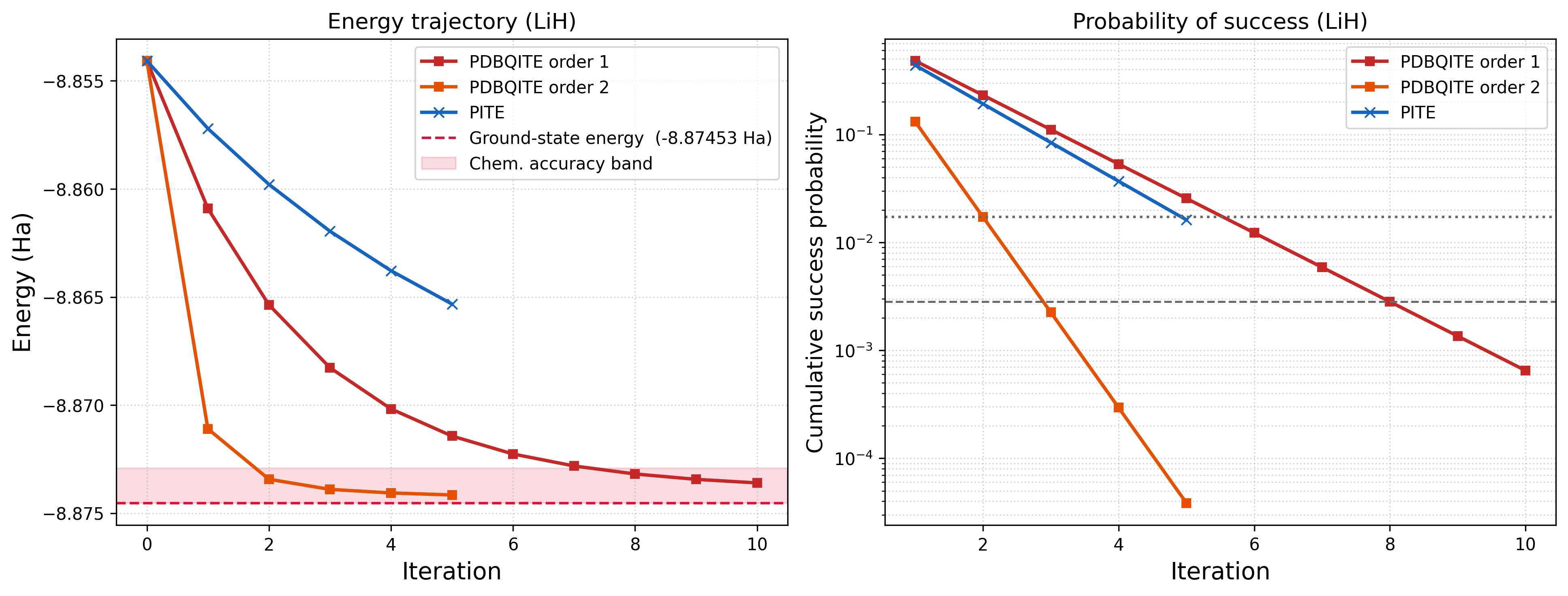} \\
\caption{Energy trajectory (left figure) and cumulative probability of success (right figure) for the $LiH$ molecule. For first-order PDBQITE we choose a timestep $s_k=0.2$ and perform 10 iterations, while for second-order PDBQITE and PITE we choose a timestep $s_k=0.7$ and perform only 5 iterations. Horizontal lines mark the cumulative success probability at which second-order (dotted) and first-order (dashed) PDBQITE reach chemical accuracy.}
    \label{fig:molecular_ground_state_prep}
\end{figure*}

Molecular systems are a standard benchmark for quantum ground-state preparation algorithms, due to their hardness as the size of molecules increases \cite{mcardle2019variational, bauer2020quantum, kirby2026observation, nishimura2026symmetry}. A molecule with $N$ electrons in positions $\boldsymbol{r}_i$ and $M$ nuclei in positions $\boldsymbol{R}_I$ with charge $Z_I$ has a Hamiltonian expressed in atomic units as:
\begin{equation}
\begin{gathered}
    H = -\sum_{i=1}^N\frac{\grad_i^2}{2} - \sum_{I=1}^M\frac{\grad_I^2}{2M_I}-\sum_{i=1}^N\sum_{I=1}^M\frac{Z_I}{|\boldsymbol{r}_i-\boldsymbol{R}_I|}+\\
    \sum_{i=1}^N\sum_{j=i+1}^N\frac{1}{|\boldsymbol{r_i}-\boldsymbol{r}_j|} +\sum_{I=1}^M\sum_{J=I+1}^M\frac{Z_IZ_J}{|\boldsymbol{R}_I-\boldsymbol{R}_J|}
\end{gathered}
\end{equation}
Following the standard methodology, we use the Born-Oppenheimer approximation that assumes that the masses of the atomic nuclei are significantly larger than the mass of the electron. As such, we can express the total wavefunction as the product of an electronic and a nuclear wavefunction. The electronic Hamiltonian can then be expressed in second quantization as:
\begin{equation}
    H_{\text{elec}} = \sum_{pq}h_{pq}\hat{a}_p^{\dagger}\hat{a}_q + \frac{1}{2}\sum_{pqrs}h_{pqrs}\hat{a}_p^{\dagger}\hat{a}_q^{\dagger}\hat{a}_r\hat{a}_s
\end{equation}
where the 1-body integrals $h_{pq}$ and 2-body integrals $h_{pqrs}$ can be calculated efficiently using classical codes.

In our experiments, we choose to tackle two different molecules $\mathrm{LiH}$ and $\mathrm{BeH_2}$,
using both first- and second-order PDBQITE and with the Hartree-Fock state as the initial state. The results for $\mathrm{LiH}$ are visualized in Fig. \ref{fig:molecular_ground_state_prep}, while the results for $\mathrm{BeH_2}$, found in Appendix \ref{sec:additional_experiments}, follow a similar pattern.  

A non-trivial question when using PDBQITE is to decide whether to use first or second order to solve a given problem. As shown in Figure \ref{fig:molecular_ground_state_prep} (left),
second-order PDBQITE converges faster to the ground-state in terms of the number of iterations. The key question is whether it remains favorable in terms of success probability and gate count, or circuit depth. Figure \ref{fig:molecular_ground_state_prep} (right) illustrates how the probability of success for second-order PDBQITE decreases faster than that of first order, due to a smaller probability of success per iteration. Crucially, because second order reaches chemical accuracy at its second step, while first order requires more iterations, the success probability to reach chemical accuracy of second order remains almost one order of magnitude higher than that of the first order.

In terms of gate and circuit depth, the discussion is more subtle. On one hand, every iteration of second-order PDBQITE requires twice the number of controlled real-time evolution calls than that of first-order PDBQITE, which is compensated by reaching chemical accuracy almost at step 2 instead of 8. On the other hand, one needs to take into account also the cost of Trotterization and the different time steps, $s=0.2$ for first-order and $s=0.7$ for second-order, which may lead to higher cost in circuit depth per iteration. Thus, the total circuit depth to reach chemical accuracy is comparable for both orders, while second-order retains its success-probability advantage.


Furthermore, our data show that both versions outperform PITE \cite{kosugi2022imaginary}. While PITE has a similar probability of success to first-order PDBQITE, it performs worse in energy per iteration, while at the same time requiring deeper circuits as it needs two controlled real-time evolution calls per iteration, and each iteration will probably have a higher Trotterization cost due to its larger timestep. PITE and second-order PDBQITE have equivalent circuit depth per iteration, but second-order PDBQITE significantly outperforms in speed of convergence to the ground state, while reaching chemical accuracy with a higher success probability.

\subsection{Classical optimization problems}

\begin{figure*}[t]
    \centering
    \includegraphics[width=\textwidth]{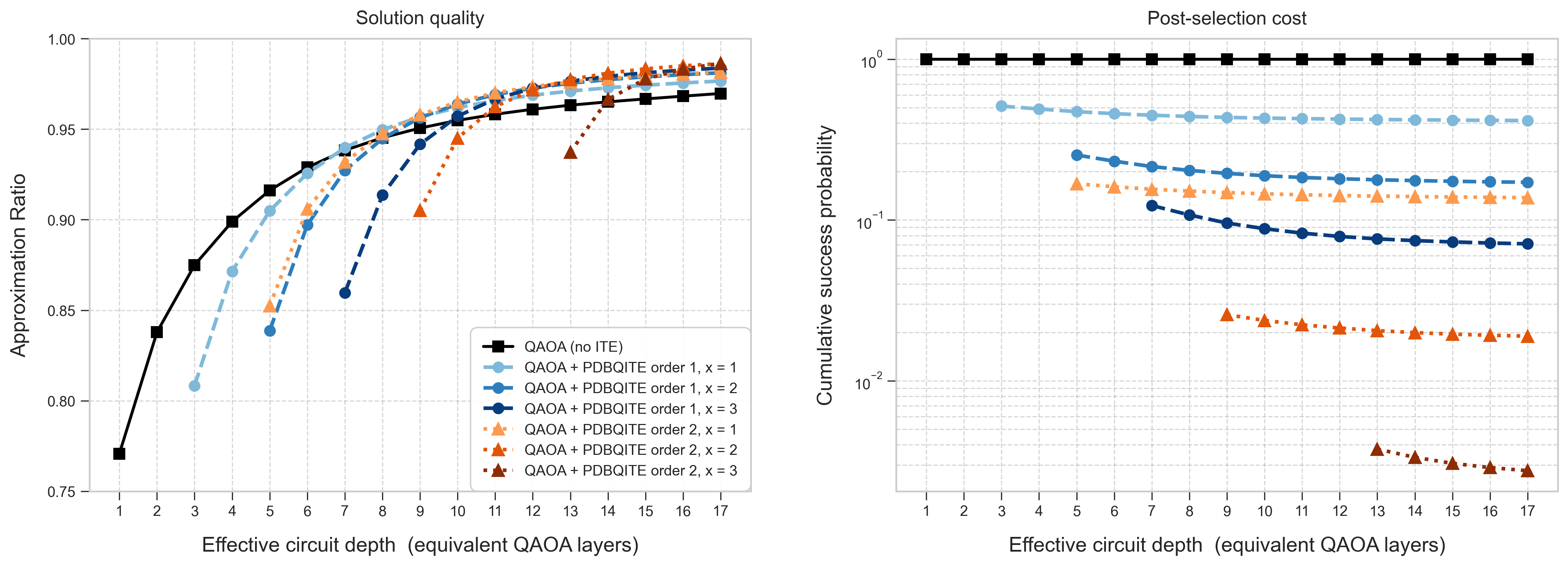} \\
\caption{Approximation ratio (left) and cumulative success probability (right) as a function of ``equivalent QAOA layers'', for randomly weighted 4-regular MaxCut
instances of size $n=18$. We compare QAOA against QAOA augmented with up to three PDBQITE steps of first ($s=0.35$) and second order ($s=0.7$).}
\label{fig:qaoa_vs_qaoaite}
\end{figure*}

In this section, we benchmark PDBQITE on the MaxCut problem on different types of graphs. As a figure of merit, we chose the approximation ratio defined as 
$r = \frac{C_{\text{approx}}}{C_{\text{optimal}}}$, where $C_{\text{approx}}$ is the cost value returned by our algorithm and $C_{\text{optimal}}$ is the optimal cost value. 

Applying PDBQITE directly to the QAOA initial state $\ket{+}^{\otimes n}$, or to any other quantum state with a weak overlap with the ground-state, requires many iterations to produce meaningful results. We therefore investigate a hybrid approach where we first generate a quantum state by applying $p$ layers of QAOA, and then perform $x$ layers of PDBQITE.
To conduct a fair resource estimation comparison, we use the fact that one ITE step for first- and second-order PDBQITE is equivalent, in terms of number of single- and two-qubit gates, to two and four QAOA layers respectively (see Appendix \ref{app:controlled_real_time_evolution} for a derivation). 

For our benchmarks, we use randomly weighted 3-regular and 4-regular graphs. The weights of the graphs were sampled from the uniform distribution $\mathcal{U}(0,1)$ and rounded to 3 decimal points. For the initial state, we choose the quantum state generated by a QAOA \cite{farhi2014quantum} circuit, with parameters $(\boldsymbol{\beta}, \boldsymbol{\gamma})$ chosen according to the SKatan method \cite{vcepaite2025quantum, shaydulin2023parameter}:
$\ket{\psi_0} = U_{\text{QAOA}}^p(\boldsymbol{\beta}, \boldsymbol{\gamma})\ket{+}^{\otimes n}$.
We tested 20 graph instances of size $n=18$ and for up to 17 QAOA layers. More experiments for different graph sizes and different types of graphs can be found in Appendix \ref{sec:additional_experiments}.

Fig. \ref{fig:qaoa_vs_qaoaite} shows the approximation ratio as a function of equivalent QAOA layers. For $x$ iterations of $y$-th order PDBQITE, after $p$ initial layers of QAOA, the circuit depth $D$ measured in equivalent layers of standard QAOA reads: $D=p+2yx$. We observe that until 6 to 7 equivalent layers, there is no benefit in using any PDBQITE iteration over vanilla QAOA. Interestingly, beyond that threshold, slowly adding iterations of first-order PDBQITE improves the ratio, at the cost of a success probability greater than $\sim 0.1$.

 Having access to additional circuit depth opens up the possibility of using second-order PDBQITE that can slightly increase performance, but at a cost in success rate that may not be worth it. For first-order PDBQITE, we used a timestep $s=0.35$, while for second-order PDBQITE we used a larger timestep, set to $s=0.7$. These naively chosen timesteps perform rather well in practice, a result that could only improve if we optimize the timestep choices.

\section{First-order PDBQITE}
\label{sec:main_results}




 We start by showing how we can build these DB-QITE circuits using quantum circuits with depth that grows \emph{linearly} with the number of ITE iterations, providing an exponential advantage in depth compared to the original DB-QITE circuit synthesis. In this section, we will analyze the general case in which the time steps are not necessarily equal. Recall that at each iteration of DB-QITE the state is evolved as:
\begin{equation}
    \ket{\psi_{k+1}} =e^{i\sqrt{s_k}H}e^{i\sqrt{s_k}\ket{\psi_{k}}\bra{\psi_k}}e^{-i\sqrt{s_k}H}\ket{\psi_k}
\label{eq:dbqite_single_iteration_state}
\end{equation}
up to a global phase $e^{-i\sqrt{s_k}}$ that is neglected. As we also discussed in Sec. \ref{sec:DBQITE}, instead of following the approach of Gluza et al. \cite{gluza2026double}, we can rewrite the equation above as:
\begin{gather*}
    \ket{\psi_{k+1}} = e^{i\sqrt{s_k}H}(\mathds{1} + (e^{i\sqrt{s_k}}-1)\ket{\psi_k}\bra{\psi_k})e^{-i\sqrt{s_k}H}\ket{\psi_k} \\ 
    = \ket{\psi_k} + (e^{i\sqrt{s_k}} -1)f_k(\sqrt{s_k}) e^{i\sqrt{s_k}H}\ket{\psi_k} \\
    = (\mathds{1} + (e^{i\sqrt{s_k}} - 1)f_k(\sqrt{s_k})e^{i\sqrt{s_k}H})\ket{\psi_k}
\end{gather*}
where $f_k(t)$ is defined as:
\begin{equation}
    f_k(t) = \bra{\psi_k}e^{-iHt}\ket{\psi_k}.
\end{equation}

As a result, we have written the action in Eq. \eqref{eq:dbqite_single_iteration_state} as a linear combination of two unitaries, one being the identity and the other the real-time (backward) evolution for time $\sqrt{s_k}$. Let the operator $\mathcal{Q}_k$, defined as the linear combination of the two unitaries:
\begin{equation}
    \mathcal{Q}_k = \mathds{1} + 
    (e^{i\sqrt{s_k}} - 1)f_k(\sqrt{s_k})e^{i\sqrt{s_k}H}
\label{eq:linear_combination_unitary}
\end{equation}
The operator $\mathcal{Q}_k$ is not a unitary operator, but produces a normalized state when it acts on the state $\ket{\psi_k}$. 

\begin{lemma}
    The action of the operator $\mathcal{Q}_k$ 
    \begin{equation*}
    \mathcal{Q}_k = \left(\mathds{1} + 
    (e^{i\sqrt{s_k}} - 1)f_k(\sqrt{s_k})e^{i\sqrt{s_k}H}\right)
    \end{equation*}
    on the state $\ket{\psi_k}$ results in a quantum state $\ket{\psi_{k+1}}$, where $\norm{\ket{\psi_{k+1}}}^2 = \bra{\psi_k}\mathcal{Q}_k^{\dagger}\mathcal{Q}_k\ket{\psi_k} = 1$.
\label{lemma:isometry_on_psi_k}
\end{lemma}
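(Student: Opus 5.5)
There are two routes, and I would give both. The structural one is short; the direct computation double-checks it and also produces the pieces needed for the later success-probability bound.

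\emph{Structural route.} The derivation preceding Eq. \eqref{eq:linear_combination_unitary} shows that
\begin{equation*}
\mathcal{Q}_k\ket{\psi_k} = e^{i\sqrt{s_k}H}e^{i\sqrt{s_k}\ketbra{\psi_k}}e^{-i\sqrt{s_k}H}\ket{\psi_k} .
\end{equation*}
This uses $e^{i\theta\ketbra{\psi_k}} = \mathds{1} + (e^{i\theta}-1)\ketbra{\psi_k}$, which holds because $\ketbra{\psi_k}$ is a projector ($P^2=P$). It also uses $\bra{\psi_k}e^{-i\sqrt{s_k}H}\ket{\psi_k} = f_k(\sqrt{s_k})$, which pulls out as a scalar.

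The right-hand side is a product of three unitaries applied to the unit vector $\ket{\psi_k}$. Hence its norm is $\norm{\ket{\psi_k}}=1$, and therefore $\bra{\psi_k}\mathcal{Q}_k^\dagger\mathcal{Q}_k\ket{\psi_k}=1$.

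\emph{Direct computation.} Write $\theta=\sqrt{s_k}$, $c=(e^{i\theta}-1)f_k(\theta)$ and $W=e^{i\theta H}$. Then
\begin{equation*}
\mathcal{Q}_k^\dagger\mathcal{Q}_k = \mathds{1} + cW + \bar c W^\dagger + |c|^2 \mathds{1}.
\end{equation*}
Taking the expectation in $\ket{\psi_k}$, we have $\bra{\psi_k}W\ket{\psi_k} = \overline{f_k(\theta)}$ and $\bra{\psi_k}W^\dagger\ket{\psi_k} = f_k(\theta)$.

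The cross terms therefore give $2\,\mathrm{Re}\big[(e^{i\theta}-1)\big]|f_k|^2 = 2(\cos\theta-1)|f_k|^2$. The last term gives $|e^{i\theta}-1|^2|f_k|^2 = (2-2\cos\theta)|f_k|^2$. These cancel exactly, leaving $1$.

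There is no real obstacle; the only care needed is tracking the complex conjugations of $f_k$ and $W$. Note also that $\mathcal{Q}_k$ is not an isometry on the whole space: the cancellation uses that the expectation is taken in the same state $\ket{\psi_k}$ that defines $f_k$.
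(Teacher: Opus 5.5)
Your proposal is correct. The direct computation in your second half is the paper's proof. The paper sets $\Delta = e^{i\sqrt{s_k}}-1$, expands $\bra{\psi_k}\mathcal{Q}_k^\dagger\mathcal{Q}_k\ket{\psi_k} = 1 + |f_k(\sqrt{s_k})|^2(\Delta+\Delta^*+|\Delta|^2)$, and observes that the bracket vanishes. That is exactly your cancellation of $2(\cos\theta-1)$ against $2-2\cos\theta$.

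Your structural route is a genuinely different argument. The paper derives the identity $\mathcal{Q}_k\ket{\psi_k} = e^{i\sqrt{s_k}H}e^{i\sqrt{s_k}\ketbra{\psi_k}}e^{-i\sqrt{s_k}H}\ket{\psi_k}$ just before defining $\mathcal{Q}_k$, but never uses it in the proof. You use it to get normalization directly from unitarity. This explains why the cancellation happens: the LCU only re-expresses a unitary's action on one particular vector.

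The structural argument also carries over verbatim to the second-order operator $\mathcal{Q}_k^{(2)}$. The paper only asserts that operator's unit norm on $\ket{\psi_k}$ by analogy, and checking it directly would involve many more cross terms. The direct computation, for its part, exhibits $\mathcal{Q}_k^\dagger\mathcal{Q}_k = \mathds{1}+R$ with $\bra{\psi_k}R\ket{\psi_k}=0$, which is the form the paper reuses in its amplitude-amplification analysis. So giving both, as you do, is the most informative presentation.
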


\begin{proof}
    Consider the action of the operator $\mathcal{Q}_k$ on the state $\ket{\psi_k}$, which results in the state $\ket{\psi_{k+1}}\equiv\mathcal{Q}_k\ket{\psi_k}$. Let also $\Delta = (e^{i\sqrt{s_k}}-1)$. If we calculate the squared norm of $\ket{\psi_{k+1}}$ we get:
    \begin{gather*}
        \bra{\psi_{k+1}}\ket{\psi_{k+1}}  \\
        = 1 + \Delta f_k(\sqrt{s_k})\bra{\psi_k}e^{i\sqrt{s_k}H}\ket{\psi_k} \\+ \Delta^*f_k(\sqrt{s_k})^*\bra{\psi_k}e^{-i\sqrt{s_k}H}\ket{\psi_k} + |\Delta|^2|f_k(\sqrt{s_k})|^2\\
        =1 + |f_k(\sqrt{s_k})|^2(\Delta + \Delta^* + |\Delta|^2)
    \end{gather*}
If we evaluate the term in the parenthesis we get:
\begin{gather*}
    \Delta + \Delta^* + |\Delta|^2 = \\
    (e^{i\sqrt{s_k}}-1) + (e^{-i\sqrt{s_k}}-1) + |e^{i\sqrt{s_k}}-1|^2=0
\end{gather*}
Thus, we have proved that $\bra{\psi_{k+1}}\ket{\psi_{k+1}}=1$.
\end{proof}

In order to apply the non-unitary operator $\mathcal{Q}_k$ on the state $\ket{\psi_k}$, we will use the Linear Combination of Unitaries (LCU) approach \cite{childs2012hamiltonian}. 

\subsection{Circuit construction}

Fig. \ref{figure:single_iteration_pdbqite} illustrates the circuit to implement one iteration of first-order \emph{probabilistic} double-bracket QITE (PDBQITE). Being an LCU circuit, it consists of three steps, $\mathsf{PREP}$, $\mathsf{SELECT}$ and a final projection on the $\mathsf{PREP}$ state.

$\mathsf{PREP}$ starts by setting the ancilla qubit in the state $\ket{0}$ and applying to it the $U_{\mathsf{PREP}}$ circuit that encodes the coefficients of the linear combination:
\begin{equation}
    U_{\mathsf{PREP}}\ket{0} = \frac{1}{\sqrt{\alpha_k}}\Big(\ket{0} + \sqrt{|a_k|}\ket{1}\Big)
\end{equation}
where $\alpha_k = 1+|a_k|$ is the sum of the magnitude of both coefficients in the linear combination of the $k$-th iteration, i.e., $a_k= (e^{i\sqrt{s_k}} - 1)f_k(\sqrt{s_k})$. The unitary $U_{\mathsf{PREP}}$ can easily be constructed by applying an $R_y(\theta)$ rotation with $\theta = 2 \arctan\sqrt{|a_k|}$. Because $a_k$ is a complex number, we also need to encode its angle $\phi$, defined as $a_k= |a_k|e^{i\phi}$, by applying an $R_z(\phi)$ gate on the ancilla qubit. This creates the state (up to a global phase):
\begin{equation}
    R_z(\phi)U_{\mathsf{PREP}}\ket{0} = \frac{1}{\sqrt{1+|a_k|}}\Big(\ket{0} +e^{i\phi}\sqrt{|a_k|}\ket{1}\Big)
\label{eq:action_U_prep_and_rz}
\end{equation}

The second stage applies the $\mathsf{SELECT}$ operator $U_{\mathsf{SEL}}$ on the total system (including the ancilla). The $\mathsf{SELECT}$ operator is defined so that it applies the corresponding unitary in state $\ket{\psi_k}$ depending on the value of the ancilla register. As such $U_{\mathsf{SEL}}$ is defined as:
\begin{equation}
    U_{\mathsf{SEL}} = \ketbra{0}\otimes \mathds{1} + \ketbra{1}\otimes e^{i\sqrt{s_k}H}
\end{equation}
which is a controlled-$(e^{i\sqrt{s_k}H})$ (i.e., a controlled backward real-time evolution). Applying $U_{\mathsf{SEL}}$ on the state in Eq. \eqref{eq:action_U_prep_and_rz} results in:
\begin{equation}
    \begin{aligned}
U_{\mathsf{SEL}}R_z(\phi)U_{\mathsf{PREP}}\ket{0}\ket{\psi_k} \\ =\frac{1}{\sqrt{1+|a_k|}}\Big(U_{\mathsf{SEL}}\ket{0}\ket{\psi_k} +e^{i\phi}\sqrt{|a_k|}U_{\mathsf{SEL}}\ket{1}\ket{\psi_k}\Big) \\
= \frac{1}{\sqrt{1+|a_k|}}\Big(\ket{0}\ket{\psi_k} +e^{i\phi}\sqrt{|a_k|}\ket{1}e^{i\sqrt{s_k}H}\ket{\psi_k}\Big)
\end{aligned}
\end{equation}

The final step, is the projection into the $\mathsf{PREP}$ state, i.e.,
\begin{equation}
  \bra{0}U_{\mathsf{PREP}}^{\dagger} = \frac{1}{\sqrt{\alpha_k}}\Big(\bra{0} + \sqrt{|a_{k}|}\bra{1}\Big),
\end{equation}
via acting with $U_{\mathsf{PREP}}^{\dagger} = R_y(-\theta)$ and measuring in the computational basis,
leading to the following subnormalised output:
\begin{equation}
\begin{aligned}
(\bra{0}\otimes\mathds{1})U_{\mathsf{PREP}}^{\dagger}U_{\mathsf{SEL}}R_z(\phi)U_{\mathsf{PREP}}\ket{0}\ket{\psi_k} \\= \frac{1}{\alpha_k}\Big(\mathds{1} + a_ke^{i\sqrt{s_k}H}\Big)\ket{\psi_k}.
\end{aligned}
\end{equation}

\begin{figure*}
\begin{center}
\begin{quantikz}
    \lstick{$\ket{0}$} & &\gate[2]{\mathcal{V}_k} &\meter{} \\
    \lstick{$\ket{\psi_k}$} &\qwbundle{n}& &\qw
\end{quantikz}
$\equiv$
\begin{quantikz}
\lstick{$\ket{0}$} & \gate{R_y(\theta)} & \gate{R_z(\phi)} & \ctrl{1} & \gate{R_y(-\theta)} & \meter{} \\
\lstick{$\ket{\psi_k}$} & \qwbundle{n} & \qw & \gate{e^{iH\sqrt{s_k}}} & \qw & \qw
\end{quantikz}
\end{center}
\caption{Linear combination of unitaries circuit to perform one iteration of probabilistic DB-QITE. If the ancilla qubit is measured in the $\ket{0}$ state, the remaining system has performed imaginary-time evolution.}
\label{figure:single_iteration_pdbqite}
\end{figure*}

In order to construct the operator $\mathcal{Q}_k$, one needs to estimate the complex coefficient $f_{k}(\sqrt{s_k})=\bra{\psi_k}e^{-iH\sqrt{s_k}}\ket{\psi_k}$, which requires a Hadamard test subroutine on the state at the previous iteration (see Appendix \ref{sec:estimating_fk} for more details). 
Interestingly, as shown in Appendix \ref{sec:estimating_fk}, if the timesteps are chosen to be \emph{constant} ($s_k = s$ for all $k$) throughout the evolution, then all $f_k$ can be estimated in advance from the time-series
$\bra{\psi_0}e^{-iH\ell\sqrt{s_k}}\ket{\psi_0}$, with $\ell \in [K]$, that can be computed by a similar Hadamard-test subroutine now acting only on the input state $\ket{\psi_0}$.

Hence, constant steps simplify the algorithm, as presented in Sec. \ref{subsec:1storderPDBITE}. Non-uniform steps, on the other hand, allow to optimize further the speed of convergence, but require measuring intermediate states. Finally, in some situations where time steps are small, it could be beneficial to estimate $f_k$ by measuring moments of the Hamiltonian instead of using a Hadamard-test subroutine. We point the reader to Appendix \ref{sec:estimating_fk} for more information.

\subsection{Success probability}

A crucial difference of PDBQITE with respect to the original DB-QITE proposal is the probabilistic nature of our algorithm. While this does not pose a problem when we restrict ourselves to a moderate number of PDBQITE iterations, it can be limiting if we want to perform a large number of them, a recurrent issue for other ITE probabilistic approaches such as \cite{suzuki2025double, kosugi2022imaginary, xie2024probabilistic, yi2025probabilistic}. An important advantage of our approach is that the success rate of every step is relatively high (increasing as the time step becomes smaller), is completely independent of the system size, and can be lower-bounded, as we show in
Lemmas \ref{lemma:prob_of_success_exact_first_order} and \ref{lemma:prob_of_success_first_order}.

\begin{lemma}
    Let the system be in the state $\ket{\psi_k}$ and consider the action of the LCU unitary that implements the non-unitary operator $\mathcal{Q}_k = \mathds{1} + 
    (e^{i\sqrt{s_k}} - 1)f_k(\sqrt{s_k})e^{i\sqrt{s_k}H}$. The probability of measuring the ancilla in the state $\ket{0}$ is:
    \begin{equation}
    P_{\text{succ}} = \Bigg(\frac{1}{1+2\left|\sin(\frac{\sqrt{s_k}}{2})\right|\left|\bra{\psi_{k}}e^{-i\sqrt{s_k}H}\ket{\psi_k}\right|}\Bigg)^2
    \label{eq:probability_of_success_exact_form}
    \end{equation}
\label{lemma:prob_of_success_exact_first_order}
\end{lemma}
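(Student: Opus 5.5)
The plan is to compute the success probability directly from the subnormalised output of the LCU circuit derived just above, and then use Lemma~\ref{lemma:isometry_on_psi_k} to remove the state-dependent norm. When the ancilla is measured in $\ket{0}$, the (unnormalised) post-measurement state of the system is
\begin{equation*}
(\bra{0}\otimes\mathds{1})U_{\mathsf{PREP}}^{\dagger}U_{\mathsf{SEL}}R_z(\phi)U_{\mathsf{PREP}}\ket{0}\ket{\psi_k} = \frac{1}{\alpha_k}\mathcal{Q}_k\ket{\psi_k}.
\end{equation*}
By the Born rule, the probability of this outcome is the squared norm of this vector:
\begin{equation*}
P_{\text{succ}} = \frac{1}{\alpha_k^2}\bra{\psi_k}\mathcal{Q}_k^{\dagger}\mathcal{Q}_k\ket{\psi_k}.
\end{equation*}
Lemma~\ref{lemma:isometry_on_psi_k} states that this inner product equals exactly $1$. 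Hence $P_{\text{succ}} = 1/\alpha_k^2 = 1/(1+|a_k|)^2$.

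Next I evaluate $|a_k| = |e^{i\sqrt{s_k}}-1|\,|f_k(\sqrt{s_k})|$. Factoring out a half-angle phase gives
\begin{equation*}
e^{i\sqrt{s_k}}-1 = e^{i\sqrt{s_k}/2}\left(e^{i\sqrt{s_k}/2}-e^{-i\sqrt{s_k}/2}\right) = 2i\,e^{i\sqrt{s_k}/2}\sin\!\left(\tfrac{\sqrt{s_k}}{2}\right),
\end{equation*}
so that $|e^{i\sqrt{s_k}}-1| = 2|\sin(\sqrt{s_k}/2)|$. Substituting $f_k(\sqrt{s_k}) = \bra{\psi_k}e^{-i\sqrt{s_k}H}\ket{\psi_k}$ then yields Eq.~\eqref{eq:probability_of_success_exact_form}.

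The only delicate point is confirming that the projection step in the circuit really produces amplitude $1/\alpha_k$ times $\mathcal{Q}_k$, including the phase $e^{i\phi}$. The $\mathsf{PREP}$ state carries $\sqrt{|a_k|}$ on $\ket{1}$, and the inverse preparation contributes a second factor $\sqrt{|a_k|}$ via $\bra{1}$. Their product, together with $e^{i\phi}$ from the $R_z$ gate, reconstructs $a_k$. I would also check that the global phase of $R_z$ only multiplies the amplitude by a unit-modulus factor, so the probability is unaffected.

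Once this bookkeeping is verified, the argument consists only of the Born rule, Lemma~\ref{lemma:isometry_on_psi_k}, and the half-angle identity.
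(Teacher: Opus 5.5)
Your proposal is correct and follows essentially the same route as the paper: you apply the Born rule to the subnormalised output $\frac{1}{\alpha_k}\mathcal{Q}_k\ket{\psi_k}$, use Lemma~\ref{lemma:isometry_on_psi_k} to get $P_{\text{succ}} = 1/\alpha_k^2$, and then use $|e^{i\sqrt{s_k}}-1| = 2|\sin(\sqrt{s_k}/2)|$. The only difference is minor: the paper expands $|\Delta|^2$ to get the last identity, whereas you factor out the half-angle phase, and your check of the PREP amplitudes and the $R_z$ global phase, which the paper takes for granted, is also correct.
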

\begin{proof}
    The probability of applying the LCU circuit and measuring the ancilla qubit to be in the state $\ket{0}$ is:
    \begin{equation}
    \begin{aligned}
        P_{\text{succ}} = \frac{\bra{\psi_k}\mathcal{Q}_k^\dagger \mathcal{Q}_k\ket{\psi_k}}{\alpha_k^2} = \frac{1}{\alpha_k^2}
    \end{aligned}
    \end{equation}
where we used the fact that the non-unitary operator $\mathcal{Q}_k$ generates a normalized state when it acts on $\ket{\psi_k}$ (see Lemma \ref{lemma:isometry_on_psi_k}). Next, we have that:
\begin{equation}
    \alpha_k^2 = (1 + |\Delta||f_k(\sqrt{s_k})|)^2
\end{equation}
where $\Delta = (e^{i\sqrt{s_k}}-1)$. We can expand $|\Delta|$:
\begin{gather*}
    |\Delta|^2 = (e^{i\sqrt{s_k}}-1)(e^{-i\sqrt{s_k}}-1)  = 4\sin^2\Big(\frac{\sqrt{s_k}}{2}\Big)
\end{gather*}
Hence, $|\Delta| = 2\Big|\sin(\frac{\sqrt{s_k}}{2})\Big| $, and the probability of success can be written as:
\begin{equation*}
    P_{\text{succ}} = \Bigg(\frac{1}{1+2\left|\sin(\frac{\sqrt{s_k}}{2})\right|\left|\bra{\psi_{k}}e^{-i\sqrt{s_k}H}\ket{\psi_k}\right|}\Bigg)^2
\end{equation*}
\end{proof}

The probability of success in Eq. \eqref{eq:probability_of_success_exact_form} can be lower bounded according to Lemma \ref{lemma:prob_of_success_first_order}.

\begin{lemma}
\label{lemma:prob_of_success_first_order}
    The probability of success is lower bounded as:
    \begin{equation}
        P_\text{succ} \geq \frac{1}{(1+\sqrt{s_k})^2}
    \label{eq:probability_of_success}
    \end{equation}
\end{lemma}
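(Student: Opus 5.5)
The plan is to start from the exact expression for the success probability in Lemma~\ref{lemma:prob_of_success_exact_first_order} and bound its denominator from above. The function $x\mapsto 1/(1+x)^2$ is decreasing for $x\geq 0$. Hence it is enough to show
\begin{equation*}
2\left|\sin\Big(\frac{\sqrt{s_k}}{2}\Big)\right|\,\left|f_k(\sqrt{s_k})\right| \leq \sqrt{s_k},
\end{equation*}
and then substitute this upper bound into Eq.~\eqref{eq:probability_of_success_exact_form}.

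The bound splits into two elementary estimates, one for each factor.

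\textbf{The overlap factor.} We use $|f_k(\sqrt{s_k})| = |\bra{\psi_k}e^{-i\sqrt{s_k}H}\ket{\psi_k}| \leq 1$. This follows from Cauchy--Schwarz, since $e^{-i\sqrt{s_k}H}$ is unitary and $\ket{\psi_k}$ is normalized. Normalization of every iterate is guaranteed inductively by Lemma~\ref{lemma:isometry_on_psi_k}, starting from the normalized input $\ket{\psi_0}$.

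\textbf{The sine factor.} We use the standard inequality $|\sin x|\leq |x|$ for all real $x$. One way to see it is that $\sin$ is $1$-Lipschitz with $\sin 0 = 0$. Applied with $x=\sqrt{s_k}/2\geq 0$, it gives $2|\sin(\sqrt{s_k}/2)|\leq \sqrt{s_k}$.

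Multiplying the two estimates gives $\alpha_k = 1+|a_k| \leq 1+\sqrt{s_k}$. Since $P_{\text{succ}} = 1/\alpha_k^2$, this yields $P_{\text{succ}}\geq 1/(1+\sqrt{s_k})^2$.

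No step is a real obstacle. The only point needing care is that the normalization assumption on $\ket{\psi_k}$ holds at every iteration; this is exactly what Lemma~\ref{lemma:isometry_on_psi_k} supplies. It is also worth noting that the bound does not depend on the system size or on the state. Both estimates are uniform: they use only unitarity and the scalar sine inequality, and no spectral property of $H$.
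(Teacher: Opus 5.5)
Your proof is correct and matches the paper's argument: it bounds the denominator of the exact expression in Lemma~\ref{lemma:prob_of_success_exact_first_order} by $1+\sqrt{s_k}$ using $|\sin(\sqrt{s_k}/2)|\le \sqrt{s_k}/2$ and $|f_k(\sqrt{s_k})|\le 1$. Your remark that the normalization of $\ket{\psi_k}$ follows inductively from Lemma~\ref{lemma:isometry_on_psi_k} is a small added point of care that the paper leaves implicit.
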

\begin{proof}
    Let $g(t) = 1+2\Big|\sin(\frac{\sqrt{t}}{2})\Big|\left|\bra{\psi_{k}}e^{-i\sqrt{t}H}\ket{\psi_k}\right|$. If we use the fact that $|\sin(\frac{\sqrt{t}}{2})\big| \leq \frac{\sqrt{t}}{2}$ for $t\geq 0$ and also that $\left|\bra{\psi_{k}}e^{-i\sqrt{t}H}\ket{\psi_k}\right|\leq 1$ for any unitary operator $e^{-iHt}$, then we have the upper bound $g(t)\leq 1+\sqrt{t}$. Thus:
    \begin{equation}
        P_{\text{succ}} = \frac{1}{g(s_k)^2} \geq \frac{1}{(1+\sqrt{s_k})^2}
    \end{equation}
\end{proof}
Note that tighter upper bounds on the amplitude $\left|\bra{\psi_{k}}e^{-i\sqrt{s_k}H}\ket{\psi_k}\right|$ can lead to tighter lower bounds on the success probability.

\section{Second-order PDBQITE}
\label{sec:second-order-pdbqite}

One can implement larger imaginary time-steps, and reduce the number of required iterations by using higher-order product formulas to approximate the exponential $e^{s[\psi(0), H]}$ with greater accuracy.

\begin{figure*}[t]
\begin{center}
\begin{quantikz}
\lstick{$\ket{0}$}& \gate[2]{U_{\mathsf{PREP}}} &\gate[2]{U_{\text{phase}}} & \qw       & \ctrl{2} & \qw &\gate[2]{U_{\mathsf{PREP}}^{\dagger}} &\meter{} \\
\lstick{$\ket{0}$}      &                           & & \ctrl{1}                & \qw      & \qw & &\meter{} \\
\lstick{$\ket{\psi_k}$} & \qwbundle{n}              && \gate{e^{-i(1-\varphi)\sqrt{s_k}H}} & \gate{e^{i\varphi\sqrt{s_k}H}} & \qw &&
\end{quantikz}
\end{center}
\caption{LCU circuit to perform one iteration of second-order PDBQITE. Measuring the ancilla qubits in the $\ket{0}^{\otimes 2}$ state will result in one step of imaginary-time evolution (for time $s_k$) for the system on the last $n$-qubit register.}
\label{figure:second_order_LCU}
\end{figure*}

\subsubsection{Second-order DB-QITE}

Using a third-order product formula \cite{chen2022efficient,childs2012hamiltonian} to approximate $e^{s[\psi(0), H]}$, leads to the following recursive relation of unitaries:
\begin{equation}
\begin{gathered}
        V_{k+1} = e^{i\varphi\sqrt{s_k}H}V_ke^{i\varphi \sqrt{s_k}\ketbra{0}}V_k^{\dagger}e^{-i\sqrt{s_k}H}V_k \\e^{-i(\varphi+1)\sqrt{s_k}\ketbra{0}}V_k^{\dagger}e^{i(1-\varphi)\sqrt{s_k}H}V_k
\end{gathered}
\end{equation}
where $\varphi = (\sqrt{5}-1)/2$.

In the traditional DB-QITE algorithmic framework, constructing these unitaries results in a quantum algorithm that requires $\mathcal{O}(5^k)$ real-time evolution operators at each iteration $k$, versus $\mathcal{O}(3^k)$ for the first-order DB-QITE recursion of Eq. (\ref{eq:recursion_relation}), making the shift from first-order to second-order exponentially costly. 

\subsubsection{Second-order PDBQITE}

As derived in Appendix \ref{app:higher_order_formulas}, PDBQITE can exploit the
third-order product formula for $e^{s[\psi(0), H]}$ that leads to a new non-unitary operator $\mathcal{Q}_k^{(2)}$:
\begin{equation}
\mathcal{Q}_k^{(2)} = \mathds{1} + b_{1,k} e^{-i(1-\varphi)\sqrt{s_k}H} + b_{2,k} e^{i\varphi\sqrt{s_k}H}
\end{equation}
where the coefficients $b_{1,k}, b_{2,k}$ are defined in Eq. \eqref{eq:coefficients_2nd_order}. 
The circuit for implementing second-order PDBQITE is illustrated in Fig. \ref{figure:second_order_LCU}, and requires an LCU circuit with two ancillary qubits and their corresponding controlled-unitary operations. Crucially, this circuit has only a constant multiplicative factor in resource cost increase compared to the exponential increase created by the original DBQITE framework. As in the first-order case, the coefficients  $b_{1,k}$ and  $b_{2,k}$ can be estimated from the output state of the last iteration, or from the time-series of the input state when the time steps are constant, as discussed in Appendix \ref{app:higher_order_formulas} and Appendix \ref{sec:estimating_fk_2order}

Our new formulation allows us again to bound the success probability as given in Lemma \ref{lemma:second_order_prob_success}.

\begin{lemma}
\label{lemma:second_order_prob_success}
    The probability of success $P_{\text{succ}}^{(2)}$ for one iteration of the second-order PDBQITE can be lower bounded as:
    \begin{equation}
        P_{\text{succ}}^{(2)} \geq  \frac{1}{\Big(1 + \sqrt{5s_k} + s_k\Big)^2}.
    \end{equation}
\end{lemma}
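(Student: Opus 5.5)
The plan mirrors the first-order argument (Lemmas \ref{lemma:isometry_on_psi_k}, \ref{lemma:prob_of_success_exact_first_order} and \ref{lemma:prob_of_success_first_order}). I would first write $P^{(2)}_{\text{succ}} = \norm{\mathcal{Q}^{(2)}_k\ket{\psi_k}}^2/\alpha_k^2$ with $\alpha_k = 1+|b_{1,k}|+|b_{2,k}|$. Next I would show that the numerator equals $1$. Finally I would bound $|b_{1,k}|+|b_{2,k}|\le \sqrt{5s_k}+s_k$.

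To get explicit coefficients, I would derive $\mathcal{Q}^{(2)}_k$ directly from the second-order DB-QITE step acting on $\ket{\psi_k}$. Write $t=\sqrt{s_k}$, $P=\ketbra{\psi_k}$ and $A=e^{i(1-\varphi)tH}$. The step is
\begin{equation*}
\ket{\psi_{k+1}} = e^{i\varphi t H}e^{i\varphi t P}e^{-itH}e^{-i(1+\varphi)tP}A\ket{\psi_k},
\end{equation*}
and each projector exponential expands as $e^{i\theta P}=\mathds{1}+(e^{i\theta}-1)P$. Set $c_1=e^{-i(1+\varphi)t}-1$, $c_2=e^{i\varphi t}-1$ and $g=\bra{\psi_k}A\ket{\psi_k}$. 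Pushing the state through the factors one at a time gives:
\begin{itemize}
\item after the first projector factor, $A\ket{\psi_k}+c_1 g\ket{\psi_k}$;
\item after $e^{-itH}$, $e^{-itH}A\ket{\psi_k}+c_1 g\, e^{-itH}\ket{\psi_k}$;
\item after the second projector factor, the previous vector plus $c_2 h\ket{\psi_k}$, where $h=\bra{\psi_k}e^{-itH}A\ket{\psi_k}+c_1 g\bra{\psi_k}e^{-itH}\ket{\psi_k}$;
\item after the final $e^{i\varphi tH}$, using $e^{i\varphi tH}e^{-itH}A=\mathds{1}$, the state $\ket{\psi_k}+c_1 g\,e^{-i(1-\varphi)tH}\ket{\psi_k}+c_2 h\,e^{i\varphi tH}\ket{\psi_k}$.
\end{itemize}
This identifies $b_{1,k}=c_1 g$ and $b_{2,k}=c_2 h$, consistent with Eq. \eqref{eq:coefficients_2nd_order}. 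It also shows that $\mathcal{Q}^{(2)}_k\ket{\psi_k}$ equals a product of unitaries applied to $\ket{\psi_k}$. Hence $\norm{\mathcal{Q}^{(2)}_k\ket{\psi_k}}=1$, the analogue of Lemma \ref{lemma:isometry_on_psi_k}, and therefore $P^{(2)}_{\text{succ}}=1/\alpha_k^2$ exactly.

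For the bound, I would use $|e^{i\theta}-1|=2|\sin(\theta/2)|\le|\theta|$ together with the fact that every overlap of $\ket{\psi_k}$ with a unitary image of itself has modulus at most $1$. This gives $|b_{1,k}|\le (1+\varphi)t$ and $|h|\le 1+(1+\varphi)t$, so $|b_{2,k}|\le \varphi t\,(1+(1+\varphi)t)$. Summing,
\begin{equation*}
|b_{1,k}|+|b_{2,k}|\le (1+2\varphi)t+\varphi(1+\varphi)t^2=\sqrt{5}\,t+t^2,
\end{equation*}
because $\varphi=(\sqrt5-1)/2$ satisfies $1+2\varphi=\sqrt5$ and $\varphi^2+\varphi=1$. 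Substituting $t=\sqrt{s_k}$ yields $\alpha_k\le 1+\sqrt{5s_k}+s_k$, which is the claimed bound.

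The main obstacle is the bookkeeping in the expansion. One must track correctly the $\ket{\psi_k}$ component created by the first projector factor, since it feeds into $h$ and produces the $s_k$ term. One must also check that the $\mathds{1}$ coefficient is exactly $1$, which relies on the exponents summing to zero. If the paper's Eq. \eqref{eq:coefficients_2nd_order} normalizes $b_{1,k}$ and $b_{2,k}$ differently, for example by absorbing global phases, I would verify that only phases change, since the moduli are all the bound uses.
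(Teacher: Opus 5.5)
Your proposal is correct and follows the paper's proof essentially step for step. Like the paper, you expand $e^{i\theta P}=\mathds{1}+(e^{i\theta}-1)P$ to obtain the same coefficients $b_{1,k},b_{2,k}$, use $\norm{\mathcal{Q}^{(2)}_k\ket{\psi_k}}=1$ to get $P^{(2)}_{\text{succ}}=1/\beta_k^2$, and bound $|b_{1,k}|\le(1+\varphi)\sqrt{s_k}$ and $|b_{2,k}|\le\varphi\sqrt{s_k}+s_k$ via $|e^{i\theta}-1|\le|\theta|$, $|f_k|\le 1$, $\varphi^2+\varphi=1$ and $1+2\varphi=\sqrt5$. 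Your justification of the unit norm (that $\mathcal{Q}^{(2)}_k\ket{\psi_k}$ equals a product of unitaries applied to $\ket{\psi_k}$) is just a more explicit version of the paper's appeal to the analogue of Lemma~\ref{lemma:isometry_on_psi_k}.
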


\begin{proof}
    The proof can be found in Appendix \ref{app:higher_order_formulas}.
\end{proof}

Despite a decrease in the per-step probability of success, our numerical experiments (see Sec. \ref{sec:experiments}) demonstrate that the capability of performing larger imaginary time steps can result in overall savings in total success probability and circuit depth in reaching a specific target state.

\section{Discussion}
\label{sec:discussion}

In this paper, we introduced PDBQITE, a probabilistic imaginary-time evolution algorithm inspired by the double-bracket quantum ITE (DB-QITE) approach of Gluza et al. \cite{gluza2026double}. Whenever a step succeeds, PDBQITE reproduces the corresponding DB-QITE iterate exactly, and therefore inherits its guarantees: the energy is non-increasing and the fidelity with the ground state is non-decreasing at every step. What PDBQITE avoids is the $\mathcal{O}(3^k)$ growth of circuit depth with the number of iterations $k$ that renders DB-QITE impractical on near-term devices. In its place, each iteration is realized by a single, shallow LCU circuit whose depth is linear in the number of iterations, at the price of turning the evolution into a probabilistic one. 

This probabilistic cost is mild in the regime that matters. Although the cumulative success probability decays exponentially in the number of iterations, the per-step success probability is high and, crucially, independent of the system size. This makes it realistic to carry out the handful of steps needed on current devices, a regime that was previously out of reach for DB-QITE.

We provided numerical evidence of its performance on preparing the ground states of small molecules, where it outperforms PITE \cite{kosugi2022imaginary}, to our knowledge the strongest family of near-term probabilistic ITE algorithms to date. We further found that second-order PDBQITE can outperform the first-order variant. Although each step has a lower success probability and requires an extra ancilla and roughly twice the per-step circuit, it reaches a given target accuracy in few enough iterations that its total success probability is higher at comparable overall circuit depth. Finally, we showed that the choice of initial state is crucial for fast convergence to the ground state. In particular, for classical optimization problems a QAOA circuit provides an excellent initial state for ITE; once a QAOA circuit has been prepared, appending a few PDBQITE steps can be more effective than adding further QAOA layers, in several cases pushing the approximation ratio close to 1.

Finally, we introduced several variants of PDBQITE that improve its performance. The first is a second-order PDBQITE algorithm, which permits larger imaginary time-steps at the cost of a single additional ancillary qubit and one extra controlled real-time evolution. We provide numerical evidence that, for a fixed evolution time $\tau$, the resulting reduction in the number of required iterations more than compensates for the lower per-step success probability and the larger per-step circuit, yielding a net gain in the total probability of success. The second variant boosts the per-step success probability through a technique inspired by oblivious amplitude amplification \cite{berry2014exponential} and fixed-point amplitude amplification \cite{yoder2014fixed}; this comes at the price of slower convergence, but distorts the target state only slightly.

Our work opens up many directions for future research. At first, it is important to understand how higher-order approximations perform in practice and at what cost. As we showed in this paper, adding a single ancilla qubit and an extra controlled-real-time evolution for second-order PDBQITE can improve its performance. This indicates that it is important to understand the resource cost of these higher-order approximations and how much they can enhance the performance of PDBQITE. Secondly, it will be important to investigate the effect of noise in our algorithm. Particularly important will be to investigate the effect of shot-noise in the evaluation of the time-series that later define the LCU circuits used in the following step, as this will degrade the quality of the state as the number of steps grows. It will also be beneficial to understand how to appropriately choose the time steps $s_k$, at each iteration. Understanding how to choose the time steps $s_k$ can lead to significantly faster convergence, which requires circuits with a noticeable smaller depth. However, more work is needed to understand how to do this in practice efficiently.

\bibliographystyle{unsrt}
\bibliography{References_1}

\onecolumngrid

\appendix

\section{Comparison with existing methods}
\label{app:comparison_with_other_methods}

An early fault-tolerant family of ITE algorithms builds on \textit{quantum signal processing} and spectral-filtering ideas, preparing a ground state in time polynomial in $\tau$ whenever the initial state has an inverse-polynomial overlap with it \cite{silva2023fragmented,dong2022ground,chan2023simulating,zhang2025quantum,jo2026deterministic}. Reference \cite{zhang2025quantum} sets a particularly strong benchmark with a per-shot acceptance probability that scales as $\Theta(\gamma^2)$, \emph{independently of $\tau$}, at query depth $\widetilde{O}(\tau)$ and with a single ancilla. Yet, even with this substantial improvement over earlier fault-tolerant proposals, its circuits remain far too deep for near-term hardware, demanding long, coherent interleavings of controlled forward and inverse evolution. Moreover, these methods require a good prior estimate of the ground-state energy in order to place the spectral filter.

A second, well-established class of NISQ algorithms for ITE is deterministic, ancilla-free, and uses no controlled evolution. The first lineage, QITE, is tomography-based \cite{motta2020determining,gomes2020efficient,sun2021quantum,huang2023efficient,delcastillo2025multiple}; the second adapts ideas from variational quantum algorithms \cite{mcardle2019variational,yuan2019theory,gomes2021adaptive,gacon2024variational,anuar2024operator, kolotouros2025accelerating}. QITE \cite{motta2020determining} buys determinism at a cost that is exponential in the correlation length of the imaginary-time-evolved state and that further requires solving an ill-conditioned linear system at every step. Variational methods \cite{mcardle2019variational} instead fix the circuit depth by construction, independently of $\tau$, but cap the achievable accuracy at the expressivity of the ansatz, offer no convergence guarantee, and incur a substantial cost in the parameter-optimization loop.

Our proposed algorithm belongs to the family of \textit{double-bracket algorithms} \cite{gluza2026double,zander2025role,suzuki2025double,robbiati2026double,mcmahon2025equating}. The DB-QITE lineage is coherent, ancilla-free and entirely post-selection-free, with a per-step cooling guarantee tying the energy decrease to the current energy variance. However, the biggest obstruction is depth, as the Hamiltonian-simulation calls grow as $O(3^k)$ and only a moderate number of them become realistically reachable. Our proposed algorithm PDBQITE targets depth growth at the cost of adding a probabilistic success that decays exponentially with the number of steps. This is mitigated by its relatively high success rate per step, making it possible to implement a significantly larger number of iterations than previous proposals.

Other near-term probabilistic quantum algorithms for ITE have also been proposed prior to this work \cite{kosugi2022imaginary,nishi2024quadratic,nishi2023optimal,leadbeater2024non,ejima2025probabilistic}. One of the most popular probabilistic ITE methods, named PITE, was proposed by Kosugi et al. in \cite{kosugi2022imaginary}. It requires one ancilla qubit and two controlled real-time evolutions, comparable in cost to our second-order approach. However, PITE achieves weaker results than second-order PDBQITE, and most times weaker than the first-order, in the problems we analyzed (LiH and Heisenberg ground states), a result we believe will generalize to other problems (see Appendix \ref{sec:additional_experiments} for additional experiments). Another potential weakness of PITE is the requirement of an additional hyperparameter $m_0$ that affects the convergence (and robustness) of the algorithm, but also controls the per-step probability of success. The fact that PDBQITE lacks this hyperparameter, together with its better performance, is one of the big advantages of our algorithm over PITE. Other probabilistic ITE methods, such as \cite{xie2024probabilistic, leadbeater2024non}, exploit cheaper gadgets by Trotterizing the imaginary-time evolution and require measuring one ancilla qubit per Hamiltonian term, which can become impractical for Hamiltonians decomposed into a large number of Pauli strings.


Finally, there are other approaches based on LCU gadgets, multi-copies route and a statistical importance sampling approach. LCU gadgets \cite{liu2021probabilistic,xie2024probabilistic,rrapaj2025exact,yi2025probabilistic,kim2026finite} use only a basis change, a controlled $R_y$ and a measurement, giving a shallow per-step circuit; however, these methods require one post-selection per Hamiltonian term. A multi-copy route \cite{schwartzman2026imaginary} completely removes post-selection via a deterministic unitary dilation with unit success probability, paying instead in $2^n$ (tree) or $2n$ (hedge) copies of the full register \cite{alipour2025state}. Finally, if one is interested only to compute expectation values, it is possible to design Monte-Carlo sampling strategies based on averaging over randomly sampled evolutions \cite{huo2023error,martyn2025halving,ray2026quasiprobabilistic,arulandu2026trading,tang2026interference}, which reduce the quantum circuit depth at the price of increasing the number of samples required. 

\section{Extending to higher-order product formulas}
\label{app:higher_order_formulas}

In our analysis, in the main text, we considered the second-order approximation of the operator $e^{s[\psi(0), H]}$ (see Eq. \eqref{eq:first_order_approximation}) as:
\begin{equation*}
    e^{s[\psi(0), H]} = e^{i\sqrt{s}H}e^{i\sqrt{s}\psi(0)}e^{-i\sqrt{s}H}e^{-i\sqrt{s}\psi(0)} + \mathcal{O}(s^{3/2})
\end{equation*}
We can proceed to the next-order approximation \cite{chen2022efficient} and get:
\begin{equation*}
    e^{s[\psi(0), H]} = e^{i\varphi\sqrt{s}H}e^{i\varphi\sqrt{s}\psi(0)}e^{-i\sqrt{s}H}e^{-i(\varphi+1)\sqrt{s}\psi(0)}e^{i(1-\varphi)\sqrt{s}H}e^{i\sqrt{s}\psi(0)} + \mathcal{O}(s^2)
\end{equation*}
where $\varphi = (\sqrt{5}-1)/2$. As a result, assuming that the system is in the state $\ket{\psi_k}$ at iteration $k$, we can acquire the state $\ket{\psi_{k+1}}$ (up to a global phase $e^{i\sqrt{s_k}}$) as:
\begin{equation}
    \ket{\psi_{k+1}} = e^{i\varphi\sqrt{s_k}H}e^{i\varphi\sqrt{s_k}\ketbra{\psi_k}}e^{-i\sqrt{s_k}H}e^{-i(\varphi+1)\sqrt{s_k}\ketbra{\psi_k}}e^{i(1-\varphi)\sqrt{s_k}H}\ket{\psi_k}
\label{eq:ite_evolution_second_order}
\end{equation}
Similarly to our analysis in the main text, we can rewrite Eq. \eqref{eq:ite_evolution_second_order} as:
\begin{equation*}
    \begin{gathered}
        \ket{\psi_{k+1}} = e^{i\varphi\sqrt{s_k}H}e^{i\varphi\sqrt{s_k}\ketbra{\psi_k}}e^{-i\sqrt{s_k}H}e^{-i(\varphi+1)\sqrt{s_k}\ketbra{\psi_k}}e^{i(1-\varphi)\sqrt{s_k}H}\ket{\psi_k} \\
        = e^{i\varphi\sqrt{s_k}H}\left(\mathds{1} + (e^{i\varphi\sqrt{s_k}}-1)\ket{\psi_k}\bra{\psi_k}\right)e^{-i\sqrt{s_k}H}\left(\mathds{1} +(e^{-i(\varphi+1)\sqrt{s_k}}-1)\ket{\psi_k}\bra{\psi_k}\right)e^{i(1-\varphi)\sqrt{s_k}H} \ket{\psi_k}\\
        = e^{i\varphi \sqrt{s_k}H}\Big(e^{-i\sqrt{s_k}H}  + (e^{-i(\varphi+1)\sqrt{s_k}}-1)e^{-i\sqrt{s_k}H}\ketbra{\psi_k} +(e^{i\varphi\sqrt{s_k}}-1)\ketbra{\psi_k}e^{-i\sqrt{s_k}H}\\
        +(e^{i\varphi\sqrt{s_k}}-1)(e^{-i(\varphi+1)\sqrt{s_k}}-1)\ketbra{\psi_k} e^{-i\sqrt{s_k}H}\ketbra{\psi_k}\Big)e^{i(1-\varphi)\sqrt{s_k}H}\ket{\psi_k} \\
        =\Big( e^{-i\sqrt{s_k}(1-\varphi)H} +(e^{-i(\varphi+1)\sqrt{s_k}}-1)e^{-i\sqrt{s_k}(1-\varphi)H}\ketbra{\psi_k} + (e^{i\varphi\sqrt{s_k}}-1)e^{i\varphi\sqrt{s_k}H}\ketbra{\psi_k} e^{-i\sqrt{s_k}H} \\
        +(e^{i\varphi\sqrt{s_k}}-1)(e^{-i(\varphi+1)\sqrt{s_k}}-1)e^{i\varphi\sqrt{s_k}H}\ketbra{\psi_k} e^{-i\sqrt{s_k}H}\ketbra{\psi_k}\Big)e^{i(1-\varphi)\sqrt{s_k}H}\ket{\psi_k}\\
        = (\mathds{1} + b_{1,k} e^{-i(1-\varphi)\sqrt{s_k}H} + b_{2,k} e^{i\varphi\sqrt{s_k}H})\ket{\psi_k}
    \end{gathered}
\end{equation*}
As such:
\begin{equation}
        \mathcal{Q}_k^{(2)} = (\mathds{1} + b_{1,k} e^{-i(1-\varphi)\sqrt{s_k}H} + b_{2,k} e^{i\varphi\sqrt{s_k}H})\ket{\psi_k} 
\label{eq:nonunitary_operator_second_order}
\end{equation}
where
\begin{gather}
    b_{1,k} = (e^{-i(\varphi+1)\sqrt{s_k}}-1)f_k\big((\varphi-1)\sqrt{s_k}\big) \\
    b_{2,k} = \Big(e^{i\varphi\sqrt{s_k}}-1\Big)\Big(f_k\big(\varphi\sqrt{s_k}\big) + (e^{-i(\varphi+1)\sqrt{s_k}}-1)f_k\big(\sqrt{s_k}\big)f_k\big((\varphi-1)\sqrt{s_k}\big)\Big)
\label{eq:coefficients_2nd_order}
\end{gather}
In this case, the LCU circuit will require 2 ancillas and 2 controlled-real-time evolution operators, as seen in Fig. \ref{figure:second_order_LCU}. The $U_{\mathsf{PREP}}$ circuit in this case is different from the single qubit case and is constructed such that:
\begin{equation}
    U_{\mathsf{PREP}}\ket{00} = \frac{1}{\sqrt{\beta_k}}\Big(\ket{00} + \sqrt{|b_{1,k}|}\ket{01} + \sqrt{|b_{2,k}|}\ket{10}\Big)
\end{equation}
with the coefficients $b_{\ell,k}$ given in Eq. \eqref{eq:coefficients_2nd_order}, and $\beta_k = 1+|b_{1,k}|+|b_{2,k}|$. $U_{\mathsf{PREP}}$ can be implemented by first applying $R_y(2\arctan\sqrt{|b_{2,k}|/(1+|b_{1,k}|)}$ on the first qubit, and then a controlled-$R_y(2\arctan\sqrt{|b_{1,k}|})$ with the first qubit as the control (and controlled on being in the $\ket{0}$ state) and the second qubit as the target.

After applying $U_{\mathsf{PREP}}$, we need to apply the correct phases $\phi_1,\phi_2$ corresponding to the angles of the complex coefficients $b_{\ell,k}$. This can be done easily by applying two phase gates:
\begin{equation}
    U_{\text{phase}} \equiv  P_1(\phi_2)\otimes P_2(\phi_1) 
\end{equation}
Next, we need to apply the $\mathsf{SELECT}$ operator which applies a series of controlled operations:
\begin{equation}
    U_{\mathsf{SEL}} = \Big(\ketbra{00} + \ketbra{11}\Big)\otimes \mathds{1} + \ketbra{01}\otimes e^{-i(1-\varphi)\sqrt{s_k}H} + \ketbra{10}\otimes e^{i\varphi\sqrt{s_k}H}
\end{equation}
Finally, we apply $U_{\mathsf{PREP}}^{\dagger}$ and measure in the computational basis. If the ancilla qubits are measured to be in the $\ket{00}$ state, then the post-measurement state is, up to the discretization error of the group commutator formula, proportional to $e^{s_k[\psi_k, H]}|\psi_k\rangle$ (which itself approximates the imaginary-time-evolved state to order $s_k^2$). Thus, we conclude that:
\begin{equation}
(\bra{00}\otimes\mathds{1})U_{\mathsf{PREP}}^{\dagger}U_{\mathsf{SEL}}U_{\text{phase}}U_{\mathsf{PREP}}\ket{00}\ket{\psi_k} = \frac{1}{\beta_k}\mathcal{Q}_k^{(2)}\ket{\psi_k}
\end{equation}
where the above state is subnormalized. Similarly to our previous analysis, we can bound the probability of success for implementing $\mathcal{Q}_k^{(2)}$. First, note that (similar to Lemma \ref{lemma:isometry_on_psi_k}):
\begin{equation}
\norm{\mathcal{Q}_k^{(2)}\ket{\psi_k}} = 1
\end{equation}
Repeating the LCU calculation of Lemma \ref{lemma:prob_of_success_first_order}, we find that the probability of success is:
\begin{equation}
P^{(2)}_{\text{succ}} = \frac{\bra{\psi_k}\bigl(\mathcal{Q}_k^{(2)}\bigr)^\dagger \mathcal{Q}_k^{(2)} \ket{\psi_k}}{\beta^2_k} = \frac{1}{\beta^2_k}.
\end{equation}
Thus, by bounding each coefficient $b_{\ell,k}$, we can get a lower bound on the success probability. Starting from $b_{1,k}$, we have
\begin{equation*}
    b_{1,k} = \bigl(e^{-i(\varphi+1)\sqrt{s_k}}-1\bigr)f_k\bigl((\varphi-1)\sqrt{s_k}\bigr).
\end{equation*}
Using $|e^{ix}-1| = 2|\sin(x/2)| \leq |x|$ for $x \in \mathbb{R}$, and $|f_k(t)| = |\bra{\psi_k}e^{-iHt}\ket{\psi_k}| \leq 1$, we get:
\begin{equation}
|b_{1,k}| \leq (\varphi+1)\sqrt{s_k}.
\label{eq:a1_bound}
\end{equation}
Similarly, for $b_{2,k}$ we have:
\begin{equation*}
  b_{2,k} = \bigl(e^{i\varphi\sqrt{s_k}}-1\bigr)\Big[f_k(\varphi\sqrt{s_k}) + \bigl(e^{-i(\varphi+1)\sqrt{s_k}}-1\bigr)f_k(\sqrt{s_k})f_k\bigl((\varphi-1)\sqrt{s_k}\bigr)\Big].  
\end{equation*}
If we then use the triangle inequality, and the same two bounds as above, we have:
\begin{equation}
|b_{2,k}| \leq \varphi\sqrt{s_k} + \varphi(\varphi+1)s_k.
\end{equation}
Next, we use the fact that $\varphi^{2} = 1 - \varphi$, hence $\varphi(\varphi+1) = \varphi^2 + \varphi = 1$, and thus:
\begin{equation}
|b_{2,k}| \leq\varphi\sqrt{s_k} + s_k.
\label{eq:a2_bound}
\end{equation}
Finally, if we combine the bounds in Eqs. \eqref{eq:a1_bound},\eqref{eq:a2_bound}, we get:
\begin{equation*}
\beta_k \leq 1 + (\varphi+1)\sqrt{s_k} + \varphi\sqrt{s_k} + s_k = 1 + (2\varphi+1)\sqrt{s_k} + s_k =  1 + \sqrt{5s_k} + s_k.
\end{equation*}
Thus, we can conclude that the probability of success for the second-order PDBQITE is lower bounded as:
\begin{equation}
    P_{\text{succ}}^{(2)} = \frac{1}{\beta^2_k} \geq \frac{1}{\Big(1 + \sqrt{5s_k} + s_k\Big)^2}.
\end{equation}

To understand the advantage of second-order PDBQITE method against first-order PDBQITE, we performed numerical experiments to understand how much faster we can converge using the second-order method. As we have already discussed, the ability to take larger imaginary-time steps without sacrificing per-step accuracy reduces the total number of steps required to reach a target ground-state fidelity.

\begin{figure}
    \centering
\includegraphics[width=0.5\linewidth]{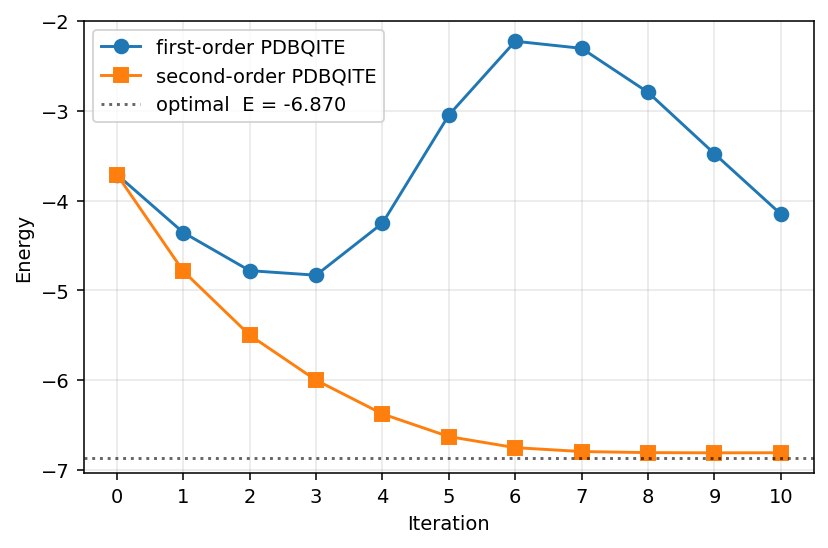}
    \caption{Comparison of first- and second-order PDBQITE for large timesteps $s_k$. For both methods, the system is initialized in the $\ket{+}^{\otimes 10}$ state. Then we apply PDBQITE for 10 steps with a fixed timestep $s_k=0.8$.}
    \label{fig:order1_vs_order2_comparison}
\end{figure}

 In Fig. \ref{fig:order1_vs_order2_comparison}, we plot the energy trajectory for both first- and second-order PDBQITE. The figure corresponds to a MaxCut instance for a 10 node, randomly weighted 3-regular graph. The system is initialized in the $\ket{\psi_0} = \ket{+}^{\otimes 10}$ state, and the timesteps are chosen to be constant with $s=0.8$. As is clearly illustrated, the second-order PDBQITE is necessary if we want to remain close to the actual ITE state, when choosing large timesteps $s$. On the other hand, for the first-order PDBQITE, we can see that after 4 iterations, the energy is increasing, which means that the resulting state is a bad approximation of the ITE state, and thus we need to decrease the timestep size to remain close it.

\section{Estimating $f_k(t)$}
\label{sec:estimating_fk}

As we discussed in previous sections, the implementation of our algorithm requires the estimation of certain expectation values. Specifically, to generate the state $\ket{\psi_{k+1}}$, we need to either apply $\mathcal{Q}_k$ or $\mathcal{Q}_k^{(2)}$ depending on whether we want to use first- or second-order PDBQITE. In the former case, we need to estimate the expectation value $f_k(\sqrt{s_k})$, while in the latter case we need to estimate the expectation values $f_k(\varphi\sqrt{s_k})$, $f_k(\sqrt{s_k})$, and $f_k((\varphi-1)\sqrt{s_k})$ respectively. Recall that $f_k(t)$ is defined as:
\begin{equation}
    f_k(t) := \bra{\psi_k}e^{-iHt}\ket{\psi_k}
\label{eq:overlap}
\end{equation}
We start by estimating $f_0(\sqrt{s}) = \bra{\psi_0}e^{-iH\sqrt{s}}\ket{\psi_0}$ for the initial state $\ket{\psi_0}$. This can be done easily by executing two Hadamard tests as seen in Figure \ref{fig:hadamard_test}. By estimating the probability of measuring $\ket{0}$ on both circuits, we can estimate $f_0(\sqrt{s})$. If $P_R$ is the probability of measuring $\ket{0}$ on the left circuit and $P_I$ is the probability of measuring $\ket{0}$ on the right circuit, then $f_0(\sqrt{s})$ can be estimated as:
\begin{equation}
    f_0(\sqrt{s}) = (2P_R-1) + i (2P_I-1)
\end{equation}

\begin{figure*}
    \begin{quantikz}
\lstick{$\ket{0}$}      & \gate{H}     & \ctrl{1}          & \gate{H}  & \meter{} \\
\lstick{$\ket{\psi_0}$} & \qwbundle{n} & \gate{e^{-iH\sqrt{s}}}  & \qw
\end{quantikz}
\; \;
\begin{quantikz}
    \lstick{$\ket{0}$}& \gate{H}     & \ctrl{1}&\gate{S^{\dagger}}          & \gate{H}  & \meter{} \\
\lstick{$\ket{\psi_0}$} & \qwbundle{n} & \gate{e^{-iH\sqrt{s}}} &\qw& \qw
\end{quantikz}
\caption{Hadamard test to estimate $f_0(\sqrt{s}) = \bra{\psi_0}e^{-iH\sqrt{s}}\ket{\psi_0}$. The circuit on the left is used to estimate the real part of $f_0(\sqrt{s})$, while the circuit on the right is used to estimate its imaginary part.}
\label{fig:hadamard_test}
\end{figure*}

However, estimating $f_k$ for different quantum states $\ket{\psi_k}$ throughout the algorithm can become impractical as the number of iterations grows. The reason is that as $k$ grows, the probability of successfully preparing $\ket{\psi_k}$ shrinks exponentially with $k$. In this section, we will first show how one can estimate $f_k(t)$ by only performing measurements on the initial state $\ket{\psi_0}$, but also show that when the timesteps $s_k$ are constant throughout the evolution, then the quantities $f_k$ change only by a small amount. As a result, in practice, $f_k$ can be taken as a constant for a large number of iterations $k$.

\subsection{First-order PDBQITE}
\label{sec:estimating_fk_1order}

We start by showing that one can avoid estimating $f_k(t)$ by measuring $\ket{\psi_k}$ at each iteration. Let $\sum_ks_k=\tau$ be the discretization of the total evolution time $\tau$. We are interested in estimating $f_{k+1}(\sqrt{s_{k+1}})$, given that we have estimated $f_k(\sqrt{s_k})$ at the previous iteration. We have:
\begin{equation}
    f_{k+1}(\sqrt{s_{k+1}}) = \bra{\psi_{k+1}}e^{-iH\sqrt{s_{k+1}}}\ket{\psi_{k+1}}=
    \bra{\psi_k}\mathcal{Q}_k^{\dagger}e^{-iH\sqrt{s_{k+1}}}\mathcal{Q}_k\ket{\psi_k}
\label{eq:overlap-k+1_step}
\end{equation}
where $\mathcal{Q}_k\ket{\psi_k} = \ket{\psi_{k+1}}$ and $\mathcal{Q}_k$ is given by Eq. \eqref{eq:linear_combination_unitary}. Expanding $\mathcal{Q}_k^{\dagger}e^{-iH\sqrt{s_{k+1}}}\mathcal{Q}_k$ gives:
\begin{equation}
\begin{gathered}
    \mathcal{Q}_k^{\dagger}e^{-iH\sqrt{s_{k+1}}}\mathcal{Q}_k = (\mathds{1}+a_{k}^*e^{-iH\sqrt{s_k}})e^{-iH\sqrt{s_{k+1}}}(\mathds{1}+a_{k}e^{iH\sqrt{s_k}})\\
    =(1+|a_{k}|^2)e^{-iH\sqrt{s_{k+1}}} + a_{k}e^{-iH(\sqrt{s_{k+1}}-\sqrt{s_k})}+a_{k}^*e^{-iH(\sqrt{s_{k+1}}+\sqrt{s_k})}
\end{gathered}
\end{equation}
Plugging it into Eq. \eqref{eq:overlap-k+1_step}, and using the fact that $a_{k}=(e^{i\sqrt{s_k}}-1)f_k(\sqrt{s_k})$ results in:
\begin{equation}
\begin{aligned}
    f_{k+1}(\sqrt{s_{k+1}}) &= \left(1 + 4\sin^2\left(\frac{\sqrt{s_k}}{2}\right)|f_k(\sqrt{s_k})|^2\right)f_k(\sqrt{s_{k+1}}) \\
    &+ \left(e^{i\sqrt{s_k}}-1\right)f_k(\sqrt{s_k})f_k\left(\sqrt{s_{k+1}}-\sqrt{s_k}\right)\\
    &+\left(e^{-i\sqrt{s_k}}-1\right)f_k^*(\sqrt{s_k})f_k\left(\sqrt{s_{k+1}}+\sqrt{s_k}\right)
\end{aligned}
\label{eq:overlap_general_equation}
\end{equation}
We will now assume a constant step size $s$ at each iteration.
As we can see from Eq. \eqref{eq:overlap_general_equation}, when the time step is assumed to be constant throughout the evolution, the overlaps $f_k$ follow a recursive relation. This means that we can estimate $f_k$ by measuring $f_0(t)$ using different timesteps $t$.

For example, suppose that we want to perform 3 iterations of PDBQITE. At the start (step 0), we need to estimate $f_0(\sqrt{s})$. In step 1, we need $f_0(\sqrt{s})$ and $f_0(2\sqrt{s})$. Using these two quantities, we can estimate $f_1(\sqrt{s})$ using Eq. \eqref{eq:overlap_general_equation}. Then in step 2, in order to estimate $f_2(\sqrt{s})$, we need $f_1(\sqrt{s})$, which we calculated in the previous iteration, but also $f_1(2\sqrt{s})$. To estimate the latter, we need $f_0(2\sqrt{s})$ that we estimated in Step 1, but also $f_0(3\sqrt{s})$. Finally, we continue with step 3 where we need $f_3(\sqrt{s})$. To estimate it, we need $f_2(\sqrt{s})$, which we calculated in step 2, but also $f_2(2\sqrt{s})$. To estimate the latter, we need $f_1(2\sqrt{s})$ which we calculated in the previous step, but also $f_1(3\sqrt{s})$ which can be calculated by measuring $f_0(3\sqrt{s})$ and $f_0(4\sqrt{s})$. Thus, we can conclude that at each iteration $k$ of PDBQITE, we need to estimate one additional expectation value:
\begin{equation}
\bra{\psi_0}e^{-iH(k+1)\sqrt{s}}\ket{\psi_0}
\end{equation}
as well as prior information to estimate $f_{k+1}(\sqrt{s})$ using Eq. \eqref{eq:overlap_general_equation}. Note, however, that this becomes harder to calculate for non-commuting Hamiltonians, as the total time-step grows linearly and as a result, more Trotter steps are required to estimate this quantity with the same accuracy. On the other hand, for classical optimization problems such as MaxCut, this does not pose a problem, as the time is just the input on the $R_z$ rotations, provided that we can apply any angle on the available hardware. This implies that the total running cost is two additional Hadamard tests per step, with $t=(k+1)\sqrt{s}$. At this point, we should note that if one does not choose a constant timestep then Eq. \eqref{eq:overlap_general_equation} is impractical as the number of expectation values grows exponentially with the number of iterations. As such, our prior analysis remains efficient as long as we choose a constant timestep throughout the imaginary-time evolution. 
 
We will now show that $f_{k+1}(\sqrt{s})$ remains almost constant over a number of iterations. First of all, we can take advantage of the fact that we need to estimate $f_k(t)$ for small timesteps $t$. Recall that DB-QITE (and thus PDBQITE) are valid approximations of the imaginary-time evolution for small time steps $s_k$. Thus, we can Taylor-expand \eqref{eq:overlap} as:
\begin{equation}
    f_k(t) = 1 -itm_1 - \frac{t^2}{2}m_2 +\frac{it^3}{6}m_3 + \mathcal{O}(t^4)
\label{eq:taylor_expansion_moments}
\end{equation}
where $m_j$ is the $j$-th moment of the Hamiltonian $H$ defined as:
\begin{equation}
    m_j := \bra{\psi_k}H^j\ket{\psi_k}
\end{equation}
As such, we can estimate $f_k(t)$ up to the user-specified precision by estimating moments of $H$. For example, the user can estimate both $\bra{\psi_k}H\ket{\psi_k}$ and $\bra{\psi_k}H^2\ket{\psi_k}$, and then using these expectation values, feed the appropriate value of $t$ and estimate $f_k(t)$. However, the problem is that one would still have to measure moments of $H$ on the state $\ket{\psi_k}$. As we will now show, this can be avoided.

This is very useful in practice, since the user can use the same value $f_k$ for a fixed number of iterations. We are interested in estimating $f_{k+1}(\sqrt{s}) - f_k(\sqrt{s})$. Using Eq. \eqref{eq:overlap_general_equation}, we have:
\begin{equation}
    f_{k+1}(\sqrt{s}) - f_k(\sqrt{s}) = |a_{k}|^2f_k(\sqrt{s})
    + a_{k}
+a_{k}^*f_k\left(2\sqrt{s}\right)
\end{equation}
Using the Taylor expansion of $f_k$ in Eq. \eqref{eq:taylor_expansion_moments}, and $t:=\sqrt{s}$ for easiness,
we can expand $a_{k}$ as:
\begin{equation}
\begin{gathered}
    A_k= \left(it -\frac{t^2}{2}-\frac{it^3}{6}\right)\left(1 -itm_1 - \frac{t^2}{2}m_2 +\mathcal{O}(t^3)\right) \\
    =it +t^2\left(m_1-\frac{1}{2}\right) + it^3\left(\frac{m_1-m_2}{2}-\frac{1}{6}\right) + \mathcal{O}(t^4)
\end{gathered}
\end{equation}
Next, we calculate $|a_{k}|^2$ as:
\begin{equation}
    a_{k}^*A_k= t^2 + \mathcal{O}(t^4)
\end{equation}
and thus:
\begin{equation}
    |a_{k}|^2f_k(t) = t^2 - it^3m_1 +\mathcal{O}(t^4)
\end{equation}
Then, for the next term, we have:
\begin{equation}
    a_{k}^*f_k(2t) = -it -t^2\left(m_1+\frac{1}{2}\right) + it^3\left(\frac{5m_2}{2} + \frac{m_1}{2} -2m_1^2 + \frac{1}{6}\right) + \mathcal{O}(t^4)
\end{equation}
As such if we put everything together, we have:
\begin{equation}
\begin{gathered}
  f_{k+1}(t) - f_k(t) =\\  t^2-it^3m_1 +it +t^2\left(m_1-\frac{1}{2}\right) + it^3\left(\frac{m_1-m_2}{2}-\frac{1}{6}\right) -it -t^2\left(m_1+\frac{1}{2}\right) + it^3\left(\frac{5m_2}{2} + \frac{m_1}{2} -2m_1^2 + \frac{1}{6}\right) =\\
  2i(m_2-m_1^2)t^3 + \mathcal{O}(t^4)
\end{gathered}
\end{equation}
Now, if we use the fact that the variance $\sigma_k^2$ is $\sigma_k^2 = m_2-m_1^2 = \bra{\psi_k}H^2\ket{\psi_k} - (\bra{\psi_k}H\ket{\psi_k})^2$ and that $t=\sqrt{s}$, we have the expression:
\begin{equation}
    f_{k+1}(\sqrt{s}) - f_k(\sqrt{s}) = 2i\sigma_k^2s^{3/2} + \mathcal{O}(s^2)
\end{equation}
which shows the fact that the overlaps differ by an amount of order $\mathcal{O}(s^{3/2})$.

\subsection{Second-order PDBQITE}
\label{sec:estimating_fk_2order}

We can proceed and do the same analysis for second-order PDBQITE; however the calculation can become slightly more tedious that the first-order case. In comparison to the first-order, the second-order non-unitary operator $\mathcal{Q}_k^{(2)}$ is decomposed into a linear combination of three unitary operators, as seen in Eq. \eqref{eq:nonunitary_operator_second_order}.

We are interested in:
\begin{equation}
    f_{k+1}(\tilde{s}) = \bra{\psi_{k+1}}e^{-iH\tilde{s}}\ket{\psi_{k+1}} = \bra{\psi_{k}}\mathcal{Q}_k^{(2)\dagger}e^{-iH\tilde{s}}\mathcal{Q}_k^{(2)}\ket{\psi_{k}} 
\end{equation}
for some $\tilde{s}\in \mathbb{R}$. By expanding the product $\mathcal{Q}_k^{(2)\dagger}e^{-iH\tilde{s}}\mathcal{Q}_k^{(2)}$ we can write it as:
\begin{equation}
\begin{aligned}
    \mathcal{Q}_k^{(2)\dagger}e^{-iH\tilde{s}}\mathcal{Q}_k^{(2)} &= \Big(1 + |b_{1,k}|^2 + |b_{2,k}|^2\Big) e^{-iH\tilde{s}}\\
    &+ b_{1,k}e^{-iH(\tilde{s}+t_1)} + b_{1,k}^*e^{-iH(\tilde{s}-t_1)}\\
    &+ b_{2,k}e^{-iH(\tilde{s}+t_2)} + b_{2,k}^*e^{-iH(\tilde{s}-t_2)}\\
    &+ b_{1,k}^*b_{2,k}e^{-iH(\tilde{s}-t_1+t_2)} + b_{1,k}b_{2,k}^*e^{-iH(\tilde{s}+t_1-t_2)}
\end{aligned}
\end{equation}
where $t_1 = (1-\varphi)\sqrt{s}$, $t_2 = -\varphi \sqrt{s}$ and $t_1-t_2 = \sqrt{s}$. Thus, we can write $f_{k+1}(\tilde{s})$ as:
\begin{equation}
\begin{aligned}
\mathcal{Q}_k^{(2)\dagger}e^{-iH\tilde{s}}\mathcal{Q}_k^{(2)} &= \Big(1 + |b_{1,k}|^2 + |b_{2,k}|^2\Big) f_k(\tilde{s})\\
&+ b_{1,k}f_k(\tilde{s}+(1-\varphi)\sqrt{s}) + b_{1,k}^*f_k(\tilde{s}-(1-\varphi)\sqrt{s})\\
&+ b_{2,k}f_k(\tilde{s}-\varphi \sqrt{s}) + b_{2,k}^*f_k(\tilde{s}-\varphi\sqrt{s})\\
&+ b_{1,k}^*b_{2,k}f_k(\tilde{s}-\sqrt{s}) + b_{1,k}b_{2,k}^*f_k(\tilde{s}+\sqrt{s}) 
\end{aligned}
\end{equation}
We can see that in order to estimate $f_{k+1}(\tilde{s})$ we need to shift $\tilde{s}$ by seven different values and estimate $f_k$ on these values. The seven different ``shifts'' are $\{0, \pm(1-\varphi)\sqrt{s}, \pm \varphi\sqrt{s}, \pm \sqrt{s}\}$. However, the $\pm \sqrt{s},0$ shifts can be written as a linear combination of $(1-\varphi)\sqrt{s}$ and $\varphi\sqrt{s}$ for some $m_1,m_2\in \mathbb{Z}$. Thus, each step of the recurrence adds a vector $(m_1,m_2)$ drawn from the set:
\begin{equation}
    V = \{(0,0), \pm (1,0), \pm(0,1), (1,1)\}
\end{equation}
Every single recurrence step, can add $m_1,m_2\in \{-1, 0, 1\}$. Thus, after $K$ iterations, we will have that $|m_1|\leq d$ and $|m_2|\leq d$. As such, the number of possible integers and therefore different expectation values $f_k$ will be as large as all integers that are within an integer square $(2K+1)\times (2K+1)$. Hence, the number of distinct expectation values needed to calculate $f_k(\sqrt{s})$ for $K$ iterations of second-order PDBQITE scale as $\mathcal{O}(K^2)$.

\section{Amplitude Amplification}
\label{app:amplitude_amplification}

\subsection{Summary}
\label{sec:amplitude_amplification}

As we discuss below, the operator $\mathcal{Q}_k$ is close to a unitary operator, up to an error $\mathcal{O}(\sqrt{s_k})$. When this operator acts on the state $\ket{\psi_k}$, it succeeds with a probability that is large and close to 1 (see Lemma \ref{lemma:prob_of_success_first_order}). This motivates a robust (partial) oblivious amplitude amplification technique, inspired by the work in \cite{berry2014exponential},
that can boost the per-step probability of success to even larger values.
Our construction is based on the walk operator:
\begin{equation}
    \mathcal{W}(\phi) = \mathcal{V}_kR(\phi)\mathcal{V}_k^{\dagger}R(\phi)
\label{eq:partial_walk_operator}
\end{equation}
where $\mathcal{V}_k$ is the LCU unitary that applies the operator $\mathcal{Q}_k$, and $R(\phi)$ is a partial reflection about the ``good'' subspace, i.e., the subspace where the ancilla qubit is in the $\ket{0}$ state, defined as:
\begin{equation}
    R(\phi) \equiv e^{i\phi \Pi_G} = \mathds{1} + (e^{i\phi}-1)\Pi_G
\end{equation}
As we rigorously show in Appendix \ref{app:amplitude_amplification}, applying the walk operator at the end of each PDBQITE iteration can boost the probability of success as
\begin{equation}
    P_{\text{succ}}(\phi)\approx \frac{1}{\alpha_k^2} + \frac{4|a_{k}|}{\alpha_k^4}\phi^2
\end{equation}
when the angle $\phi$ is chosen to be small. 
However, this comes with two costs. First of all, applying one iteration of the walk operator at each iteration triples the depth of the PDBQITE circuit since we have to apply two additional $\mathcal{V}_k$ unitaries. Secondly, due to the non-unitarity of the operator $\mathcal{Q}_k$, boosting the probability of success comes at the cost of distorting the state, as
shown in Fig. \ref{fig:prob_of_success_infidelity} for a 14-qubit MaxCut Hamiltonian corresponding to a randomly weighted 3-regular graph. Note that this distortion could potentially lead to a slower convergence onto the ground state, thus needing more iterations and, as such, deeper quantum circuits.

\begin{figure}[h]
    \centering
\includegraphics[width=1\linewidth]{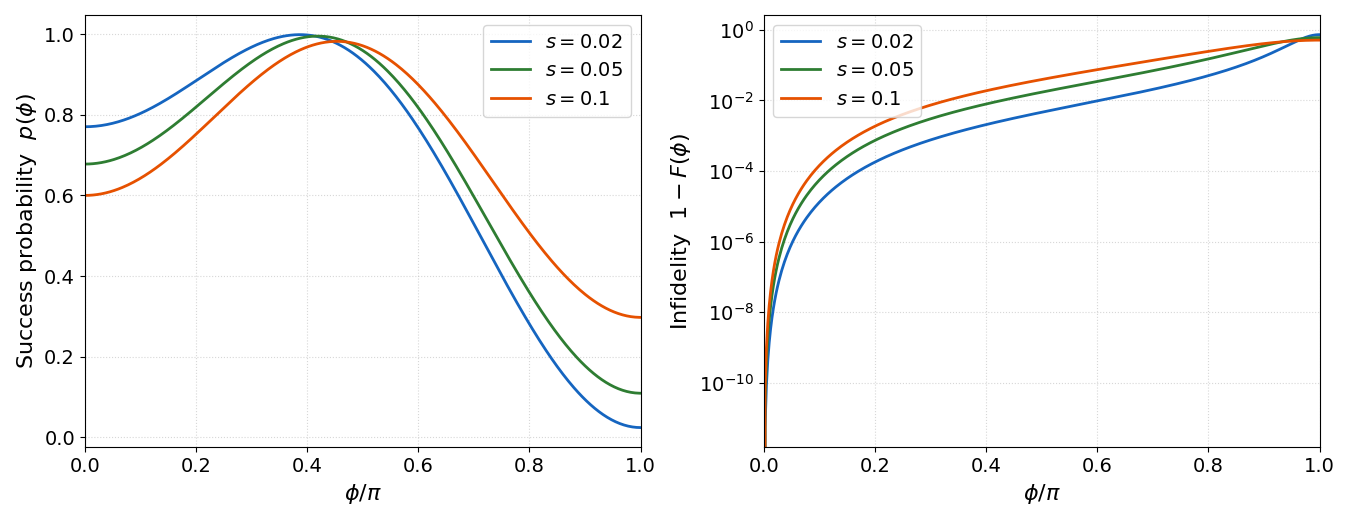}
    \caption{Probability of success (left figure) and infidelity (right figure) as a function of the angle $\phi$ of the walk operator in Eq. \eqref{eq:partial_walk_operator} for a 14-qubit MaxCut Hamiltonian corresponding to a randomly weighted 3-regular graph.. Each line corresponds to a different timestep $s_k$ for one step of PDBQITE.}
\label{fig:prob_of_success_infidelity}
\end{figure}

\subsection{Details}

In this subsection, we will explain how we can amplify the probability of success in PDBQITE algorithm with the cost of a slight distortion in the amplified state. As we see in Eq. \eqref{eq:probability_of_success}, the probability of success is large when we choose a small timestep $s_k$, but not necessarily close to 1. However, we can boost this probability closer to 1 by using a flavor of Amplitude Amplification \cite{brassard2000quantum} that is inspired by \emph{Oblivious Amplitude Amplification} \cite{berry2014exponential}. In the section below, we focus on the first-order approximation of PDB-QITE, in which the linear combination of unitaries is given by Eq. \eqref{eq:linear_combination_unitary}. 

Let $\mathcal{V}$ be the LCU unitary defined in Sec. \ref{sec:pdbqite}, acting on the state $\ket{0}\ket{\psi_k}$ as:
\begin{equation}
\begin{aligned}
    \mathcal{V}_k\ket{0}\ket{\psi_k} &= \frac{1}{\alpha_k}\ket{0}\mathcal{Q}_k\ket{\psi_k} + \sqrt{1-\frac{1}{\alpha_k^2}}\ket{\Phi^{\perp}} \\
    &= \sin q\ket{0}\mathcal{Q}_k \ket{\psi_k} + \cos q \ket{\Phi^{\perp}}
\end{aligned}
 \end{equation}
where $q = \arcsin \sqrt{P_{\text{succ}}}$, and $\ket{\Phi^{\perp}}$ is a $(n+1)$-qubit state that depends on $\ket{\psi_k}$. The state $\ket{\Phi^{\perp}}$ satisfies $\Pi_G\ket{\Phi^{\perp}} = 0$ where $\Pi_G = \ketbra{0}\otimes \mathds{1}$ is the projector onto the ``good'' subspace, i.e., the subspace where the ancilla qubit is in the $\ket{0}$ state.

In the original amplitude amplification of Brassard et al. \cite{brassard2000quantum}, the standard walk operator $\mathcal{W}$ that amplifies the probability of success is defined as:
\begin{equation}
\begin{aligned}
    \mathcal{W} = -\Big(\mathds{1} - 2\mathcal{V}_k\ket{0}\ket{\psi_k}\bra{0}\bra{\psi_k}\mathcal{V}_k^{\dagger}\Big)\Big(\mathds{1} - 2 \Pi_G\Big) \\
    = -\mathcal{V}_k\Big(\mathds{1} - 2\ket{0}\mathcal{Q}_{k-1}\ket{\psi_{k-1}}\bra{0}\bra{\psi_{k-1}}\mathcal{Q}_{k-1}^{\dagger}\Big)\mathcal{V}_k^{\dagger}\Big(\mathds{1} - 2\Pi_G\Big).
\end{aligned}
\end{equation}
However, applying the reflection $\mathds{1} -2\ket{0}Q_{k-1}\ket{\psi_{k-1}}\bra{0}\bra{\psi_{k-1}}\mathcal{Q}_{k-1}^{\dagger}$ requires also amplitude amplification, since the state $\ket{0}\mathcal{Q}_{k-1}\ket{\psi_{k-1}}$ is generated probabilistically, and thus we end up with a recursive relation that takes exponential time. This bottleneck can usually be avoided using \emph{oblivious amplitude amplification} (OAA) \cite{berry2014exponential}. In this case, the walk operator is defined as:
\begin{equation}
\begin{aligned}
    \mathcal{W} = -\Big(\mathds{1} - 2\mathcal{V}_k\ketbra{0}\otimes \mathds{1} \mathcal{V}_k^{\dagger}\Big)\Big(\mathds{1} - 2 \Pi_G\Big) \\
    = -\mathcal{V}_k\Big(\mathds{1} - 2\ketbra{0}\otimes \mathds{1} \Big)\mathcal{V}_k^{\dagger}\Big(\mathds{1}-2 \Pi_G\Big) \\
    = -\mathcal{V}_k\Big(\mathds{1}-2 \Pi_G\Big)\mathcal{V}_k^{\dagger}\Big(\mathds{1}-2 \Pi_G\Big)
\end{aligned}
\end{equation}
and as such, the amplification remains oblivious to the input state. However, according to the 2D Subspace Lemma (see Lemma 3.7 in \cite{berry2014exponential}) OAA can only be used when the linear combination of unitaries is a unitary operator, and this fails in our case. In \cite{berry2015simulating}, the authors extended their result to a robust version of oblivious amplitude amplification called \emph{robust oblivious amplitude amplification} (ROAA), in which the non-unitary operator is close to a unitary operator and the probability of success is close to 1/4. Then, one round of ROAA can boost the probability of success close to one, but with a small error in the amplified state.

Recall that in our case, the operator $\mathcal{Q}_k = \mathds{1} + (e^{i\sqrt{s_k}}-1) f_k(\sqrt{s_k})e^{i\sqrt{s_k}H}$ is close to a unitary operator for a small timestep $s_k$, and the probability of success is close to 1. We take inspiration from fixed-point amplitude amplification \cite{yoder2014fixed} and adapt the idea to the oblivious setting via partial reflections. In this way, we aim to amplify the success probability, but with the cost of introducing a small error in the post-selected state. 

First of all, we need to understand how the timestep $s_k$ affects the non-unitary operator $\mathcal{Q}_k$. Recall that $A_k= (e^{i\sqrt{s_k}}-1) f_k(\sqrt{s_k})$ is the coefficient of the unitary operator $e^{i\sqrt{s_k}H}$ in the LCU (see Eq. \eqref{eq:linear_combination_unitary}), at the $k$-th iteration. We can quantify how non-unitary the operator $\mathcal{Q}_k$ is, using the following Lemma:
\begin{lemma}
    Let $\eta$ be the error that quantifies the non-unitary operator $\mathcal{Q}_k$ as 
    \begin{equation}
        \eta := \norm{\mathcal{Q}_k^\dagger\mathcal{Q}_k - \mathds{1}}.
    \end{equation}
    The error $\eta$ is then upper bounded as:
    \begin{equation}
        \eta = \mathcal{O}(\sqrt{s_k})
    \end{equation}
\end{lemma}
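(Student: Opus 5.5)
We expand $\mathcal{Q}_k^\dagger\mathcal{Q}_k$ directly and bound the operator norm of each term with the triangle inequality. The key bound is that the coefficient $a_k=(e^{i\sqrt{s_k}}-1)f_k(\sqrt{s_k})$ is small.

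Write $U=e^{i\sqrt{s_k}H}$, which is unitary, so $\mathcal{Q}_k=\mathds{1}+a_kU$. Since $U^\dagger U=\mathds{1}$,
\begin{equation*}
\mathcal{Q}_k^\dagger\mathcal{Q}_k-\mathds{1} = a_kU + a_k^*U^\dagger + |a_k|^2\mathds{1}.
\end{equation*}
Unlike Lemma \ref{lemma:isometry_on_psi_k}, we cannot take the expectation value in $\ket{\psi_k}$ here, where the cross terms combined with $|a_k|^2$ to give zero. We need the full operator norm, and on a generic vector the cross terms do not cancel.

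Using $\norm{U}=\norm{U^\dagger}=1$, the triangle inequality gives
\begin{equation*}
\eta\leq 2|a_k|+|a_k|^2 .
\end{equation*}
From the proof of Lemma \ref{lemma:prob_of_success_exact_first_order}, $|a_k|=2|\sin(\sqrt{s_k}/2)|\,|f_k(\sqrt{s_k})|$. As in Lemma \ref{lemma:prob_of_success_first_order}, $|\sin(x)|\leq|x|$ and $|f_k|\leq1$ give $|a_k|\leq\sqrt{s_k}$. Hence
\begin{equation*}
\eta\leq 2\sqrt{s_k}+s_k=\mathcal{O}(\sqrt{s_k}),
\end{equation*}
which is the claim.

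The main point requiring care is that the estimate has to hold in operator norm, not just on $\ket{\psi_k}$. This is also why the bound is of order $\sqrt{s_k}$ rather than vanishing. The dominant term $a_kU+a_k^*U^\dagger$ is Hermitian, and in the eigenbasis of $H$ it has eigenvalues $2|a_k|\cos(\sqrt{s_k}E_j+\arg a_k)$. Its norm is therefore genuinely of order $|a_k|\approx\sqrt{s_k}$ whenever the spectrum spreads enough phases. If desired, one can add this remark to show the bound is tight in order, but it is not needed for the upper bound.
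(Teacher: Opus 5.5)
Your proof is correct and is the paper's argument: expand $\mathcal{Q}_k^\dagger\mathcal{Q}_k-\mathds{1}=a_kU+a_k^*U^\dagger+|a_k|^2\mathds{1}$, use the triangle inequality to get $\eta\le 2|a_k|+|a_k|^2$, and insert $|a_k|\le\sqrt{s_k}$. One caution on your optional tightness remark: since $\bra{\psi_k}(\mathcal{Q}_k^\dagger\mathcal{Q}_k-\mathds{1})\ket{\psi_k}=0$ and the eigenvalues $2\Re(a_ke^{i\sqrt{s_k}E_j})+|a_k|^2$ differ pairwise by at most $2s_k|E_j-E_l|$, one in fact has $\eta\le 2s_k(E_{\max}-E_{\min})$, so for fixed $H$ the true order is $s_k$, and the $\sqrt{s_k}$ order is reached only when $\sqrt{s_k}(E_{\max}-E_{\min})\gtrsim 1$ (your ``spectrum spreads enough phases'' condition has to mean this).
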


\begin{proof}
    We start by estimating $\mathcal{Q}_k^{\dagger}\mathcal{Q}_k$ as:
    \begin{equation}
        \mathcal{Q}_k^{\dagger}\mathcal{Q}_k = \mathds{1} + A_ke^{iH\sqrt{s_k}} + a_{k}^* e^{-iH\sqrt{s_k}} + |a_{k}|^2\mathds{1}
    \end{equation}
    Then, we have that the error $\eta$ is:
    \begin{equation}
         \\
        \eta = \norm{\mathcal{Q}_k^\dagger\mathcal{Q}_k - \mathds{1}} \leq 2|a_{k}| + |a_{k}|^2 = \mathcal{O}(\sqrt{s_k})
    \end{equation}
    where we used the fact that $|a_{k}| \leq \sqrt{s_k}$ (see proof of Lemma \ref{lemma:prob_of_success_exact_first_order}).
\end{proof}

As we discussed earlier, the probability of success is quite large for small time steps $s_k$. This motivates the use of a partial (and oblivious in our case) reflection, similar to fixed-point amplitude amplification \cite{yoder2014fixed}. We define the reflection $R(\phi)$ that acts only on the ancilla register as:
\begin{equation}
    R(\phi) = e^{i\phi \Pi_G} = \mathds{1} + (e^{i\phi}-1)\Pi_G
\end{equation}
Note that for $\phi=\pi$, we recover the standard (full) reflection used in amplitude amplification ($R(\pi) = \mathds{1}-2\Pi_G$). We can now proceed and introduce an angle-dependent walk operator defined as:
\begin{equation}
    \mathcal{W}(\phi) = \mathcal{V}_kR(\phi)\mathcal{V}_k^{\dagger}R(\phi)
\end{equation}
In the following, we will show how small angles $\phi$ can lead to significant improvements in the success probability, but with the caveat of introducing small errors in the post-selected state. Before our analysis, we need to understand the action of the walk operator $\mathcal{W}(\phi)$ in the quantum state $\mathcal{V}_k\ket{0}\ket{\psi_k}$. We start from $\mathcal{V}_k \ket{0}\ket{\psi_k}$:
\begin{equation}
\begin{gathered}
    \xrightarrow{R(\phi)} \mathcal{V}_k\ket{0}\ket{\psi_k} + \frac{(e^{i\phi}-1)}{\alpha_k}\ket{0}\mathcal{Q}_k\ket{\psi_k} \\
    \xrightarrow{\mathcal{V}_k^{\dagger}} \ket{0}\ket{\psi_k} +  \frac{(e^{i\phi}-1)}{\alpha_k}\mathcal{V}_k^{\dagger}\ket{0}\mathcal{Q}_k\ket{\psi_k}\\
    = \ket{0}\ket{\psi_k} + \frac{(e^{i\phi}-1)}{\alpha_k^2}\ket{0}\mathcal{Q}_k^{\dagger}\mathcal{Q}_k\ket{\psi_k} + \frac{(e^{i\phi}-1)}{\alpha_k}\ket{\Xi^{\perp}}
\end{gathered}
\end{equation}
where the state $\ket{\Xi^{\perp}}$ is defined as:
\begin{equation}
    \ket{\Xi^{\perp}} = \mathcal{V}_k^{\dagger}\ket{0}\mathcal{Q}_k\ket{\psi_k} - \frac{1}{\alpha_k}\ket{0}\mathcal{Q}_k^{\dagger}\mathcal{Q}_k\ket{\psi_k}
\end{equation}
and satisfies $\Pi_G\ket{\Xi^{\perp}} = 0$ (i.e., corresponds to a quantum state where the ancilla qubit is in the state $\ket{1}$). If we continue with the rest of the operations, we get the following:
\begin{equation}
\begin{gathered}
    \xrightarrow{R(\phi)} e^{i\phi}\ket{0}\ket{\psi_k} + \frac{(e^{i\phi}-1)e^{i\phi}}{\alpha_k^2} \ket{0}\mathcal{Q}_k^{\dagger}\mathcal{Q}_k\ket{\psi_k} +\frac{(e^{i\phi}-1)}{\alpha_k}\ket{\Xi^{\perp}}\\
    \xrightarrow{\mathcal{V}_k} \frac{e^{i\phi}}{\alpha_k}\ket{0}\mathcal{Q}_k\ket{\psi_k} + \frac{(e^{i\phi}-1)e^{i\phi}}{\alpha_k^2}\Bigg(\frac{1}{\alpha_k}\ket{0}\mathcal{Q}_k\mathcal{Q}_k^{\dagger}\mathcal{Q}_k\ket{\psi_k} + \ket{\Phi'^{\perp}}\Bigg) \\+ \frac{(e^{i\phi}-1)}{\alpha_k}\ket{0}\mathcal{Q}_k\ket{\psi_k} - \frac{(e^{i\phi}-1)}{\alpha_k^3}\ket{0}\mathcal{Q}_k\mathcal{Q}_k^{\dagger}\mathcal{Q}_k\ket{\psi_k} + \ket{\Xi'^{\perp}}
\end{gathered}
\end{equation}
where we used the fact that:
\begin{equation}
    \mathcal{V}_k\ket{\Xi^{\perp}} = \mathcal{V}_k\Big(\mathcal{V}_k^{\dagger}\ket{0}\mathcal{Q}_k\ket{\psi_k} - \frac{1}{\alpha_k}\ket{0}\mathcal{Q}_k^{\dagger}\mathcal{Q}_k\ket{\psi_k}\Big) = \ket{0}\mathcal{Q}_k\ket{\psi_k} - \frac{1}{\alpha_k^2}\ket{0}\mathcal{Q}_k\mathcal{Q}_k^{\dagger}\mathcal{Q}_k\ket{\psi_k}
\end{equation}
and $\ket{\Phi'^{\perp}}$ and $\ket{\Xi'^{\perp}}$ are states that satisfy $\Pi_G\ket{\Phi'^{\perp}}=0$ and $\Pi_G\ket{\Xi'^{\perp}} = 0$. Acting with $\Pi_G$ results in:
\begin{equation}
    \Pi_G \mathcal{W}(\phi)\mathcal{V}_k\ket{0}\ket{\psi_k} = \frac{1}{\alpha_k}\ket{0}\Big((2e^{i\phi}-1)\mathcal{Q}_k + \frac{(e^{i\phi}-1)^2}{\alpha_k^2}\mathcal{Q}_k\mathcal{Q}_k^{\dagger}\mathcal{Q}_k\Big)\ket{\psi_k}
\label{eq:amplification_partial_reflection}
\end{equation}
Note that for $\phi=0$ (i.e., no walk operator) the second term vanishes, while for $\phi=\pi$ we recover the $\frac{4}{\alpha^3}\mathcal{Q}_k\mathcal{Q}_k^{\dagger}\mathcal{Q}_k$ error of ROAA \cite{berry2015simulating}. Note also that for $\phi=\pi$, in the unitary case, the second term is $\frac{4}{\alpha^3}\mathcal{Q}_k$, which along with the first term result in the amplification of the probability from $\sin q$ to $\sin 3q$.

As we can see from Eq. \eqref{eq:amplification_partial_reflection}, the state after the application of the walk operator and the projector onto the good subspace, results in a state with a leakage due to the fact that the operator $\mathcal{Q}_k$ is non-unitary. In Lemma \ref{lemma:leakage_non_unitarity}, we aim to quantify how large is this residual error, due to the non-unitary nature of $\mathcal{Q}_k$. Intuitively, Eq. \eqref{eq:deviation_from_ideal} aims to quantify the errors that are introduced when $\mathcal{Q}_k$ deviates from being a unitary.

\begin{lemma}
    Let $\mathcal{E}$ be the residual error that quantifies the leakage of the walk operator $\mathcal{W}(\phi)$ due to the non-unitarity of $\mathcal{Q}_k$:
    \begin{equation}
        \mathcal{E} = \norm{\frac{(e^{i\phi}-1)^2}{\alpha_k^3}\mathcal{Q}_k(\mathcal{Q}_k^{\dagger}\mathcal{Q}_k-\mathds{1})\ket{\psi_k}}
    \label{eq:deviation_from_ideal}
    \end{equation}
     The error is then upper bounded as:
    \begin{equation}
        \mathcal{E} = \mathcal{O}(\sqrt{s_k})
    \end{equation}
\label{lemma:leakage_non_unitarity}
\end{lemma}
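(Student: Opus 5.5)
The plan is to bound the norm in Eq. \eqref{eq:deviation_from_ideal} by submultiplicativity, splitting it into three factors: the scalar prefactor, the operator norm of $\mathcal{Q}_k$, and the vector norm $\norm{(\mathcal{Q}_k^{\dagger}\mathcal{Q}_k-\mathds{1})\ket{\psi_k}}$. Concretely, the starting point is
\begin{equation*}
\mathcal{E} \leq \frac{|e^{i\phi}-1|^2}{\alpha_k^3}\,\norm{\mathcal{Q}_k}\,\norm{(\mathcal{Q}_k^{\dagger}\mathcal{Q}_k-\mathds{1})\ket{\psi_k}}
\leq \frac{|e^{i\phi}-1|^2}{\alpha_k^3}\,\norm{\mathcal{Q}_k}\,\eta .
\end{equation*}
Each factor is then controlled separately using bounds already established.

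For the prefactor, $|e^{i\phi}-1|^2 = 4\sin^2(\phi/2) \leq 4$ holds for any $\phi$, and it is at most $\phi^2$ for the small-angle regime of interest. Since $\alpha_k = 1+|a_k| \geq 1$, we also have $\alpha_k^{-3}\leq 1$. For the operator norm, the triangle inequality and unitarity of $e^{i\sqrt{s_k}H}$ give $\norm{\mathcal{Q}_k}\leq 1+|a_k| = \alpha_k$. This factor of $\alpha_k$ can be absorbed into $\alpha_k^{-3}$, so the prefactor together with $\norm{\mathcal{Q}_k}$ contributes at most $4/\alpha_k^2 \leq 4$.

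The factor carrying the $s_k$-dependence is $\eta = \norm{\mathcal{Q}_k^\dagger\mathcal{Q}_k-\mathds{1}}$, and the preceding lemma gives $\eta \leq 2|a_k|+|a_k|^2$. The bound $|a_k| = 2|\sin(\sqrt{s_k}/2)|\,|f_k(\sqrt{s_k})| \leq \sqrt{s_k}$ from the proof of Lemma \ref{lemma:prob_of_success_first_order} then yields
\begin{equation*}
\mathcal{E} \leq 4\sin^2(\phi/2)\,\bigl(2\sqrt{s_k}+s_k\bigr) = \mathcal{O}(\sqrt{s_k}),
\end{equation*}
uniformly in $\phi$ and independent of system size. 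For small $\phi$ this can be sharpened to $\mathcal{E}=\mathcal{O}(\phi^2\sqrt{s_k})$.

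There is no real obstacle here. The only point that needs care is to use the vector norm on $\ket{\psi_k}$ bounded by the operator norm $\eta$, rather than attempting an exact computation. One could also evaluate $(\mathcal{Q}_k^\dagger\mathcal{Q}_k-\mathds{1})\ket{\psi_k} = (a_k e^{iH\sqrt{s_k}}+a_k^*e^{-iH\sqrt{s_k}}+|a_k|^2)\ket{\psi_k}$ directly. By Lemma \ref{lemma:isometry_on_psi_k} this vector is orthogonal to $\ket{\psi_k}$ in expectation, but its norm is still of order $|a_k|$, so the direct computation does not improve the order. The operator-norm route is therefore sufficient.
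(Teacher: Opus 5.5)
Your proof is correct. It shares the paper's skeleton (submultiplicativity, $\norm{\mathcal{Q}_k}\le\alpha_k$, $|a_k|\le\sqrt{s_k}$), and the one real difference is how you handle the middle factor $\norm{(\mathcal{Q}_k^{\dagger}\mathcal{Q}_k-\mathds{1})\ket{\psi_k}}$.

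\emph{Your route versus the paper's.} You bound this factor by the operator norm $\eta\le 2|a_k|+|a_k|^2$ from the preceding lemma. The paper instead computes $\bra{\psi_k}R^2\ket{\psi_k}=2\Re\bigl(a_k^2f_k^*(2\sqrt{s_k})\bigr)+2|a_k|^2-|a_k|^4$ exactly, using Lemma \ref{lemma:isometry_on_psi_k} to collapse the cross term, and only afterwards bounds it by $4|a_k|^2$. Your route is shorter and loses nothing. If you keep the factor $\alpha_k^{-2}$ rather than discarding it, then $(2|a_k|+|a_k|^2)/(1+|a_k|)^2=1-(1+|a_k|)^{-2}\le 2|a_k|$, and you land on the same $8\sin^2(\phi/2)\sqrt{s_k}$ as the paper. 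What the paper's route buys is the exact formula for $r^2$, which it reuses in the exact success-probability lemma and in the fidelity analysis.

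\emph{Caveat on your closing remark.} The direct computation fails to improve the order only in the worst case over Hamiltonians. For a fixed $H$, the leading terms of $2\Re\bigl(a_k^2f_k^*(2\sqrt{s_k})\bigr)$ and $2|a_k|^2$ cancel. Expanding in $\sqrt{s_k}$ then gives $(\mathcal{Q}_k^{\dagger}\mathcal{Q}_k-\mathds{1})\ket{\psi_k}=-2s_k\bigl(H-\bra{\psi_k}H\ket{\psi_k}\bigr)\ket{\psi_k}+\mathcal{O}(s_k^{3/2})$, whose norm is $2s_k\sigma_k+\mathcal{O}(s_k^{3/2})$, with $\sigma_k$ the energy standard deviation of $\ket{\psi_k}$. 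The $\sqrt{s_k}$ scaling is reached only when the energy spread of $\ket{\psi_k}$ is of order $1/\sqrt{s_k}$. So $\mathcal{O}(\sqrt{s_k})$ is the right system-size-independent statement, but it is not the true small-$s_k$ behaviour for a given instance.
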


\begin{proof}
    Consider the product $\mathcal{Q}_k^{\dagger} \mathcal{Q}_k$:
    \begin{equation}
        \mathcal{Q}_k^{\dagger} \mathcal{Q}_k = \mathds{1} + a_{k}e^{iH\sqrt{s_k}} + a_{k}^*e^{-iH\sqrt{s_k}} + |a_{k}|^2\mathds{1} = \mathds{1} + R
    \end{equation}
    where $R\equiv a_{k}e^{iH\sqrt{s_k}} + a_{k}^*e^{-iH\sqrt{s_k}} + |a_{k}|^2\mathds{1}$. On the next step, we need to calculate the squared norm of the vector $R\ket{\psi_k} = (\mathcal{Q}_k^{\dagger}\mathcal{Q}_k-\mathds{1})\ket{\psi_k}$:
    \begin{equation}
        \norm{(\mathcal{Q}_k^{\dagger}\mathcal{Q}_k-\mathds{1})\ket{\psi_k}}^2 = \bra{\psi_k}R^2\ket{\psi_k}
    \end{equation}
    We can express $R^2$ as:
    \begin{equation}
        R^2 = a_{k}^2e^{2iH\sqrt{s_k}} + (a_{k}^*)^2e^{-2iH\sqrt{s_k}} + 2|a_{k}|^2\mathds{1} + 2|a_{k}|^2(a_{k}e^{iH\sqrt{s_k}} + a_{k}^*e^{-iH\sqrt{s_k}}) + |a_{k}|^4\mathds{1}
    \end{equation}
    Taking the expectation value $r^2 = \bra{\psi_k}R^2\ket{\psi_k}$ we have:
    \begin{equation}
        r^2 = 2\Re\left(a_{k}^2f^*_k(2\sqrt{s_k})\right) + 2|a_{k}|^2 + 2|a_{k}|^2(a_{k}f^*_k(\sqrt{s_k}) + a_{k}^*f_k(\sqrt{s_k})) + |a_{k}|^4
    \end{equation}
    Using the result of Lemma \ref{lemma:isometry_on_psi_k}, we see that the parenthesis in the third term is equal to $-|a_{k}|^2$. Thus:
    \begin{equation}
        r^2 = 2\Re \Big(a_{k}^2f_k^*\Big(2\sqrt{s_k}\Big)\Big) + 2|a_{k}|^2 - |a_{k}|^4
    \label{eq:r_squared}
    \end{equation}
    We can then proceed and bound the residual error:
    \begin{equation}
        \norm{(\mathcal{Q}_k^{\dagger}\mathcal{Q}_k-\mathds{1})\ket{\psi_k}}^2 \leq 2|a_{k}|^2 + 2|a_{k}|^2 - |a_{k}|^4 = 4|a_{k}|^2-|a_{k}|^4 \leq 4|a_{k}|^2
    \end{equation}
    We also have $\norm{\mathcal{Q}_k}\leq 1 + |a_{k}| = \alpha_k$ and thus:
    \begin{equation}
        \norm{\mathcal{E}_k}\leq \frac{2|e^{i\phi}-1|^2}{\alpha_k^3}\alpha_k|a_{k}| =\frac{2|e^{i\phi}-1|^2|a_{k}|}{(1+|a_{k}|)^2}
    \end{equation}
    If we then use the fact that $|e^{i\phi}-1|^2 = 4\sin^2(\phi/2)$, we can conclude that the residual error can be upper bounded as:
    \begin{equation}
        \norm{\mathcal{E}_k}\leq 8\sin^2\Big(\frac{\phi}{2}\Big)\sqrt{s_k} + \mathcal{O}(s_k)
    \end{equation}
\end{proof}

We therefore have matched the bound in \cite{berry2015simulating} (for $\phi=\pi$) exactly. At this point, it is crucial to understand two important things. First, we need to quantify how much the probability of success is amplified when we perform one iteration of the partial walk operator, and secondly, how much the infidelity with the ideal state ($\ket{0}\mathcal{Q}_k\ket{\psi_k}$) increases as we vary the angle $\phi$ in the partial reflection. We first introduce Lemma \ref{lemma:probability_of_success_exact} that quantifies how much the probability of success is amplified after one iteration of the partial walk operator. 

\begin{lemma}
    The probability of success (i.e., the probability of measuring the ancilla qubit to be in $\ket{0}$ state) after one iteration of the partial walk operator
    \begin{equation*}
        \mathcal{W}(\phi) = \mathcal{V}_kR(\phi)\mathcal{V}_k^{\dagger}R(\phi)
    \end{equation*}
    on the state $\mathcal{V}_k\ket{0}\ket{\psi_k}$ is:
    \begin{equation}
        P_{\text{succ}}(\phi) = \frac{1+8\sin^2(\phi/2)}{\alpha_k^2} -\frac{8(1+r^2)\sin^2(\phi/2)(1+2\sin^2(\phi/2))}{\alpha_k^4} + \frac{16\sin^4(\phi/2)(1+3r^2+ \langle R^3\rangle)}{\alpha_k^6}
    \label{eq:probability_of_success_exact}
    \end{equation}
    where:
    \begin{gather}
        r^2 =2\Re \Big(a_{k}^2f_k^*\Big(2\sqrt{s_k}\Big)\Big) + 2|a_{k}|^2 - |a_{k}|^4\\
    \langle R^3\rangle = 2\Re \Big(a_{k}^3f_k\Big(3\sqrt{s_k}\Big)\Big) +  6|a_{k}|^2\Re \Big(a_{k}^2f_k^*\Big(2\sqrt{s_k}\Big)\Big) + 3|a_{k}|^4 - 2|a_{k}|^6
    \end{gather}
    and $R$ is defined as $\mathcal{Q}_k^{\dagger}\mathcal{Q}_k = \mathds{1} + R$.
\label{lemma:probability_of_success_exact}
\end{lemma}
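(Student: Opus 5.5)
The plan is to start from Eq.~\eqref{eq:amplification_partial_reflection}, which already gives the projection of $\mathcal{W}(\phi)\mathcal{V}_k\ket{0}\ket{\psi_k}$ onto the good subspace. Write $c=e^{i\phi}-1$, so that the good-subspace component is $\frac{1}{\alpha_k}\ket{0}\,\mathcal{Q}_k\big((1+2c)\mathds{1}+\frac{c^2}{\alpha_k^2}\mathcal{Q}_k^\dagger\mathcal{Q}_k\big)\ket{\psi_k}$, using $2e^{i\phi}-1=1+2c$. The success probability is the squared norm of this vector:
\begin{equation*}
P_{\text{succ}}(\phi)=\frac{1}{\alpha_k^2}\bra{\psi_k}M^\dagger\,\mathcal{Q}_k^\dagger\mathcal{Q}_k\,M\ket{\psi_k},\qquad M=(1+2c)\mathds{1}+\frac{c^2}{\alpha_k^2}\mathcal{Q}_k^\dagger\mathcal{Q}_k .
\end{equation*}
Substituting $\mathcal{Q}_k^\dagger\mathcal{Q}_k=\mathds{1}+R$ with $R$ Hermitian, every operator in this expression is a polynomial in $R$. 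The quantity therefore reduces to $\frac{1}{\alpha_k^2}\langle (\mathds{1}+R)\,p(R)^\dagger p(R)\rangle$ with $p(R)=(1+2c)+\frac{c^2}{\alpha_k^2}(\mathds{1}+R)$.

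Expanding gives three groups of terms according to the power of $1/\alpha_k^2$.

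\textbf{Order $1/\alpha_k^2$.} The coefficient is $|1+2c|^2\langle\mathds{1}+R\rangle$. Lemma~\ref{lemma:isometry_on_psi_k} gives $\langle R\rangle=0$. Using $c+c^*=-|c|^2$ and $|c|^2=4\sin^2(\phi/2)$, one finds $|1+2c|^2=1+2(c+c^*)+4|c|^2=1+2|c|^2=1+8\sin^2(\phi/2)$. This reproduces the first term.

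\textbf{Order $1/\alpha_k^4$.} The coefficient is $2\Re\big((1+2c)^*c^2\big)\langle(\mathds{1}+R)^2\rangle$, where $\langle(\mathds{1}+R)^2\rangle=1+r^2$ with $r^2=\langle R^2\rangle$. Then I simplify $\Re\big((1+2c^*)c^2\big)$ using $c^2=e^{2i\phi}-2e^{i\phi}+1$ and $c^*c^2=|c|^2c$. This gives $\Re(c^2)+2|c|^2\Re(c)$, with $\Re c=-2\sin^2(\phi/2)$ and $\Re(c^2)=|c|^2-\ldots$. The target is to show $2\Re((1+2c^*)c^2)=-8\sin^2(\phi/2)(1+2\sin^2(\phi/2))$, which I expect to follow by writing everything in $x=\sin^2(\phi/2)$. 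For instance, $\Re(c^2)=(\Re c)^2-(\Im c)^2=4x^2-4x(1-x)=8x^2-4x$, and $2|c|^2\Re c=-16x^2$, so the sum is $-4x-8x^2$, as required.

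\textbf{Order $1/\alpha_k^6$.} The coefficient is $|c|^4\langle(\mathds{1}+R)^3\rangle=16x^2(1+3r^2+\langle R^3\rangle)$, again using $\langle R\rangle=0$.

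The remaining step is computing $r^2$ and $\langle R^3\rangle$ in terms of the $f_k$. For $r^2$, Eq.~\eqref{eq:r_squared} from Lemma~\ref{lemma:leakage_non_unitarity} is reused directly. For $\langle R^3\rangle$, write $R=a_kU+a_k^*U^\dagger+|a_k|^2$ with $U=e^{iH\sqrt{s_k}}$ and expand the cube multinomially, collecting powers $U^{\pm3},U^{\pm2},U^{\pm1},U^0$. The expectations are $\langle U^m\rangle=f_k(-m\sqrt{s_k})=f_k^*(m\sqrt{s_k})$. The $U^{\pm1}$ contributions are grouped as $a_kf_k^*(\sqrt{s_k})+a_k^*f_k(\sqrt{s_k})=-|a_k|^2$, the identity from Lemma~\ref{lemma:isometry_on_psi_k}. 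This collapses the odd terms and should yield $2\Re(a_k^3f_k^*(3\sqrt{s_k}))+6|a_k|^2\Re(a_k^2f_k^*(2\sqrt{s_k}))+3|a_k|^4-2|a_k|^6$.

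I expect the main obstacle to be this bookkeeping, in particular the conjugation convention in the $U^3$ term. The statement writes $f_k(3\sqrt{s_k})$ rather than $f_k^*$, but since the term appears as a real part, $\Re(a_k^3 f_k^*)$ versus $\Re(a_k^3 f_k)$ differ. I would therefore check the sign convention carefully and match it with the convention used for $r^2$.
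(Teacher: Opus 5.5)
Your proposal is correct and is essentially the paper's proof. The paper likewise takes the squared norm of the projected state in Eq.~\eqref{eq:amplification_partial_reflection} and groups it into the $\alpha_k^{-2}$, $\alpha_k^{-4}$, $\alpha_k^{-6}$ terms, weighted by $\langle\mathds{1}+R\rangle$, $\langle(\mathds{1}+R)^2\rangle$ and $\langle(\mathds{1}+R)^3\rangle$; it then uses $\langle R\rangle=0$ from Lemma~\ref{lemma:isometry_on_psi_k} and expands $R$ in powers of $e^{\pm iH\sqrt{s_k}}$. On the point you flagged, the expansion gives $a_k^3f_k^*(3\sqrt{s_k})+(a_k^*)^3f_k(3\sqrt{s_k})=2\Re\big(a_k^3f_k^*(3\sqrt{s_k})\big)$, which is what the paper's proof body has, so the unconjugated $f_k$ in the statement is a typo. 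Your factor $2\Re\big((1+2c^*)c^2\big)=-8\sin^2(\phi/2)\big(1+2\sin^2(\phi/2)\big)$ is also the right one: it matches the final formula, even though the paper's intermediate line writes the prefactor as $-4$.
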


\begin{proof}
Recall that the subnormalized state, after measuring the ancilla qubit in the $\ket{0}$ state, is given in Eq. \eqref{eq:amplification_partial_reflection}. Let $\ket{\chi(\phi)} = \Pi_G \mathcal{W}(\phi)\mathcal{V}_k\ket{0}\ket{\psi_k}$:
\begin{equation}
    \ket{\chi(\phi)} = \ket{0}\Bigg(\frac{(2e^{i\phi}-1)}{\alpha_k}\mathcal{Q}_k\ket{\psi_k} + \frac{(e^{i\phi}-1)^2}{\alpha_k^3} \mathcal{Q}_k\mathcal{Q}_k^{\dagger}\mathcal{Q}_k\ket{\psi_k}\Bigg)
\end{equation}
The probability of success is $p(\phi) = \norm{\ket{\chi(\phi)}}^2$. By calculating the squared norm, we have:
\begin{equation}
    \begin{aligned}
        \norm{\ket{\chi(\phi)}}^2  &=\frac{5-4\cos \phi}{\alpha_k^2}\bra{\psi_k}\mathcal{Q}_k^{\dagger}\mathcal{Q}_k\ket{\psi_k} \\
        &+ 2\Re\Bigg(\frac{(2e^{-i\phi}-1)(e^{i\phi}-1)^2}{\alpha_k^4} \bra{\psi_k}\mathcal{Q}_k^{\dagger}\mathcal{Q}_k\mathcal{Q}_k^{\dagger}\mathcal{Q}_k\ket{\psi_k}\Bigg)\\
        &+\frac{|e^{i\phi}-1|^4}{\alpha_k^6}\bra{\psi_k}\mathcal{Q}_k^{\dagger}\mathcal{Q}_k\mathcal{Q}_k^{\dagger}\mathcal{Q}_k\mathcal{Q}_k^{\dagger}\mathcal{Q}_k\ket{\psi_k}
    \end{aligned}
\label{eq:prob_of_success_derivation_1}
\end{equation}
Starting from the first term in Eq. \eqref{eq:prob_of_success_derivation_1}, we have that by using Lemma \ref{lemma:isometry_on_psi_k} and the fact that $\cos\phi = 1-2\sin^2(\phi/2)$:
\begin{equation}
    \frac{5-4\cos \phi}{\alpha_k^2}\bra{\psi_k}\mathcal{Q}_k^{\dagger}\mathcal{Q}_k\ket{\psi_k} = \frac{1+8\sin^2(\phi/2)}{\alpha_k^2}
\end{equation}
Next, we continue with the second term. Note, that the operator $\mathcal{Q}_k^{\dagger}\mathcal{Q}_k$ is hermitian, and as such it has real eigenvalues. We have that:
\begin{equation*}
\bra{\psi_k}\mathcal{Q}_k^{\dagger}\mathcal{Q}_k\mathcal{Q}_k^{\dagger}\mathcal{Q}_k\ket{\psi_k} = \bra{\psi_k}(\mathcal{Q}_k^{\dagger}\mathcal{Q}_k)^2\ket{\psi_k}
\end{equation*}
Recall that we can write $\mathcal{Q}_k^{\dagger}\mathcal{Q}_k$ (see proof of Lemma \ref{lemma:leakage_non_unitarity}) as:
\begin{equation}
    \mathcal{Q}_k^{\dagger}\mathcal{Q}_k = \mathds{1} + R
\end{equation}
where $R = A_ke^{iH\sqrt{s_k}} + a_{k}^* e^{-iH\sqrt{s_k}} + |a_{k}|^2\mathds{1}$. Expressing the operator as this allows us to use the fact that $\bra{\psi_k}R\ket{\psi_k} = 0$ (see Lemma \ref{lemma:isometry_on_psi_k}). Thus, we can conclude that:
\begin{equation}
    \bra{\psi_k}(\mathcal{Q}_k^{\dagger}\mathcal{Q}_k)^2\ket{\psi_k} = 1 +r^2
\end{equation}
where $r^2=\bra{\psi_k}R^2\ket{\psi_k}$. We can then calculate $r^2$ as (see proof of Lemma \ref{lemma:leakage_non_unitarity}):
\begin{equation}
\begin{gathered}
    r^2 = \bra{\psi_k}R^2\ket{\psi_k} 
    = 2\Re \Big(a_{k}^2f_k^*\Big(2\sqrt{s_k}\Big)\Big) + 2|a_{k}|^2 - |a_{k}|^4
\end{gathered}
\end{equation}
Thus, if we use the fact that $2\Re\Big((2e^{-i\phi}-1)(e^{i\phi}-1)^2\Big) = -4\sin^2(\phi/2)(1 + 2\sin^2(\phi/2))$ we can conclude that the second term is written as:
\begin{equation}
     2\Re\Bigg(\frac{(2e^{-i\phi}-1)(e^{i\phi}-1)^2}{\alpha_k^4} \bra{\psi_k}\mathcal{Q}_k^{\dagger}\mathcal{Q}_k\mathcal{Q}_k^{\dagger}\mathcal{Q}_k\ket{\psi_k}\Bigg) = \frac{-8(1+r^2)\sin^2(\phi/2)(1 +2\sin^2(\phi/2))}{\alpha_k^4}
\end{equation}
Next, for the third term we need to calculate the norm $\norm{\mathcal{Q}_k\mathcal{Q}_k^{\dagger}\mathcal{Q}_k\ket{\psi_k}}^2 = \bra{\psi_k}(\mathcal{Q}_k^{\dagger}\mathcal{Q}_k)^3\ket{\psi_k}$. We have:
\begin{equation}
    \bra{\psi_k}(\mathds{1}+R)^3\ket{\psi_k} = 1 + 3r^2 + \bra{\psi_k}R^3\ket{\psi_k}
\label{eq:norm_R_cubed}
\end{equation}
Estimating the last expectation value gives:
\begin{equation}
    \bra{\psi_k}R^3\ket{\psi_k} = 2\Re(a_{k}^3f^*_k(3\sqrt{s_k})) +6|a_{k}|^2\Re(a_{k}^2f_k^*(2\sqrt{s_k})) + 3|a_{k}|^4 -2|a_{k}|^6
\end{equation}
Finally, if we use the fact that $|e^{i\phi}-1|^4 = 16\sin^4(\phi/2)$, we end up with Eq. \eqref{eq:prob_of_success_derivation_1} and thus finish the proof.
\end{proof} 

Although the exact form of the probability of success can look intimidating, the form can be simplified if we restrict the computation to small angles $\phi$. This is expressed in Corollary \ref{corollary:prob_of_success_small_angles}.

\begin{corollary}
    For small angles $\phi$ the probability of success in Eq. \eqref{eq:probability_of_success} can be written as:
    \begin{equation}
        P_{\text{succ}}(\phi)\approx \frac{1}{\alpha_k^2} + \frac{4|a_{k}|}{\alpha_k^4}\phi^2
    \end{equation}
\label{corollary:prob_of_success_small_angles}
\end{corollary}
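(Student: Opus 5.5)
The plan is to obtain the corollary directly from the exact expression in Lemma~\ref{lemma:probability_of_success_exact} by a second-order expansion in $\phi$, followed by a leading-order expansion in the small coefficient $|a_k|=\mathcal{O}(\sqrt{s_k})$. First, I would substitute $\sin^2(\phi/2)=\phi^2/4+\mathcal{O}(\phi^4)$ into Eq.~\eqref{eq:probability_of_success_exact}. The third term is proportional to $\sin^4(\phi/2)=\mathcal{O}(\phi^4)$, and so is the $\sin^4$ part of the second term, so both are discarded. This leaves
\begin{equation*}
P_{\text{succ}}(\phi)\approx \frac{1}{\alpha_k^2}+\frac{\phi^2}{4}\left(\frac{8}{\alpha_k^2}-\frac{8(1+r^2)}{\alpha_k^4}\right)=\frac{1}{\alpha_k^2}+\frac{2\phi^2}{\alpha_k^4}\left(\alpha_k^2-1-r^2\right).
\end{equation*}

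The remaining task, and the main obstacle, is to show that $\alpha_k^2-1-r^2\approx 2|a_k|$ to leading order. The difficulty is a partial cancellation among the $\mathcal{O}(|a_k|^2)$ terms. Using $\alpha_k=1+|a_k|$ and the formula for $r^2$ from Lemma~\ref{lemma:leakage_non_unitarity}, Eq.~\eqref{eq:r_squared}, I would write
\begin{equation*}
\alpha_k^2-1-r^2 = 2|a_k| - |a_k|^2 - 2\Re\big(a_k^2 f_k^*(2\sqrt{s_k})\big) + |a_k|^4 .
\end{equation*}
The term $2|a_k|$ is $\mathcal{O}(\sqrt{s_k})$, and every other term is $\mathcal{O}(|a_k|^2)=\mathcal{O}(s_k)$ because $|f_k|\le 1$.

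For a more careful check, I would use the expansions already derived in Appendix~\ref{sec:estimating_fk}. There $a_k=i\sqrt{s_k}+\mathcal{O}(s_k)$, and $f_k(2\sqrt{s_k})=1+\mathcal{O}(\sqrt{s_k})$ by the moment expansion, Eq.~\eqref{eq:taylor_expansion_moments}. These give $-2\Re(a_k^2f_k^*)=2|a_k|^2+\mathcal{O}(s_k^{3/2})$, so the bracket equals $2|a_k|+|a_k|^2+\mathcal{O}(s_k^{3/2})$. Its leading term is $2|a_k|$, the same as the cruder estimate.

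Substituting back gives $P_{\text{succ}}(\phi)\approx 1/\alpha_k^2+4|a_k|\phi^2/\alpha_k^4$. The neglected corrections are $\mathcal{O}(\phi^4)$ and $\mathcal{O}(\phi^2 s_k)$, which is the meaning of ``$\approx$'' in the corollary, and I would state this explicitly. As a consistency check, the $\phi^2$ coefficient is positive, so a small-angle partial reflection indeed increases the success probability above the unamplified value $1/\alpha_k^2$ of Lemma~\ref{lemma:prob_of_success_exact_first_order}.
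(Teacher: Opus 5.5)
Your proposal is correct and follows the route the paper intends. The paper states the corollary without proof, as the small-$\phi$ expansion of the exact formula in Lemma~\ref{lemma:probability_of_success_exact}, and you additionally make explicit the step it leaves implicit: the $\phi^2$ coefficient is $2(\alpha_k^2-1-r^2)/\alpha_k^4$, and $\alpha_k^2-1-r^2=2|a_k|+\mathcal{O}(|a_k|^2)$ because $r^2=\mathcal{O}(|a_k|^2)$, so the stated form holds to leading order in $\sqrt{s_k}$ as well as in $\phi$.
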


Since we have an exact expression of the probability of success, we can estimate the optimal angle $\phi^*$ that leads to a probability of success close to 1. However, as we discussed, the non-unitarity of the operator $\mathcal{Q}_k$ introduces leakage that depends on the angle $\phi$. This means that the larger the amplification, the greater the error from the target state $\mathcal{Q}_k\ket{\psi_k}$. Next, we quantify the infidelity of the amplified state with the true success state.

Let $\ket{\psi^{\text{amp}}_k}$ be the quantum state when we measure the ancilla qubit of the amplified state to be in the $\ket{0}$ state:
\begin{equation}
    \ket{\psi^{\text{amp}}_k} := \frac{ \Pi_G \mathcal{W}(\phi)\mathcal{V}_k\ket{0}\ket{\psi_k}}{\norm{ \Pi_G \mathcal{W}(\phi)\mathcal{V}_k\ket{0}\ket{\psi_k}}}
\end{equation}

    Recall that the amplified state in Eq. \eqref{eq:amplification_partial_reflection} is already decomposed into a part which corresponds to the target state $\mathcal{Q}_k\ket{\psi_k}$ and to a part that is generated due to the fact that $\mathcal{Q}_k$ is a non-unitary operator. We can then decompose the latter part into two orthogonal components as:
    \begin{equation}
        \mathcal{Q}_k\mathcal{Q}_k^{\dagger}\mathcal{Q}_k\ket{\psi_k} = \mu \mathcal{Q}_k \ket{\psi_k} + \xi \ket{\perp}
    \end{equation}
    such that $\norm{\ket{\perp}} = 1$ and $\bra{\psi_k}\mathcal{Q}_k^{\dagger}\ket{\perp} = 0$. We first estimate $\mu$ as:
    \begin{equation}
        \mu = \bra{\psi_k}\mathcal{Q}_k^{\dagger}\mathcal{Q}_k\mathcal{Q}_k^{\dagger}\mathcal{Q}_k\ket{\psi_k} = 1 +\bra{\psi_k}R^2\ket{\psi_k} = 1 +2\Re \Big(a_{k}^2f_k^*\Big(2\sqrt{s_k}\Big)\Big) + 2|a_{k}|^2 - |a_{k}|^4
    \end{equation}
    where we used Eq. \eqref{eq:r_squared}. Then, we can estimate $\xi$:
    \begin{equation}
        \xi = \bra{\perp}\mathcal{Q}_k^{\dagger}\mathcal{Q}_k\mathcal{Q}_k^{\dagger}\mathcal{Q}_k\ket{\psi_k}
    \end{equation}
    using the norm $\norm{\mathcal{Q}_k\mathcal{Q}_k^{\dagger}\mathcal{Q}_k\ket{\psi_k}}^2 = \bra{\psi_k}(\mathds{1} + R)^3\ket{\psi_k}$ from Eq. \eqref{eq:norm_R_cubed}. We then have that:
    \begin{equation}
        |\xi|^2 = \bra{\psi_k}(\mathds{1} +R)^3\ket{\psi_k} -|\mu|^2
    \end{equation}
    We can then rewrite the subnormalized success state as:
    \begin{equation}
    \begin{aligned}
        \Pi_G \mathcal{W}(\phi)\mathcal{V}_k\ket{0}\ket{\psi_k} &=  \frac{1}{\alpha_k}\ket{0}\Big((2e^{i\phi}-1)\mathcal{Q}_k + \frac{(e^{i\phi}-1)^2}{\alpha_k^2}\mathcal{Q}_k\mathcal{Q}_k^{\dagger}\mathcal{Q}_k\Big)\ket{\psi_k}\\
        &=\frac{1}{\alpha_k}\ket{0}\Big((2e^{i\phi}-1)  + \frac{(e^{i\phi}-1)^2\mu}{\alpha_k^2}\Big)\mathcal{Q}_k\ket{\psi_k} + \ket{0}\frac{(e^{i\phi}-1)^2\xi}{\alpha_k^3}\ket{\perp}
    \end{aligned}
    \end{equation}
    Since we have decomposed the amplified state into orthogonal components, we can easily calculate the norm as:
    \begin{equation}
        \norm{\Pi_G \mathcal{W}(\phi)\mathcal{V}_k\ket{0}\ket{\psi_k}}^2 = \frac{|(2e^{i\phi}-1)  + \frac{(e^{i\phi}-1)^2\mu}{\alpha_k^2}|^2 + |\frac{(e^{i\phi}-1)^2\xi}{\alpha_k^2}|^2}{\alpha_k^2}
    \end{equation}
    and thus the fidelity is:
    \begin{equation}
        \mathcal{F}(\phi) =   \frac{\Big|(2e^{i\phi}-1) + \frac{(e^{i\phi}-1)^2\mu}{\alpha_k^2}\Big|^2}{\Big|(2e^{i\phi}-1)  + \frac{(e^{i\phi}-1)^2\mu}{\alpha_k^2}\Big|^2 + \Big|\frac{(e^{i\phi}-1)^2\xi^2}{\alpha_k^2}\Big|^2}
    \end{equation}

\begin{table*}[t]
\centering
\begin{tabular}{l cc cc cc}
\toprule
& \multicolumn{2}{c}{$s=0.05$} & \multicolumn{2}{c}{$s=0.1$} & \multicolumn{2}{c}{$s=0.15$}\\
\cmidrule(lr){2-3}\cmidrule(lr){4-5}\cmidrule(lr){6-7}
Method & $\bar P_{\mathrm{succ}}$ & Iters & $\bar P_{\mathrm{succ}}$ & Iters & $\bar P_{\mathrm{succ}}$ & Iters\\
\midrule
\multicolumn{7}{l}{\textit{(a)} $n=12$ qubits}\\
\addlinespace[1pt]
no AA & $9.28\times10^{-4}$ & 21 & $6.12\times10^{-3}$ & 11 & $1.07\times10^{-2}$ & 8\\
ROAA $\pi/12$ & $1.11\times10^{-3}$ & 23 & $6.48\times10^{-3}$ & 12 & $1.35\times10^{-2}$ & 8\\
ROAA $\pi/6$ & $1.88\times10^{-3}$ & 30 & $1.07\times10^{-2}$ & 15 & $2.07\times10^{-2}$ & 10\\
ROAA $\pi/5$ & $2.83\times10^{-3}$ & 35 & $1.35\times10^{-2}$ & 17 & $2.74\times10^{-2}$ & 11\\
ROAA $\pi/4$ & $6.04\times10^{-3}$ & 46 & $2.41\times10^{-2}$ & 20 & $4.33\times10^{-2}$ & 13\\
\midrule
\multicolumn{7}{l}{\textit{(b)} $n=14$ qubits}\\
\addlinespace[1pt]
no AA & $3.14\times10^{-4}$ & 23 & $2.82\times10^{-3}$ & 12 & $6.15\times10^{-3}$ & 9\\
ROAA $\pi/12$ & $3.73\times10^{-4}$ & 25 & $3.05\times10^{-3}$ & 13 & $7.74\times10^{-3}$ & 9\\
ROAA $\pi/6$ & $6.90\times10^{-4}$ & 33 & $5.10\times10^{-3}$ & 16 & $1.19\times10^{-2}$ & 11\\
ROAA $\pi/5$ & $1.09\times10^{-3}$ & 38 & $7.07\times10^{-3}$ & 18 & $1.55\times10^{-2}$ & 12\\
ROAA $\pi/4$ & $2.60\times10^{-3}$ & 50 & $1.30\times10^{-2}$ & 23 & $2.59\times10^{-2}$ & 15\\
\bottomrule
\end{tabular}
\caption{Cumulative success probability and iteration count to reach an approximation ratio of $r=0.95$ on weighted random 3-regular MaxCut instances, when starting from a QAOA generated state (with parameters chosen from the Dweight method), averaged over 20 instances. $\bar P_{\mathrm{succ}}$ corresponds to the mean cumulative success probability while iterations indicate the number of PDBQITE steps required to reach $r=0.95$.}
\label{tab:aa_comparison}
\end{table*}

We tested different angles $\phi = \{\pi/12, \pi/6, \pi/5, \pi/4\}$ for the amplitude amplification and compared the cumulative probability of success and the number of iterations required to reach an approximation ratio of $r=0.95$. We tested 12 and 14 qubit MaxCut instances (20 of each size), corresponding to randomly weighted 3-regular graphs. We examined how many iterations are needed to reach the target approximation ratio and at the same time what is the cumulative probability of success. The results are illustrated in Table \ref{tab:aa_comparison}. Overall, we can see the practical usefulness of amplitude amplification reaching the same approximation ratio with improved probability of success. However, due to the non-unitarity of $\mathcal{Q}_k$, more iterations are required as the state is distorted after every iteration. Specifically, one iteration of amplitude amplification is 3 times more costly than a single step of PDBQITE without amplitude amplification. Thus, there is a trade-off between a higher probability of success and a number of iterations to reach a target approximation ratio.

\section{Cost of controlled real-time evolution}
\label{app:controlled_real_time_evolution}

In this section, we will outline the quantum resources needed to perform a controlled real-time evolution. We will first discuss diagonal Hamiltonians, e.g. Hamiltonians corresponding to classical combinatorial optimization problems. Without loss of generality, we will use the MaxCut Hamiltonian as the problem Hamiltonian. 

Recall that for a graph $G=(V,E)$, where $V$ is the set of nodes and $E$ is the set of edges, the MaxCut Hamiltonian is written as:
\begin{equation}
    H_{\text{MC}} = \frac{1}{2} \sum_{(i,j)\in E}w_{ij}(\mathds{1} - Z_iZ_j)
\end{equation}
where $w_{ij}$ is the weight of the edge $(i,j)$. Since the Hamiltonian is comprised of commuting terms, and if we neglect the constant part of the Hamiltonian, the real-time evolution can be written as:
\begin{equation}
    e^{-iH_{\text{MC}}t} = \prod_{(i,j)\in E} e^{-iZ_iZ_jt}
\end{equation}
Each term $e^{-iZ_iZ_j t}$ is then decomposed as:
\begin{equation}
    e^{-iZ_iZ_j t} = \mathsf{CNOT}_{ij}R_z(2t)\mathsf{CNOT}_{ij}
\end{equation}
As such, every QAOA layer consists of $2|E|$ $\mathsf{CNOT}$ gates and $|E|$ $R_z$ single-qubit rotations. On the other hand, in PDBQITE, we need to apply controlled real-time evolutions:
\begin{equation}
    \ketbra{0}\otimes \mathds{1} + \ketbra{1}\otimes \mathsf{CNOT}_{ij}R_z(2t)\mathsf{CNOT}_{ij}
\end{equation}
Observe that:
\begin{equation}
    \begin{gathered}
    (\mathds{1}\otimes \mathsf{CNOT}_{ij})(CR_z(2t))(\mathds{1}\otimes \mathsf{CNOT}_{ij})=\\
    \ketbra{0}\otimes \mathds{1} + \ketbra{1} \otimes (\mathsf{CNOT}_{ij}R_z(2t)\mathsf{CNOT}_{ij})
\end{gathered}
\end{equation}
As such, the controlled evolution replaces the $R_z$ rotation by a controlled (on the ancilla qubit) $R_z$ gate. The circuit for a single controlled $e^{-iZ_iZ_jt}$ is depicted in Fig. \ref{fig:contolled-real-time-evolution}. We can then conclude that a single controlled-real-time evolution is as costly as two real-time evolutions for the MaxCut problem.

\begin{figure*}
\begin{center}
\begin{quantikz}
    &\ctrl{1}&\\
     &\gate[2]{e^{iZ_iZ_jt}} &\\
     & &
\end{quantikz}
$\equiv$
\begin{quantikz}
    & &\ctrl{2} & &\ctrl{2} &&& \\
     &\ctrl{1} && & & &\ctrl{1}&\\
     &\targ{} &\targ{} &\gate{R_z(-t)} &\targ{} & \gate{R_z(t)} &\targ{}&
\end{quantikz}
\end{center}
\caption{Exact decomposition of the controlled evolution for a single Hamiltonian term $Z_iZ_j$ of the MaxCut Hamiltonian. Each term requires 4 $\mathsf{CNOT}$ gates and two $R_z$ rotations.}
\label{fig:contolled-real-time-evolution}
\end{figure*}

Our analysis can be generalized to commuting Hamiltonians. Specifically, assume that we need to construct the controlled real-time evolution for a Hamiltonian $H=\sum_{\ell}c_{\ell}P_{\ell}$ where $P_{\ell}$ is a Pauli string such that $[P_k, P_l]=0, \forall k,l$. To do so, for each $R_z$ gate in the circuit, we need to add two additional $R_z$ gates and two additional $\mathsf{CNOT}$ gates. In the case of noncommuting Hamiltonians, see \cite{simon2025halving} about the total resources required, but also how to reduce the total cost by half.

\section{Additional Experiments}
\label{sec:additional_experiments}

\subsection{MaxCut}

In this section, we provide additional experiments on the MaxCut problem. Similar to our experiments in Sec. \ref{sec:experiments}, instead of initializing the system in the uniform superposition state $\ket{+}^{\otimes n}$, we choose the quantum state generated by a QAOA \cite{farhi2014quantum} circuit with parameters $(\boldsymbol{\beta}, \boldsymbol{\gamma})$ chosen according to fixed-angle conjecture \cite{vcepaite2025quantum, shaydulin2023parameter, wurtz2021fixed, sureshbabu2024parameter}:
\begin{equation}
    \ket{\psi_0} = U_{\text{QAOA}}^p(\boldsymbol{\beta}, \boldsymbol{\gamma})\ket{+}^{\otimes n}.
\end{equation}
For our benchmarks, we evaluated PDBQITE on randomly weighted 3-regular and 4-regular graphs. The weights of the graphs were sampled from the uniform distribution $\mathcal{U}(0,1)$ and rounded to 3 decimal points.

We tested 20 graph instances for each node size $n\in \{12,14,16\}$ and for up to 14 QAOA layers. Regarding the initial state of our algorithm, we used a QAOA circuit with a varying number of layers $p$, and parameters chosen according to the SKatan method \cite{vcepaite2025quantum, shaydulin2023parameter}. For ITE, we chose both first- and second-order PDBQITE and performed 3 steps. For first-order PDBQITE, we used a timestep $s=0.35$, while for second-order PDBQITE we used a larger timestep, set to $s=0.5$. We argue that one could optimize the timestep choices. However, in our case, we would like to show that even if the timesteps are naively chosen, the method can still perform very well in practice.

\begin{figure*}[t]
    \centering
    \begin{subfigure}[t]{0.48\textwidth}
        \centering
        \caption{3-regular graphs}
        \label{fig:qaoa-vs-p-3reg}
        \includegraphics[width=\linewidth]{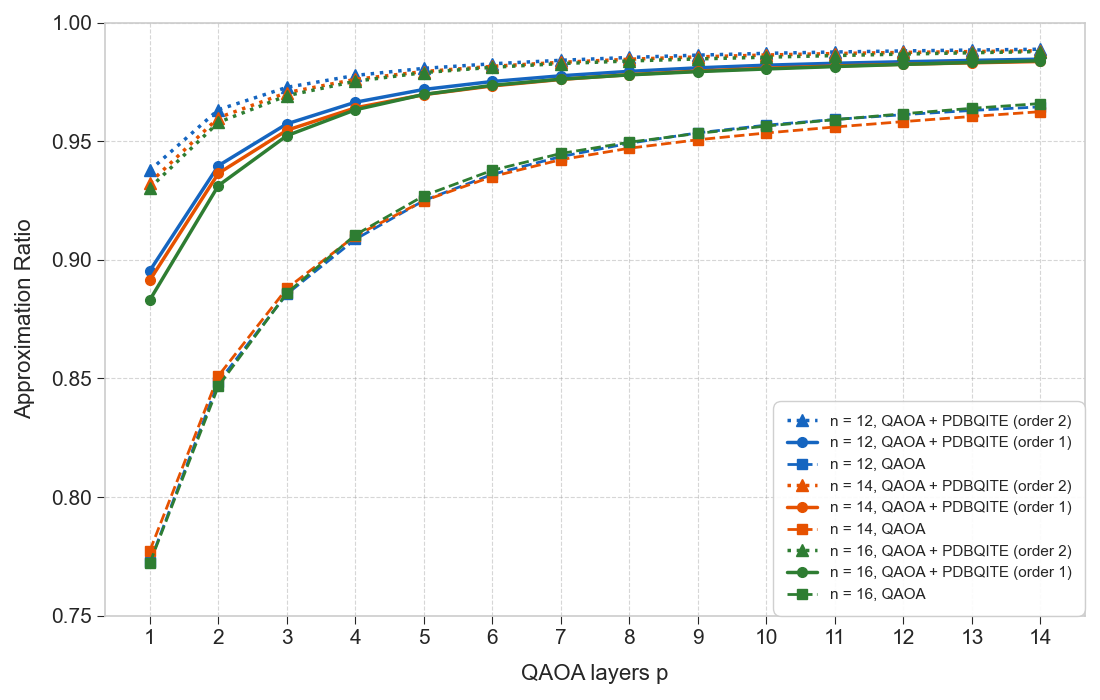}
    \end{subfigure}
    \hfill
    \begin{subfigure}[t]{0.48\textwidth}
        \centering
        \caption{4-regular graphs}
        \label{fig:qaoa-vs-p-4reg}    \includegraphics[width=\linewidth]{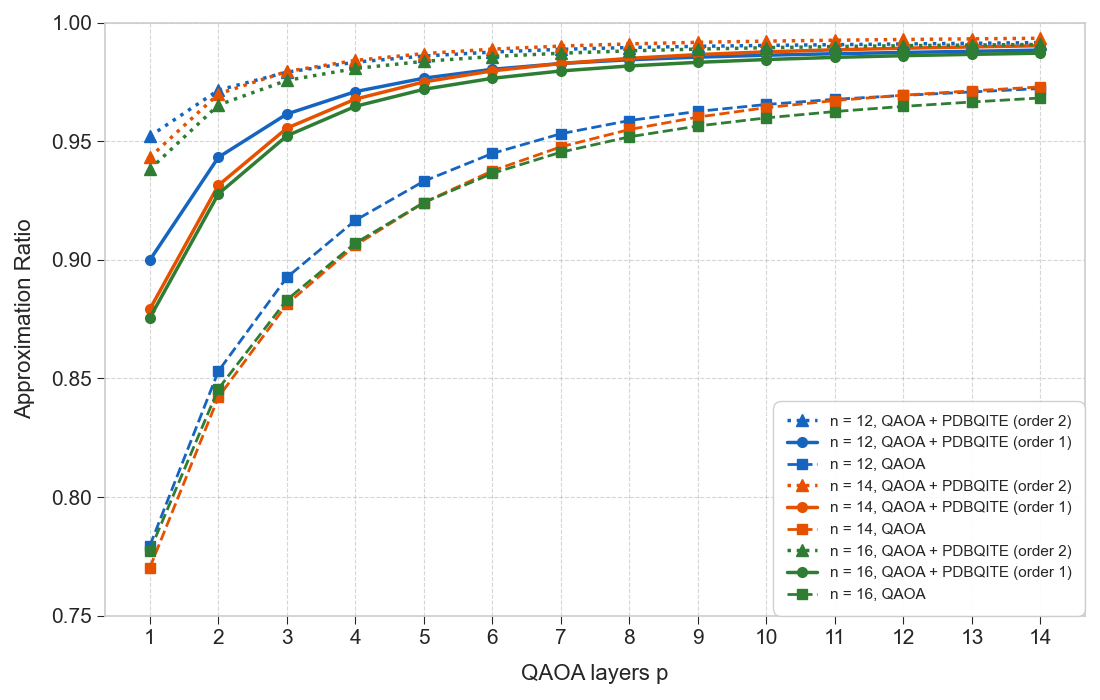}
    \end{subfigure}
    \caption{Approximation ratio as a function of QAOA depth $p$ for randomly weighted d-regular MaxCut
    instances, comparing QAOA against QAOA augmented with
    three PDBQITE step ($s=0.3$ for first-order PDBQITE, while $s=0.5$ for second-order PDBQITE).}
    \label{fig:qaoa_vs_qaoaite_2}
\end{figure*}

The results of our experiments are illustrated in Fig. \ref{fig:qaoa_vs_qaoaite_2}. As we explain in Appendix \ref{app:controlled_real_time_evolution}, one ITE step for first- and second-order PDBQITE is equivalent (in terms of number of single- and two-qubit gates) to two and four QAOA layers respectively. As illustrated in Fig. \ref{fig:qaoa_vs_qaoaite_2}, QAOA is a very powerful technique by itself if we choose the number of layers to be small. However, as we increase the number of QAOA layers, we see that it is more beneficial to perform ITE steps than to append QAOA layers. Specifically, QAOA with ITE can reach approximation ratios very close to one, compared to vanilla QAOA. 

\subsection{Maximum Independent Set}

The second classical optimization problem that we choose to tackle is the \emph{Maximum Independent Set} (MIS) problem. In this problem, the user is presented with a graph $G=(V,E)$, where $V$ is the set of vertices and $E$ is the set of edges of the graph, and the goal is to find the largest independent set, i.e., a subset of vertices $V'\subseteq V$ where no two vertices are connected by an edge. The MIS problem is stated as follows:
\begin{gather}
    \max_{\boldsymbol{x}\in\{0,1\}^n} \sum_{i\in V}x_i \\
    \text{subject to:} \; x_i+x_j<1,\; \forall(i,j)\in E
\end{gather}
where $x_i=1$ indicates whether the vertex $i$ is included in the independent set. Note that the constraint forces two connected vertices not to be both at the same time in the independent set. The Hamiltonian corresponding to the MIS can be expressed as \cite{vcepaite2025quantum}:
\begin{equation}
    H_{\text{MIS}} = \sum_{i\in V}\Big(\frac{1}{2} - \frac{\lambda d_i}{4}\Big)Z_i + \frac{\lambda}{4}\sum_{(i,j)\in E}Z_iZ_j
\end{equation}
where $d_i$ is the degree of the vertex $i$ and $\lambda$ is a Lagrange multiplier.

In Fig. \ref{fig:mis_performance}, we illustrate the performance of PDBQITE on MIS for Erdős–Rényi graphs when initialized with a QAOA generated state. To generate the QAOA state, we use the Graph Neural Network (GNN) approach outlined in \cite{vcepaite2025quantum}. We use both $p=3$ and $p=4$ QAOA layers and apply 5 PDBQITE iterations with a constant step-size. For our experiments, we use both 12 and 14 qubit instances and plot their average performance.

As we can see, PDBQITE constantly improves the approximation ratio with the number of PDBQITE iterations. In all experiments, after 5 iterations, the states that are generated by PDBQITE are very close to the true ground state, with approximation ratios very close to 1.

\begin{figure}[htbp]
    \centering
    \begin{subfigure}[b]{0.48\textwidth}
    \centering
    \caption{$p=3$}
    \label{fig:left}
    \includegraphics[width=\textwidth]{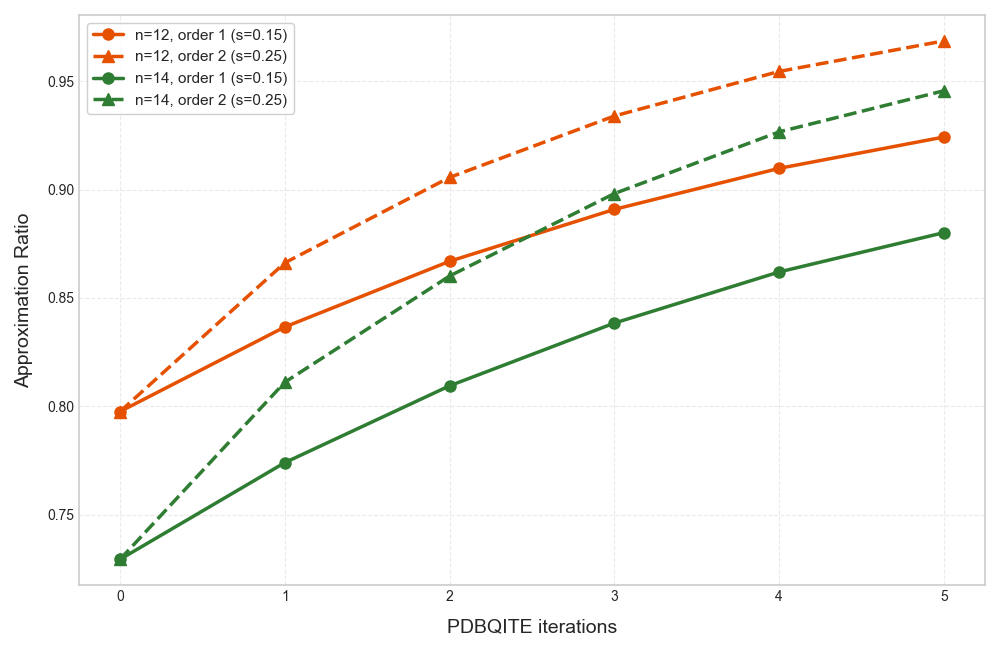}
    \end{subfigure}
    \hfill
    \begin{subfigure}[b]{0.48\textwidth}
        \centering
        \caption{$p=4$}
        \label{fig:right}
        \includegraphics[width=\textwidth]{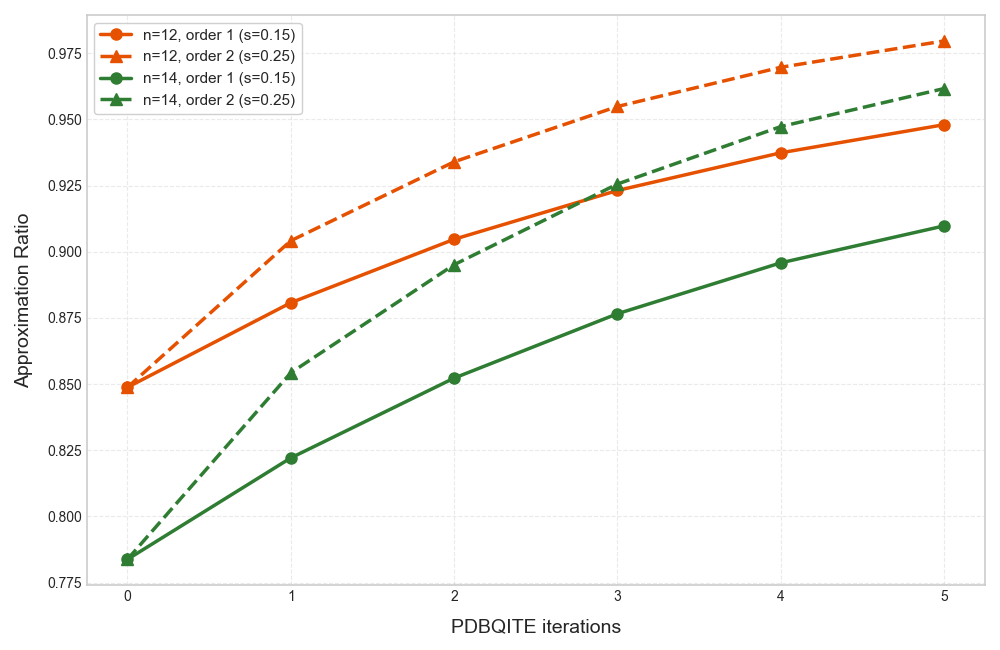} 
    \end{subfigure}

    \caption{Performance of first- and second-order PDBQITE on the MWIS problem with a QAOA-generated initial state. Within a few number of iterations, both methods can achieve approximation ratios larger than $r=0.9$.} 
    \label{fig:mis_performance}
\end{figure}

\subsection{$\mathrm{BeH_2}$}

The second molecule that we choose to prepare its ground state is $\mathrm{BeH_2}$. Here, instead of using a constant timestep, we choose to use a varying timestep with the number of iterations. In practice, we saw that at the beginning of the algorithm we can choose large timesteps $s_k$, while during the last steps of the algorithm, the timesteps should be reduced. Here again we see how second-order PDBQITE is able to prepare a quantum state that is within chemical accuracy in fewer iterations than first-order PDBQITE (and with higher probability of success). However, both methods are able to approximate the true ground state, indicating the usefulness of the method.

\begin{figure}[t]
    \centering
    \includegraphics[width=1\linewidth]{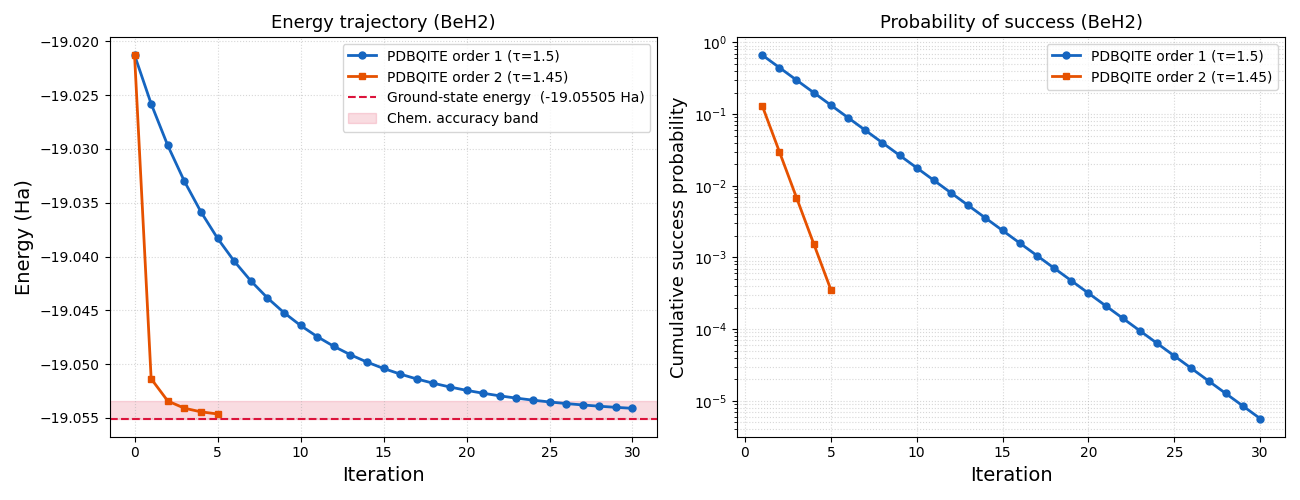}
    \caption{Energy trajectory (left figure) and cumulative probability of success (right figure) for the $\mathrm{BeH_2}$ molecule. Both first- and second-order PDBQITE can reach quantum states that are within chemical accuracy to the exact ground state.}
    \label{fig:beh2_molecule}
\end{figure}

\subsection{Heisenberg Model}

\begin{figure}
    \centering
    \includegraphics[width=1\linewidth]{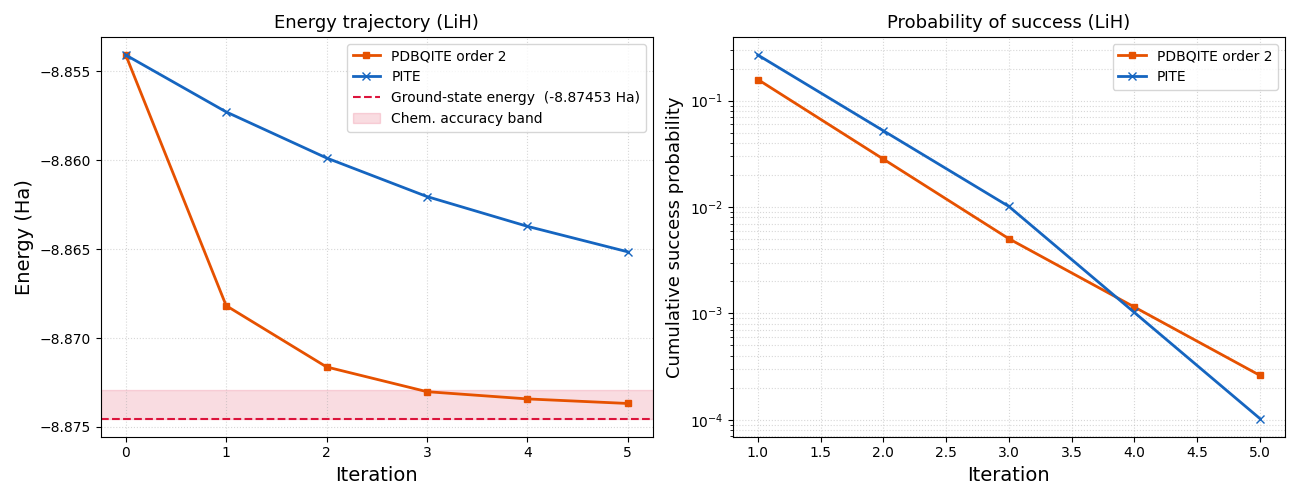}
    \caption{Comparison of second-order PDBQITE with PITE on the task of preparing the ground state of $LiH$ using the same timesteps for both bethod and $m_0=0.1$ for the PITE method.}
    \label{fig:pdbqite_pite}
\end{figure}

We also compared PDBQITE with PITE on the task of preparing the ground state of the antiferromagnetic Heisenberg model:
\begin{equation}
    H = \sum_{i=1}^{N-1} X_i X_{i+1} + Y_iY_{i+1} + Z_iZ_{i+1}
\end{equation}
where $N$ is the number of qubits. In this case, we used first-order PDBQITE, which is computationally cheaper than PITE, and performed experiments on $n=14$ qubits. The results are illustrated in Fig. \ref{fig:pdbqite_pite_heisenberg}. Overall, we can see that using less resources, first-order PDBQITE is able to reach states with energy similar to that of PITE but with a probability of success that is significantly higher than that of PITE.

\begin{figure}
    \centering
    \includegraphics[width=1\linewidth]{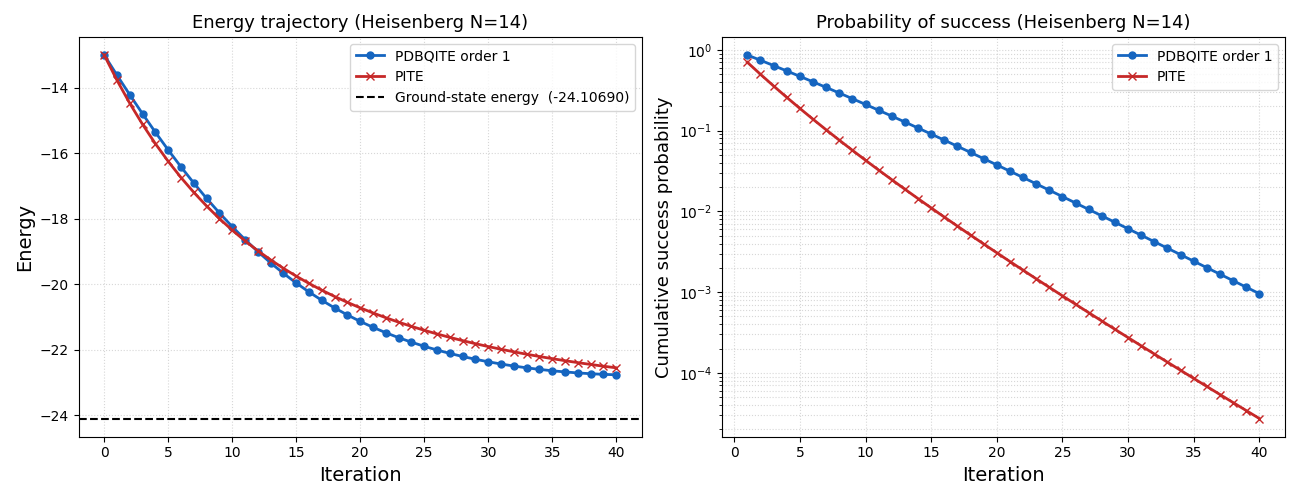}
    \caption{Comparison of first-order PDBQITE with PITE on the task of preparing the ground state of the antiferromagnetic Heisenberg model.}
    \label{fig:pdbqite_pite_heisenberg}
\end{figure}

\end{document}